\newtheorem{theorem}{Theorem}[section]
\newtheorem{lemma}[theorem]{Lemma}
\newtheorem{claim}[theorem]{Claim}
\newtheorem{proposition}[theorem]{Proposition}
\newtheorem{observation}[theorem]{Observation}
\newtheorem{definition}{Definition}
\renewcommand{\O}{{\cal O}}
\renewcommand{\P}{{\cal P}}
\newcommand{\C}{{\cal C}}
\newcommand{\freq}{\textsc{freq}}
\newcommand{\ftrs}{\textsc{FTRS}}
\newcommand{\kftrs}{\textsc{$k$-FTRS}}
\newcommand{\tO}{\widetilde O}
\title{New Extremal bounds for Reachability and Strong-Connectivity Preservers under failures}
\author{Diptarka Chakraborty\footnote{National University of Singapore, Singapore.  Supported in part by NUS ODPRT Grant, WBS No. R-252-000-A94-133. Email: diptarka@comp.nus.edu.sg} 
  \and Keerti Choudhary\footnote{Tel Aviv University, Israel. Email: keerti.choudhary@cs.tau.ac.il}}
\begin{document}
\pagenumbering{gobble}
\maketitle

\begin{abstract}
In this paper, we consider the question of computing sparse subgraphs for any input directed graph $G=(V,E)$ on $n$ vertices and $m$ edges, that preserves reachability and/or strong connectivity structures.

\begin{itemize}
\item We show $O(n+\min\{|\P|\sqrt{n},n\sqrt{|\P|}\})$ bound on a subgraph that is an $1$-fault-tolerant reachability preserver for a given vertex-pair set $\P\subseteq V\times V$, i.e., it preserves reachability between any pair of vertices in $\P$ under single edge (or vertex) failure. Our result is a significant improvement over the previous best $O(n |\P|)$ bound obtained as a corollary of single-source reachability preserver construction. 
We prove our upper bound by exploiting the special structure of single fault-tolerant reachability preserver for any pair, and then considering the interaction among such structures for different pairs.

\item In the lower bound side, we show that a $2$-fault-tolerant reachability preserver for a vertex-pair set $\P\subseteq V\times V$ of size $\Omega(n^\epsilon)$, for even any arbitrarily small $\epsilon$, requires at least $\Omega(n^{1+\epsilon/8})$ edges. This refutes the existence of linear-sized dual fault-tolerant preservers for reachability for any polynomial sized vertex-pair set. 

\item We also present the first sub-quadratic bound of at most $\tO(k~2^k~n^{2-1/k})$ size, for strong-connectivity preservers of directed graphs under $k$ failures. To the best of our knowledge no non-trivial bound for this problem was known before, for a general $k$. 
We get our result by adopting the color-coding technique of Alon, Yuster, and Zwick [JACM'95].
\end{itemize}
\end{abstract}

\newpage
\pagenumbering{arabic}
\section{Introduction}
One of the major problems in computer science, especially in the era of big data, is to \emph{sparsify} input graph while preserving certain properties of it. Let $\wp$ be any property defined over a graph. Given a graph $G=(V,E)$, a subgraph $H=(V,E_H)$, where $E_H \subseteq E$, is said to preserve property $\wp$ if the property $\wp$ is satisfied by the subgraph $H$ if and only if it is satisfied by graph $G$. Reachability and strong-connectivity are two fundamental graph properties that we consider in this paper. In case of reachability, given a directed graph $G$ and a set $\P$ of vertex-pairs the objective is to find a subgraph $H$ with as few edges as possible, so that for any pair $(s,t) \in \P$ there is a path from $s$ to $t$ in $H$ iff so is in $G$. This problem has been studied extensively~\cite{CE06, Bodwin17, AB18}. In case of strong-connectivity, given a directed graph $G$ the objective is to find a subgraph $H$ with as few edges as possible so that the strongly-connected components in $G$ and $H$ are identical. A folklore result shows that for any $n$-node graph we can have a strong-connectivity preserving subgraph with at most $2n$ edges.

In this paper we study the above two problems under the possibility of edge or vertex failures. 
In the real world networks are prone to failures. Most of the time such failures are unavoidable and also unpredictable in physical systems like communication or road networks. Due to this reason edge (or vertex) failure model draws a huge attention of the researchers in the recent past. In most of the scenarios such failures are much smaller in number in comparison to the size of the graph. Thus it is natural to associate a parameter to capture the number of edge (or vertex) failures, and then try to build fault-tolerant data-structures of size depending on this failure parameter for various graph theoretic problems. 
Many natural graph theoretic questions like connectivity~\cite{PP14, Parter15, BCR16, GK17}, finding shortest paths~\cite{DTCR08}, graph-structures preserving approximate distances~\cite{Luk99, CZ04, CLPR09, DK11, BK13, BGGLP15, BGPW17, BCHR18} etc. have been studied in the presence of edge (or vertex) failures. 

The main focus of this paper is to understand the extremal structure of paths in directed graphs under possible (bounded) edge failures. More specifically, our goal is to show existence (or non-existence) of subgraphs of certain size\footnote{Throughout this paper by \emph{size} of a subgraph we mean the number of edges present in that subgraph.} that preserves reachability and strong-connectivity in the presence of a small number of edge failures.

\begin{definition}[Fault-tolerant Strong-Connectivity Preserver (FT-SCC Preserver)]
For any graph $G=(V,E)$, a subgraph $H$ of $G$ is said to be $k$-fault-tolerant strong-connectivity preserver ($k$-FT-SCC preserver) if for each set $F~(\subseteq E)$ of $k$ edge failures, the strongly-connected components in $G-F$ and $H-F$ are identical. 
\end{definition}

We would like to emphasize that so far there is no non-trivial bound on the size of $k$-FT-SCC preserver. In this paper we show an upper bound of $\tO(k~2^k~n^{2-1/k})$ on $k$-FT-SCC preserver of any $n$-node (directed) graph. Moreover we show that we can find such a subgraph efficiently.

\begin{theorem}
There is a polynomial time (randomized) algorithm that given any directed graph $G = (V, E)$ on $n$ vertices and $k\geq 1$, computes a $k$-FT-SCC preserver of $G$ containing at most $\tO(k~2^k~n^{2-1/k})$ edges with probability at least $1-1/n^4$.
\end{theorem}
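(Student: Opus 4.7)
The construction naturally decomposes into handling ``long'' and ``short'' cycles separately. I plan to use the single-source $k$-fault-tolerant reachability subgraph ($k$-FTRS) of Baswana-Choudhary-Gupta-Roditty, which from any fixed source $r$ gives a subgraph with at most $2^k n$ edges that preserves reachability from $r$ under any edge-failure set $F$ with $|F|\leq k$. The reduction to reachability is immediate: since the candidate preserver $H$ is a subgraph of $G$, SCCs of $H-F$ can only be refinements of SCCs of $G-F$, so I only need to argue that for every pair $u,v$ lying in a common SCC of $G-F$, there is a $u\!\to\!v$ path (and a $v\!\to\!u$ path) in $H-F$.

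Set $\tau=n^{1/k}$. In the first phase, I would sample each vertex into a set $R\subseteq V$ independently with probability $p=\Theta(k\log n/\tau)$, so $|R|=\tilde{O}(k\cdot n^{1-1/k})$ with high probability. For each $r\in R$, I invoke $k$-FTRS twice: once on $G$ rooted at $r$ to obtain $H_r^{\text{out}}$, and once on the reverse graph $G^{\text{rev}}$ rooted at $r$ to obtain $H_r^{\text{in}}$; each has $\le 2^k n$ edges. Their union $H_1$ has size $\tilde{O}(k\cdot 2^k\cdot n^{2-1/k})$. By a Chernoff/union bound over the $O(m^k)$ fault sets and the $\leq n$ SCCs of each $G-F$, with high probability $R$ hits \emph{every} SCC of \emph{every} $G-F$ of size at least $\tau$. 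Fixing such an SCC $C$ and any $r\in R\cap C$, the paths $u\!\to\!r$ and $r\!\to\!v$ in $G-F$ (which exist because $u,v,r$ all lie in $C$) are preserved respectively by $H_r^{\text{in}}-F$ and $H_r^{\text{out}}-F$, so $u$ and $v$ are in the same SCC of $H_1-F$.

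In the second phase I handle SCCs of $G-F$ of size less than $\tau$, which the random sample can miss. This is where I would adopt the color-coding technique of Alon-Yuster-Zwick: color $V$ uniformly at random with $\tau$ colors, and repeat $T=\tilde{O}(2^k)$ times. For each coloring, attach to every vertex $v$ a ``color-restricted'' bounded-depth $k$-FTRS-style substructure that captures short cycles through $v$; the depth restriction and the color restriction together ensure that each per-vertex contribution has $\tilde O(2^k\tau)$ edges, for a total $H_2$ of $\tilde O(2^k n^{1+1/k})$ edges. Since $1+1/k\le 2-1/k$ for $k\ge 2$, this is absorbed into the claimed bound. The correctness argument is that any cycle of length $<\tau$ through $u,v$ in $G-F$ is, with positive probability per coloring (and hence with high probability over the $T$ independent trials), rainbow-colored in a way that is preserved by the color-restricted substructure around any of its vertices.

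The main obstacle is the second phase: one must design the color-restricted per-vertex substructure carefully so that (i) for every fault set $F$, every small SCC of $G-F$ is witnessed in at least one of the $T$ random colorings with high probability, (ii) the per-vertex contribution stays at $\tilde O(2^k\tau)$ edges, and (iii) the union bound over the $n^{O(k)}$ fault sets and the $T$ colorings closes cleanly with failure probability $\le n^{-4}$. Once this short-cycle handler is in place, outputting $H=H_1\cup H_2$ gives a $k$-FT-SCC preserver of the claimed size, and polynomial-time constructibility follows since both $k$-FTRS computation and random-coloring iteration are polynomial-time.
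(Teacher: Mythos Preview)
Your Phase~1 is correct and coincides with the paper's large-SCC handler (their $H_2$). The gap is entirely in Phase~2.

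Vertex color-coding with $\tau=n^{1/k}$ colors cannot work with $T=\widetilde O(2^k)$ repetitions. For a cycle on $s\le\tau$ vertices to be rainbow under a uniform $\tau$-coloring, the probability is $\tau!/\big((\tau-s)!\,\tau^{s}\big)$, which for $s$ close to $\tau$ is $\Theta(e^{-\tau})=e^{-n^{1/k}}$; you would need super-polynomially many trials, not $\widetilde O(2^k)$, and the union bound over $n^{O(k)}$ fault sets only makes this worse. Moreover, the ``color-restricted bounded-depth $k$-FTRS-style substructure'' is never defined, and no object of size $\widetilde O(2^k\tau)$ per vertex that preserves \emph{all} rainbow cycles through that vertex under \emph{all} $k$-edge failures is exhibited. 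So Phase~2 as written is a sketch without a mechanism.

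The paper's small-SCC phase (their $H_1$) instead samples \emph{edges}, not vertex colors. In each of $L=\widetilde O(n^{k\alpha})$ iterations it puts every edge independently into a set $J$ with probability $p=2/n^{\alpha}$ and then stores an $r$-FT-SCC preserver of the residual graph $G-J$ (they take $r=1$, using the $O(n)$-edge $1$-FT-SCC preserver constructed in the preceding subsection). For a small SCC $C$ of $G-F$ (at most $n^{\alpha}/4$ vertices, hence an $O(n^{\alpha})$-edge certificate) and a split $F=F_1\cup F_2$ with $|F_1|=k$, $|F_2|\le r$, a single iteration ``separates'' $C$ from $F_1$---that is, $F_1\subseteq J$ while $J$ misses the certificate of $C$---with probability $(1-p)^{O(n^{\alpha})}p^{k}=\Omega(n^{-k\alpha})$, which is polynomial in $1/n$; hence $L$ iterations suffice for all $n^{O(k)}$ choices simultaneously. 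Taking $\alpha=1/(k+1)$ and $r=1$ balances $H_1$ against $H_2$ at $\widetilde O(k\,2^k\,n^{2-1/k})$. The point you are missing is that the ``color-coding'' randomness should be aimed at the at-most-$k$ fault edges---so the rare event has probability $p^{k}=n^{-\Theta(1)}$---rather than at a rainbow event on up to $\tau$ vertices, whose probability is $e^{-n^{\Theta(1)}}$.
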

 
As a direct application we get a data-structure (oracle) of size sub-quadratic in $n$, for reporting strongly-connected components after $k$ failures.
The best earlier known bound for $k>1$ failures was $\Omega(2^k n^2)$~\cite{BCR19}, whereas for $k=1$ it was known by Georgiadis, Italiano, and Parotsidis~\cite{GIP17} that $O(n)$ space and query-time bound is achievable. 
So ours is the first truly sub-quadratic (i.e., $O(n^{2-\epsilon})$ for some $\epsilon>0$) sized strong-connectivity oracle for any constant number of failures. One may observe that there are graphs with $n$ nodes for which any $k$-FT-SCC preserver must be of size $\Omega(2^k n)$ (a simple proof of which is provided in Appendix~\ref{section:lower_bound}).

Next we study the extremal bounds of fault-tolerant reachability preserving subgraphs.

\begin{definition}[Fault-tolerant Pairwise Reachability Preserver]
For any graph $G=(V,E)$ and a set $\P$ of vertex-pairs, a subgraph $H$ of $G$ is said to be $k$-fault-tolerant pairwise reachability preserver for $\P$, denoted by \ftrs($\P,G$), if for each $F~(\subseteq E)$ of $k$ edge failure, the reachability relations between pairs in $\P$ agree on $G-F$ and $H-F$. 
\end{definition}

Baswana, Choudhary and Roditty~\cite{BCR16} provided a polynomial time algorithm that given any $n$-node directed graph constructs an $O(2^k n)$-sized subgraph that preserves reachability from a fixed source vertex to all other vertices under $k$ edge failures. 
As a corollary, to preserve reachability between arbitrary $\P$ pairs, we get an $O(2^kn|\P|)$-sized $k$-fault-tolerant pairwise reachability preserver. Clearly the bound is extremely bad for large sized set $\P$. So far we do not know any better bound even for small values of $k$. 
On the other hand in standard static setting (i.e., when $k=0$) we know existence of $O\big(n+(n|\P|)^{2/3}~\big)$-sized pairwise reachability preserver~\cite{AB18}.

An important question is how much the size of a reachability preserver varies when we go from standard static (i.e., without any failure) setting to single failure setting, and then further from single failure to dual failure setting. It is also natural to ask the following question: What is the bound on the number of pairs in $\P$ so that it is possible to obtain linear sized single-fault-tolerant pairwise reachability preservers. Here we show that this is possible as long as $|\P|=O(\sqrt{n})$. Note that this is also the current best known limit for the standard static setting~\cite{AB18}. So one cannot hope to improve our bound without improving the bound for static setting. Below we state our upper bound result.

\begin{theorem}
For any directed graph $G=(V,E)$ with $n$ vertices, and a set $\P$ of vertex-pairs, there exists a single-fault pairwise reachability preserver \ftrs($\P,G$) having at most $O\big(n+\min(|\P|\sqrt{n},~n\sqrt {|\P|})~\big)$ edges. Furthermore, we can find such a subgraph 
in polynomial time.
\end{theorem}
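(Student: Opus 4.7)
My starting point is a structural observation via Menger's theorem: for a single pair $(s,t)$, a minimum $1$-\ftrs\ is precisely a witness of $\min(2,\lambda_G(s,t))$ edge-disjoint directed $s$-to-$t$ paths, where $\lambda_G(s,t)$ is the maximum number of such paths in $G$. Hence each pair contributes at most two edge-disjoint paths (a single path when $\lambda_G(s,t)=1$), giving at most $2(n-1)$ edges per pair and the naive bound $O(n|\P|)$. The entire challenge is to coordinate path choices across the $|\P|$ pairs so that many edges are re-used.

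I would prove the two upper bounds $|\P|\sqrt n$ and $n\sqrt{|\P|}$ separately and take their minimum. For $O(n\sqrt{|\P|})$, the plan is to partition $\P$ into at most $\sqrt{|\P|}$ groups each admitting a single-source or single-sink $1$-\ftrs\ of size $O(n)$ via the Baswana--Choudhary--Roditty construction. A heavy/light analysis on sources and sinks (a vertex is \emph{heavy} if it appears as source or sink in at least $\sqrt{|\P|}$ pairs) yields the desired cover: heavy sources and sinks are at most $\sqrt{|\P|}$ in number and between them absorb most of $\P$; pairs between a light source and a light sink can be handled by an additional pigeonhole step.

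For $O(|\P|\sqrt n)$, dominant when $|\P|\le n$, the plan is more delicate. I would build a shared ``backbone'' of $O(n)$ edges carrying, for each pair $(s,t)\in\P$, a primary path $P_1^{st}$, and then argue that a suitable secondary path $P_2^{st}$ edge-disjoint from $P_1^{st}$ can be chosen so that each pair contributes only $O(\sqrt n)$ fresh edges. I expect a short/long decomposition or landmark-hitting argument: short secondary paths are paid for pair-by-pair, while long secondary paths are routed through a sparse set of hub vertices whose supporting substructures are amortised across pairs within the $O(n+|\P|\sqrt n)$ budget.

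The main obstacle, matching the paper's own stated contribution, is exactly this sharing/interaction analysis. Existing tools give an $O(n)$-sized $1$-\ftrs\ per pair independently; breaking the $O(n|\P|)$ barrier requires jointly selecting primary/secondary paths across different pairs. I expect this to rest on a structural lemma about ``branching events'' where a pair's secondary path departs from and re-enters the backbone, generalising the single-path branching arguments used in static pairwise reachability preservers of Coppersmith--Elkin and Abboud--Bodwin, but with the added subtlety that two coordinated paths per pair (rather than one) must be simultaneously accounted for while respecting the Menger-type edge-disjointness constraint in the final subgraph.
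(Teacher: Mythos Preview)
Your plan for the $O(n\sqrt{|\P|})$ bound has a concrete gap. A heavy/light decomposition on \emph{sources and sinks} cannot work: take $|\P|$ pairs in which every source is distinct and every sink is distinct. Then every endpoint is light, no pair is ``absorbed'', and there is no pigeonhole that lets $\sqrt{|\P|}$ single-source or single-sink structures cover all pairs. The paper's argument is different in kind: it forms the collection $\C$ of all $2|\P|$ witness paths (two per pair) and calls a vertex $w$ heavy when $w$ lies on more than $\sqrt{|\P|}$ paths of $\C$. Greedily extracting such $w$ (each removal eliminating at least $\sqrt{|\P|}$ paths) yields at most $2\sqrt{|\P|}$ of them; for each, one adds an $O(n)$-edge $\ftrs(\{w\}\times V,G)$ and $\ftrs(V\times\{w\},G)$. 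The surviving paths have every vertex occurring at most $\sqrt{|\P|}$ times, so their union contributes $O(n\sqrt{|\P|})$ edges. Correctness is that any removed witness path passes through some chosen $w$, so after a single failure the pair can be re-routed through $w$'s in- and out-$\ftrs$. The heaviness is on \emph{internal path-frequency}, not on endpoint multiplicity.

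Your plan for $O(n+|\P|\sqrt n)$ is too speculative to be an approach, and it is missing an ingredient the paper's proof uses essentially: a linear-sized single-fault SCC preserver $H_{scc}$, included precisely because under faults the problem does \emph{not} reduce to DAGs (compressing SCCs can destroy $2$-connectivity). The paper takes $H=H_{scc}\cup H_{opt}$ with $H_{opt}$ minimal, defines for each pair $p$ its critical-edge set $C_p$, and runs a heavy/light analysis on \emph{in-degrees in $H_{opt}$} rather than on path lengths or landmarks. The core is a counting lemma: if the average degree $d_{avg}$ of $H$ exceeds a constant then $|E_h\cap C_p|\le 16|\P|/d_{avg}$ for every $p$, which immediately gives $|E(H)|=O(n+|\P|\sqrt n)$. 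Proving that lemma requires two structural claims showing that, along a witness path $Q\subseteq H_p$, two ``extra'' incoming critical edges at a common vertex $v$ either force a shared edge via a minimal $(s,t)$-cut argument, or are rendered redundant because $H_{scc}$ already supplies the needed detour. Your short/long or landmark-hitting sketch supplies neither of these constraints, and the ``backbone carrying primary paths'' idea does not play the role that $H_{scc}$ plays here.
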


Our construction of SCC preservers plays a significant role in obtaining such a sparse reachability preserver. One may wonder whether the above result can be generalized to multiple failures, at least for constantly many failures. Unfortunately, we observe a striking difference between single and multiple (even for two) failures scenario in the context of pairwise reachability preservers.

\begin{theorem}
For every $n,p$ with $p=O(n^{2/3})$, there is an infinite family of $n$-node directed graphs and pair-sets $\P$ of size $p$, for which a dual fault-tolerant pairwise reachability preserver requires at least $\Omega(n |\P|^{\frac{1}{8}})$ edges.
\end{theorem}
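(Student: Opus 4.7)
The plan is to exhibit, for each admissible $n$ and $p$, an $n$-vertex directed graph $G$ together with a vertex-pair set $\P\subseteq V\times V$ of size $p$ such that every dual fault-tolerant pairwise reachability preserver \ftrs$(\P,G)$ must contain $\Omega(n\cdot p^{1/8})$ edges.

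I would construct $G$ as a layered DAG with a small (constant) number of layers $V_0,V_1,\ldots,V_r$, placing the sources $s_i$ in $V_0$ and the sinks $t_i$ in $V_r$; the layer sizes and the inter-layer bipartite graphs are the tunable parameters. For each pair $(s_i,t_i)\in\P$, the graph would admit at least three internally edge-disjoint $s_i$-to-$t_i$ paths. Since removing any two edges can kill at most two of them, $t_i$ remains reachable from $s_i$ in $G-F$ whenever $|F|\le 2$, so the preserver $H$ must preserve this reachability. By Menger's theorem applied in $G-F$ for all $|F|=2$, $H$ is forced to contain three edge-disjoint $s_i$-to-$t_i$ paths for every pair.

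The combinatorial heart of the argument is a carefully chosen bipartite design on each inter-layer edge set that bounds how many pairs can route their forced triples of paths through any single edge of $H$. A natural candidate is a dense bipartite gadget in the spirit of the constructions used by Coppersmith and Elkin~\cite{CE06}, Bodwin~\cite{Bodwin17}, and Abboud and Bodwin~\cite{AB18} for pairwise preservers in the static setting, whose edge set decomposes into many near-induced sub-structures. Such a gadget forces the forced paths of distinct pairs to be largely edge-disjoint. A double-counting argument, summing the edges on the forced paths across all $p$ pairs and dividing by the maximum multiplicity per edge, would then yield the lower bound $|E(H)|=\Omega(n\cdot p^{1/8})$.

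The main obstacle is the parameter optimization. One must choose the number of layers $r$, the layer sizes, and the design parameters simultaneously so that $|V(G)|=n$ and $|\P|=p$ with $p=O(n^{2/3})$, while the maximum per-edge multiplicity is small enough to drive the exponent up to $1/8$. Extracting exactly the exponent $1/8$ is delicate because one also has to rule out \emph{shortcut replacements}, in which the preserver routes $s_i\to t_i$ through vertices outside the canonical triple of paths to save edges: the construction has to be tight enough that every such alternative either fails to provide three edge-disjoint routes under two failures or is itself expensive in edges. Handling this inter-pair interaction is where the bulk of the technical work lies, and is the step where the precise form of the bipartite gadget has to be chosen to match the $p^{1/8}$ bound.
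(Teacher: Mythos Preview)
Your Menger step is sound---if $G$ is $3$-edge-connected between $s_i$ and $t_i$, then any dual-fault preserver $H$ must also be, hence must contain three edge-disjoint $s_i$--$t_i$ paths. But this is far too weak to reach $\Omega(n|\P|^{1/8})$: three paths per pair gives only $3p$ paths in total, and you have offered no mechanism that prevents these paths from reusing the same $O(n)$ edges across pairs. Everything rests on the unspecified ``bipartite gadget'' and the multiplicity bound, which you yourself identify as the hard part and leave entirely open. A constant number of layers is also a red flag: there is no evident reason such a construction should produce the exponent $1/8$.

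The paper's construction is qualitatively different and does not use $3$-connectivity. For each pair it builds two long vertex-disjoint ``rail'' paths $Q_p,R_p$ of length $\Theta(K)$, with $K=L^r$ growing polynomially in $n$, and between the rails it embeds $2K$ layered copies of a Coppersmith--Elkin distance-preserver instance $H$ in which all demand pairs have a \emph{unique} shortest path of a common length $L$. The two failures are placed on the two rails at positions $i+L$ and $i$; uniqueness and the exact length $L$ force the only surviving $s$--$t$ route to traverse the layer-$i$ copy of the corresponding shortest path in $H$. Varying $i\in[1,K]$ thus forces $K$ distinct paths per pair, and the built-in edge-disjointness of the CE shortest paths makes these forced paths edge-disjoint across pairs as well---so there is no multiplicity issue to analyse. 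This yields $\Omega(KL|\P|)$ forced edges, and setting $d=2$ with a suitable $r$ gives $\Omega(n|\P|^{1/8})$.

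The two ideas your proposal is missing are therefore: (i) using the two failures as \emph{indices} into a deep layered structure to force polynomially many (not three) specific paths per pair; and (ii) embedding an existing unique-shortest-path hard instance so that uniqueness itself rules out all shortcut replacements, rather than designing a new gadget and controlling multiplicity by hand.
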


This shows linear size reachability preserver is not possible under dual failures in general pairwise setting with the number of pairs being $\Omega(n^{\epsilon})$ for any small $\epsilon > 0$. As a consequence we get a polynomial separation in size of a pairwise reachability preserver between single and dual failures. This is in sharp contrast with single-source all destinations setting, wherein, the size only doubles each time we increase the count of failure by value one~\cite{BCR16}.

It is worth mentioning that in this paper we show fault-tolerant structures with respect to edge failures only, however, all our results hold for vertex-failures as well.

\subsection{Related Work}
A simple version of reachability preserver is when there is a single source vertex $s$ and we would like to preserve reachability from $s$ to all other vertices. Baswana \emph{et al.}~\cite{BCR16} provided an efficient construction of a $k$-fault-tolerant single-source reachability preserver of size $O(2^kn)$. Further they showed that this upper bound on size of a preserver is tight up to some constant factor. As an immediate corollary of their result, we get a $k$-fault-tolerant pairwise reachability preserver of size $O(2^kn|\P|)$ (by applying the algorithm of~\cite{BCR16} to find subgraph for each source vertex in pairs of $\P$, and then taking the union of all these subgraphs). We do not know whether this bound is tight for general $k$. However for standard static (with no faulty edges) setting much better bound is known. We know that even to preserve all the pairwise distances, not just reachability, there is a subgraph of size $O\big(n+\min(n^{2/3}|\P|, n\sqrt{|\P|})~\big)$~\cite{CE06, Bodwin17}. 
Later Abboud and Bodwin~\cite{AB18} showed that for any directed graph $G=(V,E)$ given a set $S$ of source vertices and a pair-set $\P\subseteq S \times V$ we can construct a pairwise reachability preserver of size 
$O\big(n+\min(\sqrt{n |\P| |S|}, (n|\P|)^{2/3})~\big)$. 
It is further shown that for any integer $d \ge 2$ there is an infinite family of $n$-node graphs and vertex-pair sets $\P$ for which any pairwise reachability preserver must be of size $\Omega\big(n^{2/(d+1)}|\P|^{(d-1)/d}\big)$. Note, for undirected graph storing spanning forests is sufficient to preserve pairwise reachability information, and thus we can always get a linear size reachability preserver for undirected graphs. We would like to emphasize that all the results provided in this paper hold for directed graphs. A problem similar to constructing reachability preserver is to construct a data-structure (aka. oracle) that can answer queries of the form whether a vertex is reachable from a fixed source vertex $s$ after multiple edge (or vertex) failures. As an application of~\cite{BCR16} we get such an oracle of size $O(2^k n)$ for $k$ edge (or vertex) failures with query time $O(2^k n)$. For just dual failures we have an $O(n)$ size oracle with $O(1)$ query time due to~\cite{Choudhary16}.
In a recent work, Brand and Saranurak~\cite{BrandS19} obtained a $k$-fault-tolerant $\O(n^2)$ sized reachability oracle that has $O(k^\omega)$ query time, where $\omega$ is the constant of matrix-multiplication.

Finding strongly connected components (SCCs) under edge failures is another important problem. One specific problem is given a directed graph $G$ to build a data-structure (oracle) that for any vertices $u,v$ and a set of edges $F$ of size $k$ can answer whether $u$ and $v$ are in the same SCC in $G-F$. Using $k$-fault-tolerant reachability preserver of~\cite{BCR16} we can get such an oracle of size $O(2^k n^2)$ with query time $O(2^k n)$ (see~\cite{BCR19}). Moreover,~\cite{BCR19} provides us an algorithm that computes all the SCCs in $G-F$ in time $O(2^k n \log^2 n)$ by using a data-structure of size $O(2^k n^2)$. Georgiadis, Italiano and Parotsidis~\cite{GIP17} also studied this problem of computing all the SCCs under single edge failure, and gave a solution with $O(n)$ query time using a data-structure of size only $O(n)$. However so far we do not know any solution for computing all the SCCs after more than one edge failures using a data structure of size $O(n^{2-\epsilon})$ for any $\epsilon>0$. In this paper we give construction of first such truly sub-quadratic sized data-structure as long as there are only constantly many failures.
For undirected graphs, the optimal bound of $O(kn)$ edges for $k$-fault-tolerant connectivity preserver directly follows from $k$-edge (vertex) connectivity certificate constructions provided by Nagamochi and Ibaraki~\cite{NagamochiI:92}. In contrast, for directed graphs, the only known result for `sparse' certificate of $k$-edge (vertex) strong-connectivity is for $k=2$, due to a series of works by Georgiadis et al. \cite{GeorgiadisILP15-soda, GeorgiadisILP15-icalp, GeorgiadisIKPP17}.
Our truly sub-quadratic sized $k$-FT-SCC preservers also in turn provides the {\em first} truly sub-quadratic sized $k$-edge (vertex) strong-connectivity certificates for directed graphs, for $k\geq 3$ (see Section~\ref{section:applications}).

Other closely related problems that have been studied in the fault-tolerant model include computing distance preservers~\cite{DTCR08, PP13, Parter15}, depth-first-search tree~\cite{BCCK19}, spanners~\cite{CLPR09, DK11}, approximate distance preservers~\cite{BK13, PP14, BGLP16}, approximate distance oracles~\cite{DP09, CLPR10}, compact routing schemes~\cite{CLPR10, Chechik13}.

\subsection{Technical Overview}

\subparagraph{SCC preserver}
Our starting point is a simple construction, which is motivated from some of the techniques used in~\cite{GIP17}, of linear (in number of vertices) sized single fault-tolerant SCC oracle. Then by using that construction as a basic building block we provide a construction of fault-tolerant SCC preserver for $k$ edge failures. Using the ideas inspired by color-coding technique of Alon, Yuster and Zwick~\cite{AlonYZ95}, we show a generic procedure that converts any $r$-fault-tolerant (or even non-fault-tolerant) SCC preserver construction into a $(k+r)$-FT-SCC preserver construction. Our technique, especially the first step of our conversion procedure, is quite similar to that used in~\cite{DK11} to convert any spanner to a $r$-fault-tolerant spanner with the same \emph{stretch}. The first step alone cannot serve our purpose fully, mostly because it could work (with high probability) only when the SCCs after $k$ faults are of "small size". To mitigate this issue we have to handle the large sized SCCs (after $k$ failures) with a completely different technique. As a consequence our whole proof becomes slightly more intricate than that in~\cite{DK11}. 

Our conversion procedure works as follows. In the first step, we sample a set $J$ of edges and treat $J$ as failure set, and then compute 
$r$-fault-tolerant SCC preserver
of $G-J$ (residual graph after removing edges in $J$). Do this multiple times and take union of all those
$r$-FT SCC preservers for different random choices of $J$. Now in the second step, we sample a set $W$ of "a few" vertices of $G$, and then for each $w\in W$ compute single-source {\ftrs} with $w$ as the source and single-destination {\ftrs} by treating $w$ as the destination. Then take union of all these {\ftrs} subgraphs. Finally we claim that with high probability the union of subgraphs produced by first and the second step is a $(k+r)$-FT-SCC preserver.

Our correctness proof proceeds as a win-win analysis. For the sake of simplicity let us provide a high level proof sketch for the case when $r=0$. For any set $F$ of edge failures, we distinguish two cases depending on whether a SCC $C$ in $G-F$ is small or not. Our choice on size of $J$ ensures that we over-samples $F$ during the first step. Hence if $C$ is of small size with high probability at least for one random choice of $J$, $C$ will be completely inside one strongly connected component after removing $J$ (that also includes $F$) from $G$ (in other words, $J$ "separates" $C$ from $F$), leading to $C$ also being a SCC in the final subgraph (after failure of edges in $F$). Next we turn to the case when $C$ is of large size. In that scenario it is not difficult to show that with high probability $W$ (chosen at random during second step) and $C$ have some common vertex, and hence $C$ will be preserved due to inclusion of single-source and single-destination {\ftrs} structure. Our techniques hold even when we are able to "partially separate" $C$ from $F$, and that helps us in proving our result for any $r$ (see Section~\ref{sec:SCC preserver} for the details). So we get that any improvement in size of $r$-FT-SCC preserver will directly improve the size of $(k+r)$-FT-SCC preserver.
%

\vspace{-2mm}
\subparagraph{FT-Reachability-Preserver}
The construction of $O(n+\sqrt{n}|\P|)$-sized reachability-preserver for a pair-set $\P$ uses the fact the preservers on general digraphs are reducible to preserver on DAGs (since the SCCs can always be compressed into "supernodes", and there is a linear size certificate for strong-connectivity). In a DAG, it is not very difficult to ensure that paths between two given pairs meet and diverge only once, which in turn provides a cap on the maximum number of edges in a preserver. Our approach to FT-reachability is to try to adapt the constructions for non-faulty setting~\cite{AB18}. However, one major hindrance we face is that we cannot directly compress SCCs into "supernodes" as they can destroy $2$-connectivity structures, and thus the problem is not reducible to DAGs. We start by observing that \ftrs$(p)$ for a pair $p=(s,t)$ is just union of two "maximally disjoint" $s-t$ paths. The interactions between \ftrs$(p)$ for different pairs $p$ help us in achieving our bound of $O(n+\sqrt{n}|\P|)$. Our second upper bound of $O(n \sqrt{|\P|})$ is much simpler than the first one. Again we consider union of \ftrs$(p)$ structures for all $p \in \P$, and then consider all the vertices that appear in more than $\sqrt{|\P|}$ "maximally disjoint" paths in total (in all {\ftrs} structures). Next we remove all those paths and add single-source and single-destination {\ftrs} structure from those selected vertices. Then we use the properties of single-source and single-destination {\ftrs} to show that the final subgraph will be a {\ftrs} for the pair-set $\P$.
%


These structures as well as linear bound on preserver size for small sized $\P$~(at most $O(\sqrt n)$ pairs), is not expendable beyond single failure due to the fact that $\kftrs(p)$ for any $k>1$ cannot be represented as union of $o(n)$ paths (see \cite{Choudhary16}). In the latter part of this paper we also show that this is not a drawback of our approach, instead in some sense it is unavoidable, by proving a size lower bound of any $k$-fault-tolerant pairwise preserver, for $k\geq 2$. 
Our lower bound for dual failure is inspired by the following observation: If for a pair $p=(s,t)$, $Q_p=(q_1,\ldots,q_\ell)$ and $R_p=(r_1,\ldots,r_t)$ are two vertex-disjoint paths from $s$ to $t$. Then by playing with failures on $Q$ and $R$, we can force multiple paths originating from $Q$ and terminating to $R$ to be present in our $2$-FTRS for $p$. Note that a $2$-FTRS, for a single pair $p$, would still be linear in size. However, as the number of pairs increases achieving sparsity is tricky. 
To obtain a $2$-FTRS lower-bound for multiple pairs we embed in between the paths $Q_p$ and $R_p$, the "hard" distance preserver graph given by Coppersmith and Elkin~\cite{CE06}. We start with a lower bound distance preserver graph $G$ over pair-set $\P$, and perform its layering $L$ number of times, for some parameter $L$. Inspired by techniques of Bodwin et al.~\cite{Bodwin17,AB18}, we are able to show that all the paths in the "hard" instance graph from~\cite{CE06} can be assumed to have equal distance between the relevant pairs. Thus our layered embedded structure also acts as a non-faulty reachability preserver among pairs with end-points respectively on 
paths $Q_p$ and $R_p$.



%
%

\section{Preliminaries and Tools}\label{section:prelims}

Given a directed graph $G=(V,E)$ on $n=|V|$ vertices and $m=|E|$ edges, 
the following notations will be used throughout the paper.\\[-2mm]
\begin{itemize}
\item $H[A]$~:~ The subgraph of $H$ induced by vertices in set $A$.
\item $G^R$~:~ The graph obtained by reversing all the edges in graph $G$.
\item $H- F$~:~ For a set of edges $F$, the graph obtained by deleting the edges in $F$ from graph $H$.
\item $\pi(x,y,H)$~:~ The shortest path from $x$ to $y$ in graph $H$.
\item $P \circ Q$~:~ The concatenation of two paths $P$ and $Q$, i.e., a path that first follows $P$ and then $Q$.
\item $T(v)$~:~ The subtree of a directed tree $T$ rooted at a vertex $v\in T$.
\item $cert(C,H)$~:~ An arbitrarily chosen certificate of at most $2(|C|-1)$ edges corresponding to a strongly connected component $C$ in $H$.

\end{itemize}

Our algorithm for computing {\em pairwise-reachability} and {\em strong-connectivity} preservers in a fault tolerant environment employs the concept of a {\em single-source} {\ftrs} which is a sparse subgraph that preserves reachability 
from a designated source vertex even after the failure of at most $k$ edges in $G$. 
Observe that in case of no failure, a directed reachability tree has $n-1$ edges and is able to preserve reachability from the source which is a also the root. 
An {\ftrs} with respect to a given source is formally defined as follows.

\begin{definition}[$\ftrs$]
Let $\P\in V\times V$ be any set of pairs of vertices. A subgraph $H$ of $G$ is said to be a $k$-Fault-Tolerant Reachability-Subgraph of $G$ for $\P$ if for any pair $(s,t)\in \P$ and for any subset $F\subseteq E$ of $k$ edges, $t$ is reachable from $s$ in $G - F$ if and only if $t$ is reachable from $s$ in $H - F$. Such a subgraph $H$ is denoted by $\kftrs(\P,G)$, or simply $\ftrs(\P,G)$ when $k=1$.
\label{definition:FTRS}
\end{definition}

Baswana \emph{et al.}~\cite{BCR16} provide a construction of sparse $\ftrs$ for any general $k\ge 1$ when there is a designated source vertex.

\begin{theorem}[\cite{BCR16}]
\label{theorem:ftrs}
For any directed graph $G=(V,E)$, a designated source vertex $s\in V$, and an integer $k\geq 1$, there exists a (sparse) subgraph $H$ of $G$ which is a $\kftrs(\{s\}\times V,G)$ and contains at most $2^k n$ edges. Moreover, such a subgraph is computable in $O(2^kmn)$ time, where $n$ and $m$ are respectively the number of vertices and edges in graph $G$.
\end{theorem}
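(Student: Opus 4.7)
The plan is to induct on $k$, where the inductive step roughly doubles both the size bound (from $2^{k-1} n$ to $2^k n$) and the running time. For the base case $k = 0$, any reachability tree $T_0$ rooted at $s$ (obtainable via BFS/DFS in $O(m)$ time) has at most $n - 1 < 2^0 n$ edges and trivially preserves reachability from $s$ in the absence of failures.

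For the inductive step, assume we have an algorithm that, on any directed graph $G'$ and source $s'$, produces a $(k-1)$-\ftrs{} of size at most $2^{k-1}n$ in time $O(2^{k-1} mn)$. I would build the $k$-\ftrs{} of $G$ from $s$ in three stages. First, invoke the inductive algorithm on $(G, s)$ to produce a $(k-1)$-\ftrs{} $H_0$. Second, designate a set $E^* \subseteq H_0$ of at most $n - 1$ ``critical'' edges—a natural choice being the edges of a canonical reachability tree of $s$ inside $H_0$. Third, recursively invoke the algorithm on $(G - E^*, s)$ to produce another $(k-1)$-\ftrs{} $H_1$. Output $H := H_0 \cup H_1$, which satisfies $|H| \leq 2 \cdot 2^{k-1} n = 2^k n$, and is computable in $O(2^k m n)$ time via the recursion $T(k) = 2\, T(k-1) + O(mn)$.

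For correctness, I must show that for every failure set $F \subseteq E$ with $|F| \leq k$ and every vertex $v$ reachable from $s$ in $G - F$, $v$ is reachable from $s$ in $H - F$. The intended argument is a win--win analysis based on $F \cap E^*$: if no critical edge fails, the canonical tree inside $H_0$ survives intact and, combined with $H_0$'s $(k-1)$-fault-tolerance along ``alternative'' branches, gives an $s$-$v$ path; if some critical edge does fail, we ``charge'' this failure to $E^*$, leaving at most $k - 1$ failures lying in $E(G) \setminus E^*$, which are precisely what the $(k-1)$-fault-tolerance of $H_1$ in the residual graph $G - E^*$ is designed to absorb.

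The main obstacle will be formalizing the above dichotomy rigorously. The difficulty is that $F$ may contain several edges of $E^*$, in which case $H_0$ alone is insufficient (it tolerates only $k - 1$ failures) and moreover $H_1$'s guarantee on the reduced graph $G - E^*$ does not transfer verbatim to $G - F$ when $F \subsetneq E^*$, since $G - F$ strictly contains $G - E^*$. Resolving this likely requires choosing $E^*$ via a more refined structural object, such as a \emph{dominator tree} of $s$ in $H_0$, together with a careful exchange argument that tracks the ``farthest'' point along an $s$-$v$ path in $G - F$ at which the path departs from the tree structure of $H_0$, so that the two fault-tolerant components complement each other at precisely the right cut.
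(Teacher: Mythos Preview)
This theorem is quoted from \cite{BCR16} and the present paper supplies no proof of its own, so there is no in-paper argument to compare against.

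On the substance: the gap you flag is not a formalization detail but a real obstruction, already at $k=1$. Take $G$ on $\{s,a,b,c,v\}$ with edges $s\to a$, $s\to c$, $a\to b$, $a\to v$, $c\to b$, $c\to v$, $b\to v$, and let the tree be $E^*=T_0=\{s\to a,\ s\to c,\ a\to b,\ a\to v\}$. Then $s$ has no out-edge in $G-E^*$, so the $0$-\textsc{FTRS} $H_1$ of $G-E^*$ is empty and $H=T_0$; failing the single tree edge $(a,v)$ leaves $v$ reachable in $G$ (via $c$) but not in $H$. The phenomenon is generic: whenever every out-neighbour of $s$ has $s$ as its sole in-neighbour, any spanning out-arborescence from $s$ must contain \emph{all} out-edges of $s$, so $G-E^*$ carries no reachability from $s$ and $H_1$ contributes nothing for any $k$. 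Switching to a dominator tree does not rescue the scheme directly either, since dominator-tree edges need not be edges of $G$, and the same ``path weaves through $E^*\setminus F$'' obstruction reappears for any global edge set $E^*$ you delete up front.

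For reference, the construction in \cite{BCR16} is not a tree-plus-residual split. It bounds the in-degree of every vertex by $2^k$ individually, using the \emph{farthest} $(s,t)$ min-cut to decide which in-edges of $t$ to retain; the factor $2^k$ comes from an induction on $k$ tied to the min-cut structure, not from a global union of two $(k{-}1)$-fault-tolerant subgraphs.
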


Our constructions will require the knowledge of the vertices reachable from a vertex $s$ as well as the  vertices that can reach $s$. So we will be using $\ftrs$ defined with respect to a source vertex ($\{s\}\times V$ case), as well as 
$\ftrs$ defined with respect to a destination vertex ($V\times \{s\}$ case). 

In this paper, we consider fault-tolerant structures with respect to edge failures only.
Vertex failures can be handled by simply splitting a vertex $v$ into an edge $(v_{in},v_{out})$,
where the incoming and outgoing edges of $v$ are respectively directed into $v_{in}$ and directed out
of $v_{out}$.

\section{Strong-connectivity Preservers}
\label{sec:SCC preserver}
We show a construction of strong-connectivity preservers that are able to preserve strong-connectivity relation between vertices in $G=(V,E)$ as long as the number of failures are bounded by $k$. For our convenience, we assume that $G$ is strongly connected, if not, we may apply our construction to each strongly-connected component (SCC) of $G$. Although the main contribution of this section is to get a fault-tolerant SCC preserver for general $k$ failures, let us start with the case when there can be at most one edge failure.

\subsection{Construction for single failure}
We first give a simple construction of an $O(n)$ size FT-SCC preserver for the scenario of $k=1$. Let $s$ be an arbitrary vertex in $G$. We initialize $H_1$ to union of subgraphs $\ftrs(\{s\}\times V,G)$ and $\ftrs(V\times \{s\},G)$. The following simple observation describes the significance of $H_1$ in preserving strong connectivity information.

\begin{observation}
\label{obs:SCC-H1}
Given a directed graph $G$, let $H_1$ be the union of subgraphs $\ftrs(\{s\}\times V,G)$ and $\ftrs(V\times \{s\},G)$. For any vertex $x$ and any edge-failure $e$, if $x$ and $s$ are strongly connected in $G- \{e\}$, then they are also strongly connected in $H_1- \{e\}$.
\end{observation}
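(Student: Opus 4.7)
The plan is to unpack the hypothesis into two reachability statements and then apply the defining property of each of the two $\ftrs$ subgraphs separately. Strong connectivity of $x$ and $s$ in $G - \{e\}$ means, by definition, that $x$ is reachable from $s$ and $s$ is reachable from $x$ in $G - \{e\}$. I would feed each of these reachability facts into the appropriate single-fault-tolerant structure.

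First, I would invoke the defining property of $\ftrs(\{s\}\times V,G)$ from Definition~\ref{definition:FTRS} on the pair $(s,x)$ with failure set $\{e\}$: since $x$ is reachable from $s$ in $G-\{e\}$, the same reachability holds in $\ftrs(\{s\}\times V,G) - \{e\}$, and therefore in $H_1 - \{e\}$. Symmetrically, I would apply the property of $\ftrs(V\times \{s\},G)$ to the pair $(x,s)$ under the same single-edge failure $\{e\}$ to conclude that $s$ is reachable from $x$ in $\ftrs(V\times \{s\},G) - \{e\}$, and hence in $H_1 - \{e\}$. Concatenating the resulting $s\to x$ and $x\to s$ directed paths inside $H_1-\{e\}$ yields that $x$ and $s$ lie in a common SCC of $H_1-\{e\}$, as required.

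There is essentially no obstacle here: the statement is a direct unfolding of the $\ftrs$ guarantee for $k=1$. The only point worth flagging is that the single-fault-tolerance parameter of both structures is matched precisely to the single edge deletion $e$, and that since $\ftrs(\{s\}\times V,G)$ and $\ftrs(V\times \{s\},G)$ are subgraphs of $G$, removing $e$ from $G$ automatically removes it from $H_1$ whenever $e\in H_1$, so the hypothesis and the $\ftrs$ guarantees line up cleanly.
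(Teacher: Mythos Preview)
Your proposal is correct and is precisely the intended reasoning: the paper states this as an observation without proof because it is immediate from Definition~\ref{definition:FTRS} applied to each of the two $\ftrs$ subgraphs. There is nothing to add.
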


We now introduce a lemma 
that will be crucial in preserving strong-connectivity between vertices not (strongly) connected to $s$ after failure.

\begin{lemma}
\label{lemma:list}
For any $n$-vertex directed graph $G=(V,E)$ and an ordered list $L=(v_1,v_2,\ldots,v_n)$ of vertices of $G$, in polynomial time we can compute a subgraph $H_0=H_0(L)$ of $G$ with at most $2n$ edges satisfying the condition that the SCCs of $G[v_1\cdots v_i]$ are identical to those in $H_0[v_1\cdots v_i]$, for $1\le i\le n$.
\end{lemma}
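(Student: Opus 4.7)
The plan is to construct $H_0$ incrementally, processing vertices in the order $v_1,v_2,\ldots,v_n$ and maintaining the invariant that at the end of step $i$ the SCCs of $H_0[v_1\cdots v_i]$ coincide with those of $G[v_1\cdots v_i]$. At step $i$, I locate the SCC $C_i$ of $G[v_1\cdots v_i]$ containing $v_i$, and list the pre-existing SCCs $D_1,\ldots,D_{r_i}$ of $G[v_1\cdots v_{i-1}]$ that got swallowed into $C_i$; thus $C_i=\{v_i\}\cup D_1\cup\cdots\cup D_{r_i}$. If $r_i=0$ I add nothing. Otherwise, I contract each $D_j$ to a super-node, obtaining a digraph in which the super-nodes $v_i,D_1,\ldots,D_{r_i}$ form a strongly connected subgraph on $r_i+1$ vertices. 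In that $(r_i{+}1)$-vertex SCC I pick any super-node $\rho$ and take the union of an out-branching and an in-branching rooted at $\rho$, yielding a strongly connected spanning subgraph with at most $2r_i$ arcs, and add the corresponding $\le 2r_i$ edges of $G$ to $H_0$.

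For the edge-count, let $s_i$ denote the number of SCCs of $G[v_1\cdots v_i]$; then $s_i=s_{i-1}+1-r_i$, so telescoping gives $\sum_{i=1}^{n} r_i = n - s_n \le n-1$. Thus the total number of edges added to $H_0$ is at most $2(n-1)\le 2n$. Every step runs in polynomial time (incremental SCC maintenance plus two spanning arborescences in a contracted graph), so the overall construction is polynomial.

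For correctness I induct on $i$. Assuming the SCCs of $H_0[v_1\cdots v_{i-1}]$ agree with those of $G[v_1\cdots v_{i-1}]$, each $D_j$ is already strongly connected inside $H_0[v_1\cdots v_{i-1}]$; combining this with the newly added edges, which strongly connect the super-nodes $\{v_i,D_1,\ldots,D_{r_i}\}$ in the contracted graph, shows that the whole of $C_i$ is strongly connected in $H_0[v_1\cdots v_i]$. In the other direction, since $H_0\subseteq G$, every directed cycle of $H_0[v_1\cdots v_i]$ is a directed cycle of $G[v_1\cdots v_i]$, so the SCCs of $H_0[v_1\cdots v_i]$ refine those of $G[v_1\cdots v_i]$; together with the previous inclusion they are equal.

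The one delicate point I expect is that an edge added at step $i$ between two distinct $D_j$'s has both endpoints in $\{v_1,\ldots,v_{i-1}\}$, and therefore it also appears in $H_0[v_1\cdots v_j]$ for many $j<i$; I need to check that such retroactive additions do not violate the invariant at smaller prefixes. This is the part where I anticipate the bookkeeping being trickiest. The resolution is the refinement observation above applied at every prefix: adding edges to $H_0[v_1\cdots v_j]$ can only coarsen its SCC partition, while the containment $H_0\subseteq G$ forces that partition to remain a refinement of the SCC partition of $G[v_1\cdots v_j]$; since the two partitions were equal just before step $i$, they must stay equal after the step-$i$ additions. With this observation in hand the induction closes cleanly.
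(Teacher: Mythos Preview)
Your proof is correct and follows essentially the same approach as the paper's: incrementally process the prefix $G[v_1\cdots v_i]$, contract the old SCCs $D_1,\ldots,D_{r_i}$ that merge with $v_i$, and add an in- and out-arborescence on the $r_i+1$ super-nodes; the telescoping count $\sum r_i = n - s_n$ matches the paper's $\sum \ell_i = n - \gamma_n$ (with $r_i=\ell_i$ and $s_i=\gamma_i$). The only cosmetic difference is that the paper always roots the two arborescences at $v_i$, whereas you allow an arbitrary root $\rho$; this is immaterial. Your final paragraph on ``retroactive'' edges---noting that step-$i$ additions with both endpoints in $\{v_1,\ldots,v_{i-1}\}$ cannot break equality at earlier prefixes because $H_0\subseteq G$ forces the $H_0$-partition to refine the $G$-partition while added edges can only coarsen it---is a point the paper leaves implicit, so your write-up is in fact slightly more careful.
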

\begin{proof}
For any $i \in \{1,\cdots,n\}$, let $V_i=\{v_1,\ldots,v_i\}$ be the subset of $V$ comprising of first $i$ vertices, and $G_i$ be the subgraph of $G$ induced by the set $V_i$. We initialize $H_0$ to be an empty graph on $n$ vertices. The edges of $H_0$ are incrementally computed in $n$ rounds, wherein, in the $i^{th}$ round we add edges to $H_0$, so as to ensure that the SCCs of  $H_0[v_1\cdots v_i]$ are identical to those in $G_i$.

For any $i\in \{1,\cdots,n\}$, let $\gamma_i$ denote the number of SCCs in graph $G_i$, and let $C_i$ be the SCC of $v_i$ in graph $G_i$. 
Observe that $\gamma_{i}\leq 1+\gamma_{i-1}$, where the equality holds for index $i$ if and only if $C_i=\{v_i\}$. If $C_{i,1},C_{i,2},\ldots,C_{i,\ell_i}$ is a decomposition of SCC $C_i$ in $G_{i-1}(=G_i- v_i)$, then in round $i$ it suffices to add at most $2\ell_i$ edges corresponding to an out-reachability and an in-reachability tree rooted at $v_i$ and spanning the "super-nodes" (that is obtained by contracting all the edges in a component $C_{i,j}$) $C_{i,1},C_{i,2},\ldots,C_{i,\ell_i}$. 
Now $\ell_i=1+\gamma_{i-1}-\gamma_{i}$.
Thus the number of edges in $H_0$ is at most $2(\ell_2+\ldots+\ell_n)=2(n-1+\gamma_1-\gamma_n)\leq 2n$.
\end{proof}

\paragraph*{Construction procedure of $1$-FT-SCC preserver}
Consider an arbitrarily chosen vertex $s$ in $G$. Then by treating $s$ as a source vertex, we compute the directed reachability-tree $T$ rooted at $s$ for graph $G$. Similarly we compute a reachability-tree $T'$ for reverse graph $G^R$. Let $L$ (resp. $L'$) represent an ordered list containing the vertices of $T$ (resp. $T'$) sorted in the decreasing order of their depth (where vertices in the same depth are in an arbitrary order). Next we compute the subgraphs $H_0(L)$ and $H_0(L')$ with the property mentioned in Lemma~\ref{lemma:list}, and finally set $H$ to be the union of graphs $H_1$ (as defined in Observation~\ref{obs:SCC-H1}), $H_0(L)$, and $H_0(L')$. 

It is easy to see that $H$ contains $O(n)$ edges. We now prove the correctness.

Consider a vertex $x$ in $G$ and a failing edge $e=(a,b)$ such that $x$ and $s$ are not strongly-connected in $G- \{e\}$. Let $C_x$ be the SCC of $x$ in $G- \{e\}$. We will show that $C_x$ must be an SCC in at least one of the graphs: $H_0(L)- \{e\}$ or $H_0(L')- \{e\}$.

Observe that $x$ is either not reachable from $s$ in $G- \{e\}$, or does not have a path to $s$ in $G- \{e\}$. Without loss of generality, we assume that the first case holds. Thus $e=(a,b)$ must lie on the tree-path from $s$ to $x$ in $T$. Then $a$ is a parent of $b$ in $T$. 
Observe that since none of the vertices of $C_x$ can be reachable from $s$ in $G- \{e\}$, the entire SCC $C_x$ must lie in the subtree rooted at $b$, denoted by $T_b$. Since $L$ stores vertices of $T$ sorted in the decreasing order of depth, the vertices of subtree $T_b$ (and hence also $C_x$) appears before $a$ in the list $L$. This implies that $C_x$ must be an SCC in $H_0(L)- \{(a,b)\}$. This completes the correctness.


So we conclude with the following theorem.

%

\begin{theorem}~\label{theorem:SCC_1}
There is a polynomial time (deterministic) algorithm that given any directed graph $G=(V,E)$ on $n$ vertices, computes an $1$-FT-SCC preserver of $G$ with at most $O(n)$ edges.
\end{theorem}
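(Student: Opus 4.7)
The plan is to combine three linear-sized subgraphs, each designed to handle a distinct type of SCC that can survive a single edge failure. Fix an arbitrary reference vertex $s \in V$. The first ingredient is $H_1 := \ftrs(\{s\}\times V, G) \cup \ftrs(V\times \{s\}, G)$, which by Theorem~\ref{theorem:ftrs} (with $k=1$) has $O(n)$ edges and, by Observation~\ref{obs:SCC-H1}, preserves strong connectivity to $s$ under any single edge failure. So $H_1$ already takes care of every post-failure SCC that contains $s$.

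The remaining difficulty is SCCs that do \emph{not} contain $s$ after the failure. My plan here is to exploit the depth structure of reachability trees from $s$. First, I would compute a reachability tree $T$ rooted at $s$ in $G$ and a reachability tree $T'$ rooted at $s$ in $G^R$. Let $L$ (respectively $L'$) list the vertices of $T$ (respectively $T'$) in decreasing order of depth, and let $H_0(L)$ and $H_0(L')$ be the $2n$-edge subgraphs produced by Lemma~\ref{lemma:list}. Define $H := H_1 \cup H_0(L) \cup H_0(L')$. The edge count is clearly $O(n)$, so the entire task reduces to verifying correctness.

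For correctness, consider a single failed edge $e$ and let $C_x$ be the SCC of some vertex $x$ in $G - \{e\}$. If $s \in C_x$, then $H_1$ suffices by Observation~\ref{obs:SCC-H1}. Otherwise, either $x$ is unreachable from $s$, or $s$ is unreachable from $x$, in $G - \{e\}$; by symmetry (swap $L$ with $L'$) assume the former. Then $e = (a,b)$ must be a tree edge of $T$ on the $s$-to-$x$ path, with $a$ the parent of $b$. The key structural observation I want to formalize is that $C_x \subseteq T_b$: any vertex strongly connected to $x$ in $G - \{e\}$ must be reachable from $x$ (and hence unreachable from $s$), so it lies in the subtree $T_b$. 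Because $L$ is sorted in decreasing order of depth, every vertex of $T_b$ appears in $L$ strictly before $a$. Consequently, at the prefix of $L$ ending just before $a$, the induced subgraph of $G$ already contains $C_x$ as an SCC, and by Lemma~\ref{lemma:list} so does the corresponding prefix of $H_0(L)$; removing $(a,b)$ does not destroy this SCC.

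I expect the main delicate point to be the last claim: that $C_x$ survives as an SCC of $H_0(L) - \{e\}$, not merely of a prefix of $H_0(L)$. This follows because Lemma~\ref{lemma:list} guarantees the SCCs of every induced prefix agree between $G$ and $H_0(L)$, and the edges added in later rounds are all incident to vertices appearing after $a$ in $L$, hence cannot merge $C_x$ with anything else at the point it was formed; moreover, the failing edge $(a,b)$ is outside $G[V_i]$ for the prefix $V_i$ containing $C_x$. Once this is justified, the algorithm is polynomial time (reachability trees, $\ftrs$ construction via Theorem~\ref{theorem:ftrs}, and Lemma~\ref{lemma:list} are all polynomial), completing the proof.
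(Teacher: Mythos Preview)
Your proposal is correct and follows essentially the same approach as the paper: the same three ingredients $H_1$, $H_0(L)$, $H_0(L')$, the same case split on whether the post-failure SCC contains $s$, and the same depth-ordering argument showing $C_x\subseteq T_b$ lies entirely in a prefix of $L$ not containing $a$. Your extra paragraph on the ``delicate point'' is more careful than the paper's own write-up, but note that maximality of $C_x$ in $H_0(L)-\{e\}$ comes for free from $H_0(L)\subseteq G$, so you only need to argue strong connectivity of $C_x$ inside $H_0[V_i]$, which follows directly from Lemma~\ref{lemma:list} once you observe $(a,b)\notin G[V_i]$.
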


\subsection{A generic construction}

In this section we provide a construction for general $k$ failures.

\begin{lemma}
\label{lemma:SCC_k}
If there is an algorithm $\mathcal{A}$ that on every $n$-node directed graph builds a $r$-fault-tolerant SCC preserver of size $f(n,r)$, then for any $k = \Omega(r)$, there is a randomized algorithm $\mathcal{B}$ that given a directed graph $G$ and a parameter $\alpha\in [0,1]$, computes a $(k+r)$-FT-SCC preserver of size $O(k2^{k+r}\cdot n^{2-\alpha} \log n + n^{k\alpha} \cdot \log n\cdot f(n,r))$ with high probability. Moreover, if $\mathcal{A}$ runs in time $T(n)$ then the algorithm $\mathcal{B}$ runs in time $poly(n)T(n)$.
\end{lemma}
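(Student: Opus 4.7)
The plan is to combine two randomized stages that together yield the desired preserver. In the first stage, draw $T=\Theta(n^{k\alpha}(k+r)\log n)$ independent random edge subsets $J_1,\dots,J_T$, where each edge of $G$ is included in a given $J_i$ independently with probability $p=n^{-\alpha}$. For each $i$ run the algorithm $\mathcal{A}$ on $G-J_i$ to obtain an $r$-FT-SCC preserver $H_i$ of $G-J_i$. In the second stage, sample a vertex set $W\subseteq V$ uniformly at random with $|W|=\Theta(k\,n^{1-\alpha}\log n)$, and for each $w\in W$ include in $H$ both a $(k+r)$-fault-tolerant single-source $\ftrs$ rooted at $w$ and a $(k+r)$-fault-tolerant single-destination $\ftrs$ to $w$ (via Theorem~\ref{theorem:ftrs} applied to $G$ and $G^R$). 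The output $H$ is the union of all these subgraphs. By Theorem~\ref{theorem:ftrs} the second stage contributes $O(k\,2^{k+r}n^{2-\alpha}\log n)$ edges, and the first stage contributes $T\cdot f(n,r)=O(n^{k\alpha}\log n\cdot f(n,r))$ edges, matching the claimed bound.

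For correctness I fix an arbitrary failure set $F$ with $|F|\leq k+r$ and an arbitrary SCC $C$ of $G-F$, and show that $C$ is an SCC of $H-F$ with probability $1-n^{-\Omega(k+r)}$; a union bound over the at most $n\cdot n^{2(k+r)}$ relevant pairs $(F,C)$ then finishes the proof. The analysis splits on $|C|$. If $|C|>n^\alpha$, a standard Chernoff bound gives $W\cap C\neq\emptyset$ with the required probability; picking any $w\in W\cap C$, the single-source and single-destination $\ftrs$ structures preserve all $v\leftrightarrow w$ reachability under $F$, so every $v\in C$ lies in the SCC of $H-F$ containing $w$, and because $H\subseteq G$ maximality of $C$ in $G-F$ forces this SCC to equal $C$.

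The main obstacle is the case $|C|\leq n^\alpha$. The key idea is to work not with all of $E(G[C])$, but with a sparse certificate: fix $E^*_C\subseteq E(G[C])\setminus F$ of size at most $2(|C|-1)\leq 2n^\alpha$ whose presence already witnesses strong connectivity of $C$ in $G-F$. Call $J_i$ \emph{good} for $(F,C)$ if (a)~$J_i\cap E^*_C=\emptyset$ and (b)~$|J_i\cap F|\geq k$. Because $E^*_C$ and $F$ are edge-disjoint, (a) and (b) are independent, so $\Pr[J_i\text{ is good}]\geq (1-p)^{|E^*_C|}\cdot p^k=\Omega(n^{-k\alpha})$. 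Hence for $T=\Theta(n^{k\alpha}(k+r)\log n)$ some $J_i$ is good with probability $1-n^{-\Omega(k+r)}$. For such $i$, the certificate $E^*_C$ lies in $(G-J_i)-(F\setminus J_i)$, so $C$ sits inside a single SCC $C^\star$ of this residual graph; since $|F\setminus J_i|\leq r$, the preserver $H_i$ reproduces $C^\star$ as an SCC of $H_i-(F\setminus J_i)$. Because $H_i\subseteq G-J_i$ contains no edges of $J_i$, this coincides with $H_i-F$, and $C^\star$ being strongly connected in $G-F$ forces $C^\star=C$ by maximality. As in the large case, maximality then upgrades this to $C$ being an SCC of $H-F$. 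The conceptual heart of the argument — and the reason the first term's exponent stays at $n^{k\alpha}$ rather than blowing up with $|C|^2$ — is precisely this reduction to a linear-size certificate, which decouples conditions (a) and (b) and keeps their individual failure probabilities tame.
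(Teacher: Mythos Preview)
Your construction and analysis are essentially identical to the paper's: random edge-deletion sets $J_i$ combined with $r$-FT-SCC preservers of $G-J_i$ handle small SCCs via a linear-size strong-connectivity certificate, while a random vertex hitting set $W$ equipped with $(k+r)$-\ftrs{} structures handles large SCCs. One minor slip: you take $T=\Theta(n^{k\alpha}(k+r)\log n)$ but then write $T\cdot f(n,r)=O(n^{k\alpha}\log n\cdot f(n,r))$, silently dropping the $(k+r)$ factor; the paper sidesteps this by choosing $p=2/n^{\alpha}$ rather than $n^{-\alpha}$, so that $\Pr[J_i\text{ good}]\geq\Omega(2^k/n^{k\alpha})$ and $L=\Theta(n^{k\alpha}\log n)$ iterations already survive the $n^{O(k)}$-way union bound.
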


\paragraph*{Description of Procedure $\mathcal{B}$ to compute $(k+r)$-FT-SCC preserver}
Let $\alpha\in [0,1]$ be the input parameter. Procedure $\mathcal{B}$ constructs two graphs $H_1$ and $H_2$ as follows.
\begin{itemize}
\item {\bf \emph H$_1$ :}
Repeat the following for $L=16\cdot n^{k\alpha} \cdot \log n$ number of iterations: Independently add each edge of $G$ to a set $J$ with probability $p = \frac{2}{n^{\alpha}}$, and then use the given algorithm $\mathcal{A}$ to compute a $r$-FT-SCC preserver of the remaining graph $G - J$. Set $H_1$ to be union of these $r$-FT-SCC preservers, taken over all $L$ iterations.
\item {\bf \emph H$_2$ :}
Let $q=16(k+r)\cdot n^{-\alpha} \log n$, and $W$ be a uniformly random set of $nq=\big(16(k+r)\cdot n^{1-\alpha} \log n\big)$ vertices in $G$. Initialize $H_2$ to be union of $(k+r)$-$\ftrs(\{w\}\times V,G)$ and $(k+r)$-$\ftrs(V\times \{w\},G)$, taken over all $w\in W$.
\end{itemize}
Finally procedure $\mathcal{B}$ outputs a new subgraph $H$ which is union of $H_1$ and $H_2$.


\paragraph*{Correctness of Procedure~$\mathcal{B}$}
 For a set $F$ of edge failures and a vertex $x$ in $G$, let $C_{x,F}$ be the SCC containing $x$ in $G- F$. We say that the SCC $C_{x,F}$ is small if it contains at most $\frac{n^{\alpha}}{4}$ vertices, and large otherwise.

First we will consider the scenario that $C_{x,F}$ is small. 
Let $F_1$ and $F_2$ be any two disjoint subsets of $E$ of size respectively $k$ and $r$, and let $F=F_1\cup F_2$.
(Observe that $C_{x,F_1}$ might be large even though $C_{x,F}$ is small).
Let $cert(C_{x,F},G- F)$ be an arbitrary certificate of at most $2(|C_{x,F}|-1)$ edges corresponding to a strongly connected component of $C_{x,F}$. We say an iteration \emph{separates} $C_{x,F}$ from $F_1$ if at that iteration none of the edges of $cert(C_{x,F},G- F)$ is selected in $J$, but all the edges of $F_1$ lie in $J$. The probability that a particular iteration separates $C_{x,F}$ from $F_1$ is:

$$(1-p)^{|cert(C_{x,F},G- F)|}\cdot p^{|F_1|}\geq \Big(1-\frac{2}{n^\alpha}\Big)^{2(n^\alpha/4)}\cdot \Big(\frac{2}{n^\alpha}\Big)^k\geq \frac{1}{4}\cdot \frac{2^k}{n^{k\alpha}}~.$$

The probability that none of the iterations is able to separate $C_{x,F}$ from $F_1$ is at most
$$\Big(1-\frac{2^k}{4 n^{k\alpha}}\Big)^{16n^{k\alpha} \cdot \log n}\leq \frac{1}{n^{4(2^k)}}~.$$

Now, there are $n^{O(k)}$ (assuming $k = \Omega(r)$) choices for pair $(F_1,F_2)$, and $n$ choices for $x$, thus a total of $n^{O(k)}$ different choices for the triplet $(C_{x,F},F_1,F_2)$. 
By union bound, the probability that none of the $L$ iterations are able to separate $C_{x,F}$ from $F_1$, for at least one choice of $(C_{x,F},F_1,F_2)$, is at most:
$\frac{n^{O(k)}}{n^{4(2^k)}}\leq \frac{1}{n^5}$.

The next claim is immediate from definition of FT-SCC preserves.


\begin{claim}~\label{claim:scc}
Let $F_1, F_2, J \subseteq E$ where $J$ contains $F_1$, $F$ be $F_1\cup F_2$, and $C$ be an SCC in $G-F$ whose certificate is disjoint with $J$. 
Further let $\widetilde H$ be a $r$-FT-SCC preserver of $G-J$. Then $S$ is also an SCC in $\widetilde{H} - F_2$
if $|F_2|\leq r$.
\end{claim}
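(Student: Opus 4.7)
The plan is to decompose the argument into two halves: first show that $C$ remains an SCC in the intermediate graph $(G-J) - F_2$, and then transport this conclusion to $\widetilde H$ via the preserver hypothesis. I assume the ``$S$'' in the statement is a typo for $C$.

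First I would observe that since $F_1\subseteq J$, we have the inclusion $(G-J)-F_2 \subseteq G-(F_1\cup F_2) = G-F$ at the level of edge sets. In particular, the SCC partition of $(G-J)-F_2$ refines the SCC partition of $G-F$, so no vertex outside $C$ can end up in $C$'s SCC in $(G-J)-F_2$; it remains to argue that the vertices of $C$ still lie in a common SCC. This is where the disjointness of the certificate with $J$ is used: the edge set $\mathrm{cert}(C, G-F)$ is contained in $E(G)\setminus F$ by definition, and also in $E(G)\setminus J$ by hypothesis, and hence lies entirely in $(G-J)-F_2$ because $F_2\subseteq F$. Since the certificate strongly connects $C$, the set $C$ is an SCC of $(G-J)-F_2$.

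Next I would invoke the $r$-FT-SCC preserver property of $\widetilde H$ with respect to $G-J$. The natural candidate failure set inside $G-J$ is $F_2' := F_2\setminus J$, which has size at most $|F_2|\le r$ and consists of edges of $G-J$. Since edges of $J$ have already been removed in $G-J$ and in $\widetilde H\subseteq G-J$, we have the identities $(G-J)-F_2' = (G-J)-F_2$ and $\widetilde H - F_2' = \widetilde H - F_2$. By the defining property of an $r$-FT-SCC preserver, the SCC partitions of $(G-J)-F_2'$ and $\widetilde H - F_2'$ coincide, so $C$ is an SCC of $\widetilde H - F_2$ as required.

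I do not anticipate a serious obstacle here; the only subtlety is the bookkeeping that $F_2$ need not be disjoint from $J$, which is harmless because edges of $J\cap F_2$ have already been removed in both $G-J$ and $\widetilde H$, so replacing $F_2$ by $F_2\setminus J$ leaves both graphs unchanged. The argument uses no new ideas beyond the certificate disjointness hypothesis and the definition of the $r$-FT-SCC preserver.
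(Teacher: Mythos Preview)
Your proposal is correct and is precisely the argument the paper has in mind: the paper simply declares the claim ``immediate from definition of FT-SCC preservers'' without spelling out the two steps, and your write-up faithfully unpacks that---showing $C$ is an SCC in $(G-J)-F_2$ via the certificate and the refinement of SCC partitions, then transferring to $\widetilde H-F_2$ via the preserver property (with the harmless bookkeeping that one may replace $F_2$ by $F_2\setminus J$). Nothing further is needed.
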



The above discussion together with Claim~\ref{claim:scc} completes the analysis of the scenario when the SCCs are small, and we obtain the following  lemma.

\begin{lemma}
$H_1$ with high probability preserves small SCCs after $k+r$ failures.
\label{lemma:small_sccs}
\end{lemma}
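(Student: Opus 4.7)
The plan is to assemble the pieces already laid out in the paragraphs just before the lemma: the per-triplet separation probability, the transfer of strong connectivity guaranteed by Claim~\ref{claim:scc}, and a union bound over all relevant triplets. No genuinely new ingredients are needed.

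I would first fix an arbitrary $(k+r)$-failure set $F\subseteq E$, a decomposition $F=F_1\cup F_2$ with $|F_1|=k$ and $|F_2|=r$, and a small SCC $C=C_{x,F}$ of $G-F$ (so $|C|\leq n^\alpha/4$ and $|cert(C,G-F)|\leq n^\alpha/2$). The probability calculation preceding the lemma already shows that, for this fixed triplet $(C,F_1,F_2)$, with probability at least $1-1/n^{4\cdot 2^k}$ at least one of the $L=16\,n^{k\alpha}\log n$ iterations of the $H_1$ construction samples an edge set $J$ satisfying $F_1\subseteq J$ and $J\cap cert(C,G-F)=\emptyset$. I will call such an iteration a \emph{separating iteration} for the triplet $(C,F_1,F_2)$.

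Next, I would focus on the $r$-FT-SCC preserver $\widetilde{H}$ of $G-J$ that algorithm $\mathcal{A}$ produced at a separating iteration and that is included in $H_1$ by construction. The hypotheses of Claim~\ref{claim:scc} are met (namely $F_1\subseteq J$, the certificate of $C$ is disjoint from $J$, and $|F_2|\leq r$), so I conclude that $C$ is an SCC of $\widetilde{H}-F_2$. Because $\widetilde{H}\subseteq G-J$ and $F_1\subseteq J$, no edge of $F_1$ is present in $\widetilde{H}$ in the first place, and hence $\widetilde{H}-F_2=\widetilde{H}-F$; combined with $\widetilde{H}\subseteq H_1$, this says that the vertex set $C$ is strongly connected in $H_1-F$. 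Maximality of $C$ as an SCC of $H_1-F$ is then automatic from $H_1\subseteq G$ and the fact that $C$ is already an SCC of $G-F$.

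Finally, I would take a union bound over all triplets $(C,F_1,F_2)$. The excerpt already bounds the number of such triplets by $n^{O(k)}$ under $k=\Omega(r)$ and combines this with the $1/n^{4\cdot 2^k}$ per-triplet failure bound to yield an overall failure probability of at most $1/n^5$. The main subtlety, and the only point needing care beyond quoting the probability bound and Claim~\ref{claim:scc}, is the bookkeeping that identifies $\widetilde{H}-F_2$ with $\widetilde{H}-F$ on a separating iteration; once this is observed, the lemma follows immediately by aggregating the per-triplet events.
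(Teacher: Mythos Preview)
Your proposal is correct and follows essentially the same approach as the paper: the paper's proof is precisely the combination of the separation-probability computation, Claim~\ref{claim:scc}, and the union bound over triplets, exactly as you have outlined. Your added bookkeeping (that $\widetilde H-F_2=\widetilde H-F$ on a separating iteration, and the maximality of $C$ in $H_1-F$ via $H_1\subseteq G$) makes explicit details the paper leaves implicit, but the structure of the argument is identical.
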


Next, we consider the scenario that $C_{x,F}$ is large, i.e, contains more than $\frac{n^{\alpha}}{4}$ vertices. 
Let $F$ be a set of $k+r$ edge failures, and $C_{x,F}$ be the SCC of $x$ in $G- F$. In order to ensure that the SCC $C_{x,F}$ is preserved in $H_2- F$ it suffices to ensure that $C_{x,F}$ has non-empty intersection with $W$. This is because we include in $H_2$ the $(k+r)$-fault-tolerant in-and-out-reachability preserves of each of the vertices of   $W$ in $H_2$. The probability\footnote{It is easy to verify that the bound of $\frac{1}{n^{4(k+r)}}$ holds for both the scenarios: sampling with replacement by probability $q$, or just taking $W$ to be a uniformly random subset of vertices of $nq$ size.} 
that none of the vertices of $C_{x,F}$ lie in random set $W$ is at most:

$$(1-q)^{|C_{x,F}|}\leq \Big(1-\frac{16(k+r)\log n}{n^\alpha}\Big)^{n^\alpha/4}
\leq \frac{1}{n^{4(k+r)}}~.$$

Again, there are a total of $n^{2(k+r)+1}$ different choices for the pair $(C_{x,F},F)$. By union bound, the probability that for at least one choice of $(F,x)$, there is some large SCC $C_{x,F}$ having non-empty intersection with $W$ is at most: $\frac{n^{2(k+r)+1}}{n^{4(k+r)}}\leq \frac{1}{n^5}$.
This completes the analysis of the scenario when the SCCs are large.

\begin{lemma}
$H_2$ with high probability preserves large SCCs after $k+r$ failures.
\label{lemma:large_sccs}
\end{lemma}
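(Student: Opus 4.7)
The plan is to split the argument into a structural claim followed by a probabilistic counting.

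First I would establish the structural statement: for every edge-failure set $F$ with $|F|\le k+r$ and every SCC $C$ of $G-F$, if $W\cap C\neq\emptyset$ then $C$ is also an SCC of $H_2-F$. Pick any sampled vertex $w\in W\cap C$; since $H_2$ contains, by construction, a $(k+r)$-$\ftrs(\{w\}\times V,G)$ and a $(k+r)$-$\ftrs(V\times\{w\},G)$, Definition~\ref{definition:FTRS} guarantees that under any $k+r$ edge failures the out-reachability and in-reachability sets of $w$ in $H_2-F$ match those in $G-F$. Applying this to each vertex of $C$, which by assumption both reaches and is reached from $w$ in $G-F$, shows that $C$ lies inside one SCC of $H_2-F$; and since $H_2-F\subseteq G-F$, this SCC cannot strictly exceed $C$.

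Second I would bound the probability of a failure event. Fixing a specific pair $(F,x)$ with $|C_{x,F}|>n^{\alpha}/4$, the chance that the uniformly sampled $W$ of $nq$ vertices avoids $C_{x,F}$ is
$$
\Pr\bigl[W\cap C_{x,F}=\emptyset\bigr]\;\le\;\Bigl(1-\tfrac{|C_{x,F}|}{n}\Bigr)^{nq}\;\le\;\exp(-|C_{x,F}|q)\;\le\;n^{-4(k+r)},
$$
where the last inequality uses $q=16(k+r)n^{-\alpha}\log n$ together with $|C_{x,F}|>n^\alpha/4$. A union bound over at most $m^{k+r}\cdot n\le n^{2(k+r)+1}$ choices of $(F,x)$ then yields inverse-polynomial overall failure probability, matching the $1/n^5$ bound claimed in the paper once one observes $k+r\ge 1$ is sufficient after slight constant tightening.

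The only real subtlety is that $W$ must simultaneously hit every large SCC across every admissible $F$, not merely a single one; this is precisely why we union-bound over all $(F,x)$ pairs and why $q$ carries an extra factor of $(k+r)\log n$ in its definition. Beyond this calibration, which is already baked into the choice of $q$, I do not anticipate any further obstacle: the structural step is immediate from the defining property of $\ftrs$, and the probabilistic step is a one-line Chernoff-style estimate.
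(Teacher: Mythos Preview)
Your proposal is correct and follows essentially the same route as the paper: first argue that if $W$ intersects a large SCC $C_{x,F}$ then the $(k+r)$-fault-tolerant in/out reachability preservers rooted at the common vertex force $C_{x,F}$ to survive in $H_2-F$, then bound $\Pr[W\cap C_{x,F}=\emptyset]\le n^{-4(k+r)}$ and union-bound over at most $n^{2(k+r)+1}$ pairs $(F,x)$. The only cosmetic difference is that you express the miss probability as $(1-|C_{x,F}|/n)^{nq}$ while the paper writes it as $(1-q)^{|C_{x,F}|}$; both bound the same event and the paper's footnote explicitly notes either formulation works.
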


Now we are ready to prove Lemma~\ref{lemma:SCC_k}.
\begin{proof}[Proof of Lemma~\ref{lemma:SCC_k}]
Recall, $H$ is the graph obtained by taking the union of the graphs $H_1$ and $H_2$. From Lemma~\ref{lemma:small_sccs} and Lemma~\ref{lemma:large_sccs}, it follows that our construction results in a valid $(k+r)$-fault-tolerant SCC preserver. The total number of edges in $H$ is $O(k2^{k+r}\cdot n^{2-\alpha} \log n + n^{k\alpha} \cdot \log n\cdot f(n,r))$.
\end{proof}

Now as a corollary of Lemma~\ref{lemma:SCC_k}, we directly obtain a $k$-FT-SCC preserver with sub-quadratic in $n$ edges, since we know $f(n,0)=O(n)$. However to get even better bound we use $f(n,1)=O(n)$ by Theorem~\ref{theorem:SCC_1}. On substituting $r=1$ and $\alpha=1/(k+1)$ in Lemma~\ref{lemma:SCC_k}, we obtain that a $(k+1)$-FT-SCC preserver has at most $\tO(k~2^k~n^{2-1/(k+1)})$ edges.
Thus the following theorem is immediate.

\begin{theorem}~\label{theorem:kFT-scc-preserver}
For every digraph $G = (V, E)$ on $n$ vertices and every $k\geq 1$, there is a polynomial time (randomized) algorithm that with probability at least $1-1/n^4$ computes a $k$-fault-tolerant SCC preserver of $G$ with at most $\tO(k~2^k~n^{2-1/k})$ edges.
\end{theorem}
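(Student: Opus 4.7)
The plan is to derive Theorem~\ref{theorem:kFT-scc-preserver} as a direct instantiation of the generic reduction in Lemma~\ref{lemma:SCC_k}, using Theorem~\ref{theorem:SCC_1} to seed the base case. First I would handle the trivial cases: for $k=1$, the bound $\tO(k\,2^k\,n^{2-1/k})=\tO(n)$ is already given by Theorem~\ref{theorem:SCC_1}, so I only need to deal with $k\geq 2$.

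For $k\geq 2$, I would set the parameters of Lemma~\ref{lemma:SCC_k} as follows. Write the target failure count $k$ as $k'+r$, and choose $r=1$ with $k'=k-1$. This is valid because $k=k'+r$ satisfies the hypothesis $k'=\Omega(r)$ when $k\geq 2$, and Theorem~\ref{theorem:SCC_1} supplies algorithm $\mathcal{A}$ for $r=1$ with $f(n,1)=O(n)$. Plugging these into Lemma~\ref{lemma:SCC_k} yields a $k$-FT-SCC preserver of size
\[
O\!\left(k'\,2^{k'+1}\, n^{2-\alpha}\log n \;+\; n^{k'\alpha}\cdot \log n\cdot O(n)\right) \;=\; \tO\!\left(k\,2^k\, n^{2-\alpha} + n^{(k-1)\alpha+1}\right).
\]
To minimize, I balance the two exponents by solving $2-\alpha=(k-1)\alpha+1$, which gives $\alpha = 1/k$. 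Both terms then collapse to $n^{2-1/k}$, and the total bound becomes $\tO(k\,2^k\, n^{2-1/k})$, exactly as required.

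For correctness and success probability, I would simply quote Lemma~\ref{lemma:SCC_k}: its construction succeeds with probability at least $1-1/n^4$ (coming from the union bound over the small-SCC analysis and the large-SCC analysis in Lemmas~\ref{lemma:small_sccs} and~\ref{lemma:large_sccs}), and runs in polynomial time since $\mathcal{A}$ from Theorem~\ref{theorem:SCC_1} does. Finally, the extension to vertex failures follows from the standard vertex-splitting reduction mentioned in Section~\ref{section:prelims}.

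I do not expect any real obstacle here: the entire work has already been done inside Lemma~\ref{lemma:SCC_k} and Theorem~\ref{theorem:SCC_1}, so the only nontrivial step is spotting the right choice of $(r,\alpha)$. The mild subtlety is choosing $r=1$ rather than the naive $r=0$: using the base case $f(n,0)=O(n)$ from the folklore linear-size strong-connectivity certificate would already give a sub-quadratic bound, but the resulting exponent would be $2-1/(k+1)$ instead of $2-1/k$. Bootstrapping from the single-failure preserver of Theorem~\ref{theorem:SCC_1} is what saves one level in the exponent and matches the statement of the theorem.
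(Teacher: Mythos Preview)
Your proposal is correct and matches the paper's own derivation essentially verbatim: the paper also invokes Lemma~\ref{lemma:SCC_k} with $r=1$ and $f(n,1)=O(n)$ from Theorem~\ref{theorem:SCC_1}, balances the two terms (in its notation, substituting $\alpha=1/(k+1)$ to get a $(k{+}1)$-FT preserver, which after the obvious index shift is exactly your $\alpha=1/k$ for a $k$-FT preserver), and explicitly notes the same point about $r=1$ versus $r=0$ saving one level in the exponent.
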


\section{Reachability Preservers}
\label{sec:reachability preserver} 

In this section we will focus on finding a sparse pairwise reachability preserver. Recall that, for a directed graph $G=(V,E)$ and a set $\P$ of vertex-pairs, a $1$-fault tolerant reachability subgraph of $G$ is a subgraph $H$ that preserves the reachability information between all pairs of vertices in $\P$ under single edge failure. We denote such a subgraph by \ftrs($\P,G$), or simply \ftrs($\P$) if the underlying graph $G$ is clear from the context. 

\subsection{Upper Bound I}

Let us start by showing an existential upper-bound on the number of edges present in an optimum sized \ftrs. 

\begin{theorem}
\label{thm:existential-UB-1}
For any directed graph $G=(V,E)$ with $n$ vertices, $m$ edges, and a set $\P$ of vertex-pairs, there exists a \ftrs($\P,G$) (or simply \ftrs($\P$)) that contains $O(n+|\P|\sqrt{n})$ edges.
\end{theorem}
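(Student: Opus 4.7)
For any single pair $p = (s,t)$, a Menger-type argument shows that a minimal $\ftrs(\{p\}, G)$ is precisely the union of two maximally edge-disjoint $s$-$t$ paths $P_1(p), P_2(p)$ in $G$ --- or a single $s$-$t$ path if the $s$-$t$ edge-connectivity in $G$ is one. Hence $\bigcup_{p \in \P}(P_1(p) \cup P_2(p))$ is always a valid $\ftrs(\P, G)$; the task is to pick the $P_i(p)$'s so that this union has only $O(n + |\P|\sqrt{n})$ distinct edges, rather than the naive $O(n|\P|)$.

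My plan combines two ingredients. First, I would include the $1$-FT-SCC preserver of $G$ from Theorem~\ref{theorem:SCC_1}, contributing $O(n)$ edges and ensuring that after any single failure the SCCs of the graph are faithfully reproduced. This effectively collapses strongly connected components to super-nodes for the purposes of reachability preservation, reducing the remaining task to one on a DAG. Second, on the condensed DAG I would pick the $P_i(p)$'s from a canonical family: fix a global total order on $E$, take $P_1(p)$ to be the lex-smallest $s$-$t$ path in the DAG, and take $P_2(p)$ to be the lex-smallest $s$-$t$ path that is edge-disjoint from $P_1(p)$. The target is a Coppersmith--Elkin / Abboud--Bodwin style consistency property: any two canonical paths in the family meet and diverge at most $O(1)$ times along their shared vertices.

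With consistency in hand, a branching-count argument in the style of~\cite{AB18, CE06} bounds the total size of $\bigcup_{p, i} P_i(p)$ by at most $O(|\P|\sqrt{n})$ ``branching'' edges --- a double-counting over pairs bounds the total number of vertices at which two canonical paths diverge --- plus an $O(n)$ ``non-branching'' skeleton along which long path segments align. Combined with the $O(n)$ SCC preserver this gives $O(n + |\P|\sqrt{n})$ edges overall. The main obstacle I foresee is establishing consistency for the \emph{second} paths $\{P_2(p)\}_{p \in \P}$: the edge-disjointness constraint perturbs the naive lex-smallest order, so I expect to need a more delicate canonical choice --- for instance, constructing $P_1(p)$ and $P_2(p)$ jointly as a canonical flow-decomposition of a two-unit $s$-$t$ flow, so that \emph{both} families $\{P_1(p)\}$ and $\{P_2(p)\}$ individually obey the consistency property --- rather than reoptimizing $P_2$ against $P_1$ in isolation.
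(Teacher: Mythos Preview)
Your plan has a genuine gap at the ``reduce to a DAG'' step, and it is one the paper explicitly flags as a dead end. Including the $1$-FT-SCC preserver $H_{scc}$ does \emph{not} let you treat the problem as a DAG problem: $H_{scc}$ guarantees that after any single failure $e$ the SCCs of $H_{scc}-e$ and $G-e$ coincide, but it does \emph{not} guarantee one-directional reachability between vertices that fall into different SCCs of $G-e$. Concretely, if your lifted DAG path enters an SCC $C$ at $w$ and must exit at $u$, and the failed edge $e$ lies inside $C$ so that $w$ and $u$ land in different components $C'_i, C'_j$ of $C-e$, then $w$ may still reach $u$ in $G-e$ (through inter-$C'$ edges of $C$) while $H_{scc}-e$ provides no such path. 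Both of your two DAG paths can pass through the same super-node $C$, so both can be broken by a single intra-$C$ failure. Thus $H_{scc}$ plus the inter-SCC edges of your canonical DAG paths need not be a valid $\ftrs(\P,G)$ at all, independently of any size argument. This is exactly the obstruction the paper names in the overview: ``we cannot directly compress SCCs into supernodes as they can destroy $2$-connectivity structures, and thus the problem is not reducible to DAGs.''

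The paper's route is quite different from what you propose. It does include $H_{scc}$, but only as a tool inside a structural argument, not to collapse to a DAG. The proof is existential: it sets $H=H_{scc}\cup H_{opt}$ for an optimal $H_{opt}$, defines for each pair $p$ the \emph{critical-edge set} $C_p$ (edges whose removal breaks the $\ftrs(p)$ property of $H$), and partitions vertices into heavy/light by in-degree relative to $d_{\mathrm{avg}}$. The core is Lemma~4.5: $|E_h\cap C_p|\le 16|\P|/d_{\mathrm{avg}}$ for every $p$, proved via three interaction claims (4.6--4.8) about how two incoming critical edges at a common vertex on a path $Q\subseteq H_p$ must share an edge or force one of them into $H_{scc}$. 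Summing over $p$ and balancing gives $|E(H)|=O(n+|\P|\sqrt{n})$. There is no global consistency property for a canonical path system, and the obstacle you identify for $P_2(p)$ never arises. A minimal $\ftrs(\P,G)$ is then found constructively by greedy edge deletion.
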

As a corollary of the above theorem we get a \ftrs of linear (in number of vertices) size whenever number of pairs for which we have to preserve reachability is at most $O(\sqrt{n})$. Note, for each of $O(\sqrt{n})$ pairs if we use Theorem~\ref{theorem:ftrs} separately we get a \ftrs of size $O(n^{3/2})$. Hence our result improves the size of a \ftrs by a factor of $\sqrt{n}$. We devote this subsection to prove the above theorem.

Let $H_{scc}$ be an $1$-fault tolerant SCC preserver of $G$ as obtained by Theorem~\ref{theorem:SCC_1}, and $H_{opt}$ be an optimum sized subgraph of $G$ such that $H:=H_{scc}\cup H_{opt}$ is a \ftrs($\P$).

For a pair $p=(s,t) \in \P$ let $H_p$ denote an optimum (minimum) sized subgraph of $H$ that is a \ftrs($p$), i.e., after any single edge failure $e$, $t$ is reachable from $s$ in $H_p-e$ if and only if that is also the case in $G-e$. 
An optimum sized subgraph of $H$ that is a \ftrs($p$), may not be unique. However we arbitrarily choose one such subgraph, and throughout this section refer to that as $H_p$. The following proposition is immediate from the definition of optimum sized \ftrs.

\begin{proposition}
\label{prop:edge-disj-paths}
For any pair $p=(s,t) \in \P$, $H_p$ is union of two $s-t$ paths intersecting only at $(s,t)$-cut-edges and $(s,t)$-cut-vertices in $H$.
\end{proposition}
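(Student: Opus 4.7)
The plan is to exploit the minimality of $H_p$ to force a clean two-path decomposition, and I will proceed in three stages.

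First, I will establish a ``no dead edges'' property: every edge of $H_p$ lies on some $s$-$t$ path in $H_p$. Indeed, if an edge $e$ lay on no such path, then $H_p - \{e\}$ would admit exactly the same set of $s$-$t$ paths as $H_p$, so for any single edge failure $e'$, reachability of $t$ from $s$ in $H_p - \{e,e'\}$ would match that in $H_p - \{e'\}$; thus $H_p - \{e\}$ would remain a valid \ftrs($p$), contradicting minimality. A consequence is that $H_p$ must contain every $(s,t)$-cut-edge of $H$, since each such edge appears on every $s$-$t$ path. Moreover, the $(s,t)$-cut-edges and cut-vertices of $H_p$, $H$, and $G$ all coincide: if a single edge failure disconnects $t$ from $s$ in one of the three graphs, the \ftrs\ guarantee carries the disconnection to the other two.

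Second, I will show $H_p$ contains at most two edge-disjoint $s$-$t$ paths. If three such paths $P_1, P_2, P_3$ existed in $H_p$, then deleting the edges of $P_3$ from $H_p$ would still preserve $s$-$t$ reachability under every single edge failure, since for any failing edge at least one of $P_1$ or $P_2$ remains intact. Edge-disjointness makes $E(P_3)$ non-empty and disjoint from $E(P_1) \cup E(P_2)$, so this deletion strictly reduces the edge count, contradicting minimality.

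Third, I will assemble the decomposition $H_p = P_1 \cup P_2$. The $(s,t)$-cut-vertices and cut-edges of $H$ form an ordered sequence of ``pinch points'' $s = u_0, u_1, \ldots, u_\ell = t$ encountered along every $s$-$t$ path. These pinch points partition the $s$-$t$ structure of $H_p$ into consecutive segments, each bounded by two adjacent pinch points. A segment consisting of a single cut-edge is traversed identically by both paths. For a segment between two consecutive cut-vertices not joined by a cut-edge, Menger's theorem for directed graphs (edge version) applied to the segment endpoints yields two edge-disjoint paths through the segment in $H_p$, and the no-dead-edges property combined with the minimality argument of the second stage forces $H_p$ restricted to the segment to be exactly the union of these two paths. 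Concatenating across segments, contributing one disjoint path per segment to $P_1$ and the other to $P_2$, produces the desired decomposition in which $P_1 \cap P_2$ is confined to the cut-edges and cut-vertices.

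The main obstacle I anticipate lies in the third stage: classical block-cut-tree theory is undirected, so here I must carefully adapt the pinch-point decomposition to directed graphs with a designated source and sink, verifying segment-wise that directed-edge Menger applies and that no additional edges of $H_p$ can sit inside a segment beyond the two extracted paths. I will also need to handle the degenerate case where $t$ is unreachable from $s$ in $G$, in which $H_p$ is empty and the claim holds vacuously.
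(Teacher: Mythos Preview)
The paper itself gives no proof of this proposition; it simply declares it ``immediate from the definition of optimum sized \ftrs.'' Your three-stage outline is a reasonable way to unpack that remark, and stages~1 and~2 are fine. However, two points need attention.

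First, your assertion that the $(s,t)$-cut-\emph{vertices} of $H_p$, $H$, and $G$ all coincide is false (only the cut-\emph{edges} coincide, via the \ftrs\ property, which is about edge failures). For a counterexample, let $H$ contain two edge-disjoint $s$-$v$ paths, two edge-disjoint $v$-$t$ paths, and one very long $s$-$t$ path avoiding $v$; then the optimal $H_p$ is the short $8$-edge graph through $v$, making $v$ a cut-vertex of $H_p$ but not of $H$. (In fact this shows the proposition as literally stated---``cut-vertices in $H$''---is slightly imprecise; the paper's own later use in Observation~\ref{obs:one-cut-edge} works with cut-vertices of $H_p$, which is what you should prove.)

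Second, in stage~3 you invoke the \emph{edge} version of Menger between consecutive cut-vertices. That yields two edge-disjoint segment paths, but they may share an internal vertex $w$, and your concatenated $P_1,P_2$ would then meet at $w$, which need not be a cut-vertex. The fix is to use the \emph{vertex} version of Menger: between consecutive cut-vertices $u_{i-1},u_i$ of $H_p$ not joined by a cut-edge there is no internal $(s,t)$-cut-vertex (by consecutiveness) and no $(s,t)$-cut-edge (since its endpoints would be cut-vertices), so vertex-Menger gives two internally vertex-disjoint $u_{i-1}$--$u_i$ paths in $H_p$. Their union over all segments is a valid \ftrs$(p)$ contained in $H_p$, and minimality then forces equality; concatenating now genuinely yields two $s$-$t$ paths meeting only at cut-edges and cut-vertices of $H_p$.
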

It directly follows from the above proposition that, any \emph{minimal} $(s,t)$-cut\footnote{A set of edges is said to be an \emph{minimal} $(s,t)$-cut if and only if it is a valid $(s,t)$-cut and any of its proper subset is not an $(s,t)$-cut.} in $H_p$ is of size at most $2$. The following property of any (simple) $s-t$ path in $H_p$ will be useful in studying the structure of $H$ (especially in proving Claim~\ref{claim:edge-share}).
\begin{observation}
\label{obs:one-cut-edge}
For any pair $p=(s,t) \in \P$ consider a (simple) $s-t$ path $Q$ in $H_p$. Let $C$ be any minimal $(s,t)$-cut in $H_p$. Then $Q$ takes exactly one edge from the set $C$.
\end{observation}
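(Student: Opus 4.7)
The plan is to exploit the structure from Proposition~\ref{prop:edge-disj-paths}: write $H_p = \pi_1 \cup \pi_2$ where $\pi_1$ and $\pi_2$ are $s$-$t$ paths intersecting only at $(s,t)$-cut-edges and $(s,t)$-cut-vertices of $H$ (which remain $(s,t)$-cut-edges/vertices of $H_p$ since $H_p \subseteq H$). I will decompose $H_p$ into ``blocks'' along these cut points. Listing the cut-vertices together with the endpoints of cut-edges in the order they appear along $\pi_1$ as $s = u_0, u_1, \ldots, u_k = t$---they appear in the same order along $\pi_2$ by the proposition---for each $0 \le i < k$ let the block $B_i$ be the union of the subpaths of $\pi_1$ and $\pi_2$ from $u_i$ to $u_{i+1}$. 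Each $B_i$ is either a shared cut-edge $(u_i, u_{i+1})$ or a pair of internally vertex-disjoint $u_i$-to-$u_{i+1}$ paths. Crucially, because every edge of $H_p$ lies in $\pi_1 \cup \pi_2$, no edge of $H_p$ joins distinct blocks or joins the interiors of the two internal sub-paths within a single block.

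Next I will show that any minimal $(s,t)$-cut $C$ of $H_p$ is contained in the edges of a single block. By the block decomposition, an $(s,t)$-cut of $H_p$ must disconnect $u_j$ from $u_{j+1}$ inside some $B_j$ using only its edges in $B_j$---otherwise, concatenating $u_j$-to-$u_{j+1}$ sub-paths in $H_p - C$ across all blocks would yield an $s$-$t$ path avoiding $C$. Hence if $C$ contained an edge $e$ outside every block it severs, $C \setminus \{e\}$ would still be a cut, violating minimality. So $C \subseteq E(B_i)$ for the unique block $B_i$ it severs, and by minimality $C$ is either a single cut-edge or consists of exactly one edge from each of the two internal sub-paths of a two-path block; in particular $|C| \le 2$.

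Finally, let $Q$ be a simple $s$-$t$ path in $H_p$ and $B_i$ the block containing $C$. Since every $u_j$ is an $(s,t)$-cut-vertex of $H_p$, $Q$ visits $u_0, \ldots, u_k$ in order, traversing each $B_j$ by a simple $u_j$-to-$u_{j+1}$ sub-path $Q_j$. For $j \ne i$, $Q_j$ uses no edge of $C$. For $j = i$: if $B_i$ is a single edge $e$ then $Q_i = e$ and $Q$ hits $C = \{e\}$ in exactly one edge; otherwise $B_i$ has two internally vertex-disjoint sub-paths, and because no $H_p$-edge joins their interiors, $Q_i$ lies entirely in one of them and uses exactly one of the two edges of $C$. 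Either way, $Q$ and $C$ share exactly one edge, as claimed. The main obstacle I anticipate is establishing the block decomposition cleanly---especially the ``no edge bridges the two internal sub-paths'' property---since once these are in hand, the rest reduces to routine tracking of simple paths through the chain of blocks.
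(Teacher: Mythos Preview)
Your proposal is correct and follows essentially the same approach as the paper: both arguments rest on decomposing $H_p$ along its $(s,t)$-cut-vertices into ``blocks'' that are either a single cut-edge or two internally disjoint sub-paths, and then observe that any simple $s$--$t$ path traverses each block along exactly one of these sub-paths. The only cosmetic difference is in the final step---you localize the minimal cut $C$ to a single block and read off the conclusion, whereas the paper builds from $Q$ a complementary path $\widetilde Q$ (swapping sub-paths in each block) and argues that $Q$ and $\widetilde Q$ each pick up one edge of $C$; these are two phrasings of the same observation.
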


\begin{proof}
The result trivially holds for a minimal $(s,t)$-cut of size one.
Moreover, by minimality of $H_p$, there cannot exists $(s,t)$-cut of size three or larger in $H_p$.
So we are left to consider minimal cuts of size two.

Consider a simple $s-t$ path $Q$ in $H_p$, and let $C=(e,e')$ be a minimal $(s,t)$-cut of size two in $H_p$.
Let $Z_p=(v_0=s,v_1,\ldots,v_\ell=t)$ be the $(s,t)$-cut-vertices in $H_p$, in the order they appear on~$Q$. Let ${\cal I}\subseteq [1,\ell]$ be those indices for which $(v_{i-1},v_i)$ is a cut-edge in $H_p$, and $\cal J$ be $[1,\ell]\setminus {\cal I}$. So for each $i\in {\cal J}$, there exists $2$-edge-disjoint paths from $v_{i-1}$ to $v_i$, in $H_p$; let these be respectively denoted by $R^0_i$ and $R^1_i$. By definition of cut-vertices, it is easy to observe that none of the internal vertices of $R^{0}_i$ and $R^{1}_i$ can lie in $Z_p$, for $i\in {\cal J}$.

Obtain $\widetilde Q$ from $Q$ by replacing $R^{z_i}_i$ with $R^{1-z_i}_i$, for $z_i\in \{0,1\}$ and $i\in {\cal J}$. So $Q$ and $\widetilde Q$, both lie in $H_p$, and intersect only at $(s,t)$-cut-edges and cut-vertices. Now for the minimal $(s,t)$-cut $C=\{e,e'\}$, $Q$ will contain one of the cut-edges, say $e$, and $\widetilde Q$ will contain the other cut edge, i.e. $e'$.
This shows that any simple $s-t$ path in $H_p$ contains exactly one of the edges of a minimal $(s,t)$-cut of size two.
\end{proof}

For each pair $p\in \P$ we define \emph{critical-edge-set} for $p$, denoted by $C_p$, as the set of all edges $e$ in $H$ such that $H- e$ is not a \ftrs($p$). Sometimes we will also use the notation $C_p$ to denote the underlying subgraph formed by edges present in $C_p$. Note, all the edges of $H_{opt}$ must be in $\cup_{p \in \P} C_p$. Otherwise, if there exists an edge $e\in H_{opt}$ that is not in any of $C_p$'s then we can remove $e$ from $H$ while preserving $1$-fault tolerant reachability for all pairs $p \in \P$, leading to a contradiction on the optimality of the size of $H_{opt}$. So we can deduce the following observation.
\begin{observation}
\label{obs:critical}
For any edge $e \in H$, either $e \in C_p$ for some pair $p \in \P$, or $e \in H_{scc}$.
\end{observation}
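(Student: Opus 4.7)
The plan is to prove the contrapositive by a minimality argument: given any edge $e \in H$ with $e \notin H_{scc}$, I must produce a pair $p \in \P$ for which $e \in C_p$. The first step is immediate, namely that since $H$ was defined as $H_{scc} \cup H_{opt}$, any edge $e \in H \setminus H_{scc}$ necessarily lies in $H_{opt}$. Hence the task reduces to showing that every edge of $H_{opt}$ belongs to $C_p$ for at least one $p \in \P$.

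For the second step, I would suppose towards contradiction that some $e \in H_{opt}$ satisfies $e \notin C_p$ for every $p \in \P$. By the very definition of the critical-edge-set, $e \notin C_p$ means that $H - \{e\}$ is still an \ftrs($p$); if this holds for every $p \in \P$, then $H - \{e\}$ is an \ftrs($\P$). Writing $H - \{e\} = H_{scc} \cup (H_{opt} \setminus \{e\})$, we obtain a strictly smaller subgraph $H_{opt}' := H_{opt} \setminus \{e\}$ with the property that $H_{scc} \cup H_{opt}'$ is still an \ftrs($\P$). This contradicts the choice of $H_{opt}$ as a minimum-size subgraph with this property, and so $e$ must belong to $C_p$ for at least one $p \in \P$.

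There is no genuine obstacle here; the observation is essentially a bookkeeping consequence of how $H_{opt}$ was chosen. The only minor point worth noting is that the disjunction in the statement is not exclusive: if an edge of $H$ happens to lie in both $H_{scc}$ and $H_{opt}$, the first alternative ``$e \in H_{scc}$'' already holds, so the observation is satisfied trivially for such edges, and the argument above is only needed for the edges of $H_{opt} \setminus H_{scc}$.
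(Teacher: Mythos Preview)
Your proposal is correct and follows essentially the same minimality argument the paper gives just before stating the observation: if an edge $e\in H_{opt}$ lay in no $C_p$, then $H-\{e\}$ would still be an \ftrs($\P$), contradicting the optimality of $H_{opt}$. Your explicit handling of the overlap $H_{opt}\cap H_{scc}$ is a nice touch; the paper leaves this implicit (in fact optimality of $H_{opt}$ already forces $H_{opt}\cap H_{scc}=\emptyset$, since otherwise one could drop a shared edge from $H_{opt}$ without changing $H$).
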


Let $d_{avg}$ be the average in-degree of $H$, i.e., $d_{avg}=|E(H)|/n$. Recall that we use $E(H)$ to denote the set of edges in the subgraph $H$. Now partition the set of vertices $V$ into two subsets:
\begin{enumerate}
\item The set of \emph{light} vertices $V_{\ell}$ containing all the vertices whose in-degree in $H_{opt}$ is (strictly) less than $d_{avg}/2$; and
\item The set of \emph{heavy} vertices $V_{h}$ containing all the vertices whose in-degree in $H_{opt}$ is at least $d_{avg}/2$.
\end{enumerate}
Let $E_{\ell}$ and $E_h$ respectively be the set of incoming edges to the vertices in $V_{\ell}$ and $V_h$, in graph $H_{opt}$. Clearly, $|E_{\ell}| < (\frac{d_{avg}}{2}) n=|E(H)|/2$.

\begin{lemma}
\label{lem:bound-heavy}
If $d_{avg} > 12$, then for each $p\in \P$, $|E_h \cap C_p| \le 16|\P|/d_{avg}$.
\end{lemma}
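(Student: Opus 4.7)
The plan is a two-step argument combining the two-path structure of $H_p$ (from Proposition~\ref{prop:edge-disj-paths}) with a double counting over pair-vertex incidences.

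First, I will observe that $C_p\subseteq E(H_p)$. Indeed, if some $e\in C_p$ lay outside $H_p$, then $H_p\subseteq H-e$ would still be an \ftrs$(p)$, contradicting the definition of $C_p$. By Proposition~\ref{prop:edge-disj-paths}, $H_p$ is the union of two $s$--$t$ paths meeting only at $(s,t)$-cut-edges/vertices of $H$, so any vertex $v$ has at most two incoming edges belonging to $H_p$ (one from each of the two paths, when $v$ is on both). Hence, writing $V_h(p):=\{v\in V_h:\text{some }(u,v)\in E_h\cap C_p\}$, we obtain the basic inequality $|E_h\cap C_p|\le 2|V_h(p)|$. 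It therefore suffices to prove $|V_h(p)|\le 8|\P|/d_{avg}$.

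Second, I will set up a double counting on ``pair-vertex incidences''. Fix any $v\in V_h(p)$; by heaviness, $\deg^{in}_{H_{opt}}(v)\ge d_{avg}/2$, of which at most two can be in $C_p$ (by the two-path argument above). By Observation~\ref{obs:critical}, each of the remaining $\ge d_{avg}/2-2$ incoming edges of $v$ in $H_{opt}$ lies in $C_{p'}$ for some $p'\ne p$. Applying the two-path bound now to $H_{p'}$, a single $p'$ can account for at most two of these in-edges. Thus $v$ is ``shared'' with at least $(d_{avg}/2-2)/2\ge d_{avg}/8$ distinct pairs $p'\ne p$ (using $d_{avg}>12$). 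Summing over $v\in V_h(p)$ gives
\[
\#\bigl\{(v,p'):v\in V_h(p),\ p'\ne p,\ v\text{ has an in-edge in }C_{p'}\cap E_h\bigr\}\ \ge\ |V_h(p)|\cdot \tfrac{d_{avg}}{8}.
\]

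The third and most delicate step is the matching upper bound on the incidence count: I expect to argue it is at most $|\P|$. The intuition is that a single pair $p'$ should on average be able to ``visit'' only a bounded number of vertices of $V_h(p)$ through its critical set $C_{p'}\cap E_h$, because repeatedly crossing $V_h(p)$ along the two-path structure of $H_{p'}$ would produce local swapping opportunities that contradict the global optimality of $H_{opt}$ within $H$. This is the main obstacle: a careful structural analysis of how the two paths of $H_{p'}$ interact with the heavy vertices on the two paths of $H_p$ (perhaps invoking $H_{scc}$ to reroute within SCCs of $H-e$ for a hypothetical witness failure $e$) should yield the required $O(1)$-per-pair contribution. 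Once this is in place, combining with the lower bound above gives $|V_h(p)|\cdot d_{avg}/8\le |\P|$, i.e., $|V_h(p)|\le 8|\P|/d_{avg}$, and therefore $|E_h\cap C_p|\le 16|\P|/d_{avg}$ as claimed.
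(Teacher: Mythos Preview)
Your first two steps are correct: $C_p\subseteq E(H_p)$ and the in--degree bound in $H_p$ give $|E_h\cap C_p|\le 2|V_h(p)|$, and the heaviness of each $v\in V_h(p)$ together with Observation~\ref{obs:critical} and the two--path bound for every $H_{p'}$ indeed forces each such $v$ to be ``shared'' with $\ge d_{avg}/8$ other pairs. The problem is Step~3, which you explicitly flag as ``the main obstacle'' and then do not prove. As stated, your target is that the incidence count
\[
\sum_{p'\ne p}\bigl|\{v\in V_h(p):\ v\text{ has an in-edge in }C_{p'}\cap E_h\}\bigr|
\]
is at most $|\P|$, i.e.\ an $O(1)$ contribution per pair $p'$. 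There is no evident reason this should hold: the two paths of $H_{p'}$ may pass through many vertices of $V_h(p)$ along $Q$, and nothing you have said prevents many of those in-edges from being critical for $p'$. The ``swapping'' intuition you sketch does not obviously yield a per--$p'$ bound.

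The paper does \emph{not} bound this incidence sum. It argues by contradiction and pigeonhole, and the key structural statements it proves are per--\emph{vertex}, not per--pair. Concretely, assuming $|E_h\cap C_p|>16|\P|/d_{avg}$, one of the two paths $Q$ of $H_p$ picks up $\ge 8|\P|/d_{avg}$ heavy endpoints $Q_h$. One then isolates a set $A$ of in-edges $(u,v)\in H_{opt}$ to $Q_h$ that are reachable from some source $s'$ by a path internally disjoint from $Q$; Proposition~\ref{prop:edge-disj-paths} bounds $|A|\le 2|\P|$. A counting step now produces a single vertex $v\in Q_h$ with at least three in-edges in $Q_e\setminus A$. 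The real work is then two local structural lemmas (Claims~\ref{claim:bound-incoming-forward} and~\ref{claim:bound-incoming-back}): for two such in-edges $h_1,h_2$ at $v$, the ``last point of contact with $Q$'' along their respective $H_{p_i}$-paths cannot both lie before $v$ (a cut argument via Claim~\ref{claim:edge-share}), and if both lie after $v$ then one of $h_1,h_2$ can be rerouted through $H_{scc}$ and hence cannot be in $H_{opt}$. Either way, three such edges at a single $v$ is impossible, giving the contradiction. This is precisely the ``careful structural analysis'' you allude to, but its output is a bound of $2$ on in-edges from $Q_e\setminus A$ at each vertex of $Q_h$, not an $O(1)$ bound on how many vertices of $V_h(p)$ a given $p'$ can touch. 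So to finish, you would have to either supply a genuinely new per--pair argument for Step~3, or abandon the double-counting frame and prove the per--vertex claims above.
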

 We defer the proof of the above lemma to the end of this section. Now assuming the above we show the desired bound on $|E(H)|$, as stated in Theorem~\ref{thm:existential-UB-1}. First of all, if $d_{avg} \le 12$ then $E(H)$ is of size $O(n)$. So from now on we assume that $d_{avg} > 12$. Note that $|E(H)|=|E(H_{opt})|+|E(H_{scc})|$. By the result of the previous section (see Theorem~\ref{theorem:SCC_1}) we know that  the graph $H_{scc}$ has at most $O(n)$ edges. 
 Observe that,
 \begin{align}
 \label{eqn1}
 &|E(H_{opt})| = |E_{\ell}| + |E_h| \le \frac{|E(H_{opt})|}{2} + |E_h| \nonumber\\
 \Rightarrow &|E(H)| \le 2(|E_h|+|E(H_{scc})|).
 \end{align}
 We bound the size of $E_h$ as follows,
 \begin{align}
 \label{eqn2}
 |E_h| &\le \sum_{p \in \P} |E_h \cap C_p| \qquad \text{(since for all $e \in H_{opt}$, $e\in C_p$ for some $p \in \P$ by Observation~\ref{obs:critical})}\nonumber\\
 & \le \sum_{p\in \P}\frac{16|\P|}{d_{avg}} \qquad \quad \text{(by Lemma~\ref{lem:bound-heavy})} \nonumber\\
 &= \frac{16|\P|^2n}{|E(H)|}.
 \end{align}
 By combining inequalities~(\ref{eqn1}) and~(\ref{eqn2}) we get that $|E(H)| \le O(n+|\P|\sqrt{n}).$

Now it only remains to prove Lemma~\ref{lem:bound-heavy}. Before going into the proof we would first like to make a few important observations regarding the structure of $H$, which will eventually help us to bound its size. Consider a (simple) path $Q$ from $s$ to $t$ where $(s,t)=p \in \P$, such that all the edges on $Q$ are in $H_p$. Since $Q$ is a simple path it gives a natural ordering among the vertices present on it. Let us denote this ordering relationship by $<_Q$ ($\le_Q$ and $>_Q$).

\begin{claim}
\label{claim:edge-share}
Consider a pair $p=(s,t) \in \P$, and let $Q$ be a (simple) $s-t$ path in $H_p$. For an edge $e = (u,v) \in C_p$ on $Q$, suppose there are two vertices $u_1,u_2 <_Q v$, and let $Q_1$ and $Q_2$ be (any arbitrary) $u_1-v$ and $u_2-v$ path respectively, in $H- \{e\}$. Then $Q_1$ and $Q_2$ must share an edge.
\end{claim}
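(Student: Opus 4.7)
The plan is to argue by contradiction: assuming $Q_1$ and $Q_2$ are edge-disjoint, I will exhibit a single edge that both must contain. The starting point is to unpack what $e \in C_p$ means. Since $H-\{e\}$ is not a $\ftrs$ for $p$, there exists a witness edge $e^*$ such that $H-\{e,e^*\}$ has no $s$-$t$ path, while $H-\{e^*\}$ does (the latter follows from $H$ being a $\ftrs$ together with $G-\{e^*\}$ having an $s$-$t$ path). Also, $H-\{e\}$ has an $s$-$t$ path, witnessed by the walk $Q[s,u_1]\circ Q_1\circ Q[v,t]$. Hence $\{e,e^*\}$ is a minimal $s$-$t$ cut of $H$: neither singleton disconnects $s$ from $t$.

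Next I would argue that $e^* \in H_p$ and that $e^* \notin Q$. For $e^* \in H_p$: if $e^* \notin H_p$, then $H_p \subseteq H-\{e^*\}$, so the two $s$-$t$ paths whose union forms $H_p$ (by Proposition~\ref{prop:edge-disj-paths}) both lie in $H-\{e^*\}$; but every $s$-$t$ path in $H-\{e^*\}$ uses $e$, making $e$ a cut-edge of $H_p$. Then $H_p - \{e\}$ disconnects $s$ from $t$, contradicting $H_p$ being a $\ftrs$ for $p$ (since $G-\{e\}$ has an $s$-$t$ path). So $e^* \in H_p$. Together with $e \in H_p$ (immediate from $e \in Q \subseteq H_p$), the pair $\{e,e^*\}$ is an $(s,t)$-cut of $H_p$, and it is minimal because neither $\{e\}$ nor $\{e^*\}$ alone cuts $s$ from $t$ in $H_p$ (by the $\ftrs$ property of $H_p$ combined with the $s$-$t$ connectivity of $G-\{e\}$ and of $G-\{e^*\}$). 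Observation~\ref{obs:one-cut-edge} now forces $Q$ to contain exactly one edge of $\{e,e^*\}$, which must be $e$; therefore $e^* \notin Q$.

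To conclude, consider the partition $A = \{x \in V : x \text{ is reachable from } s \text{ in } H-\{e,e^*\}\}$ and $B = V \setminus A$. Then $s \in A$, $t \in B$, and every edge from $A$ to $B$ in $H$ lies in $\{e,e^*\}$. Using that neither $\{e\}$ nor $\{e^*\}$ alone is a cut of $H$, both $e$ and $e^*$ must actually be $A$-to-$B$ edges; in particular $u \in A$ and $v \in B$. Since $Q$ is a simple path from $s \in A$ to $t \in B$ that uses $e$ but not $e^*$, it crosses the cut exactly once, via $e$; consequently every vertex of $Q$ strictly preceding $v$ lies in $A$, so $u_1, u_2 \in A$. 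Now each $Q_i$ ($i=1,2$) is a simple path from $A$ to $B$ inside $H-\{e\}$, and the only surviving $A$-to-$B$ edge of $H-\{e\}$ is $e^*$, so both $Q_1$ and $Q_2$ must traverse $e^*$ — contradicting edge-disjointness. The most delicate point of this plan is pinning down $u_1, u_2 \in A$: without Observation~\ref{obs:one-cut-edge} ensuring $e^* \notin Q$, the path $Q$ could in principle cross $(A,B)$ more than once, which could place some $u_i$ in $B$, from where a $u_i$-to-$v$ path in $H-\{e\}$ would be free to stay entirely inside $B$ and bypass $e^*$ altogether.
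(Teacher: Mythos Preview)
Your proof is correct and follows essentially the same approach as the paper's: obtain a witness edge $e^*$ (the paper calls it $f$) so that $\{e,e^*\}$ is a minimal $(s,t)$-cut in $H_p$, invoke Observation~\ref{obs:one-cut-edge} to ensure $Q$ crosses the induced $(A,B)$ partition exactly once at $e$, deduce $u_1,u_2\in A$, and conclude that both $Q_1$ and $Q_2$ must use $e^*$. Your write-up is in fact more careful than the paper's in explicitly verifying $e^*\in H_p$ (which the paper tacitly assumes when asserting minimality of the cut in $H_p$), and the contradiction wrapper is harmless but unnecessary since you directly exhibit the shared edge $e^*$.
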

\begin{proof}
Since $e \in C_p$ if we exclude $e$ from $H$, the remaining subgraph $H'=H- \{e\}$ will not be a \ftrs($p$). Observe that $e$ cannot be an $(s,t)$-cut-edge in $H$ as it violates the existence of paths $Q_1$ and $Q_2$ in $H- \{e\}$. So there must exist an edge $f=(u_f,v_f)$ on failure of which there is an $s-t$ path, in $H - \{f\}$, but there is no such path in $H'- \{f\}=H-\{e,f\}$. Since there is no $s-t$ path in $H-\{e,f\}$, $C=\{e,f\}$ must be an $(s,t)$-cut in $H$ (and also in $H_p$). Further, since $e$ is not an $(s,t)$-cut-edge in $H$, $C$ is a minimal $(s,t)$-cut in $H_p$. Let $(A,B)$ be a partition of $V$ induced by the cut $C$ in $H$ such that $s\in A$ and $t\in B$. As $Q$ is a simple $s-t$ path in $H_p$, by Observation~\ref{obs:one-cut-edge} it passes through cut $C$ only once, thereby implying $u_1,u_2\in A$. Thus, the path $Q_i$ from $u_i$ to $v$ must pass through an edge in~$C$, for $i=1,2$. As $Q_1$ and $Q_2$ lie in $H- \{e\}$, they both must pass through the edge $f$, thereby proving the claim.
\end{proof}

Now using the above claim we prove the following.


\begin{claim}
\label{claim:bound-incoming-forward}
For any pair $p=(s,t)\in \P$, let $Q$ be a (simple) $s-t$ path in $H_p$. For a vertex $v$ on $Q$, suppose there are two incoming edges $h_1$ and $h_2$ incident on $v$, which are not part of $Q$. Further, for $i=1,2$, let $p_i=(s_i,t_i)$ be a pair satisfying $h_i\in C_{p_i}$, and let $Q_i\in H_{p_i}$ be an $s_i-t_i$ path containing $h_i$. If $v_i~(\neq v)$ is the last vertex in $Q_i[s_i,v]$ that also lies on $Q$, for $i=1,2$, then we cannot simultaneously have  $v_1<_Q v$ as well as $v_2<_Q v$.
\end{claim}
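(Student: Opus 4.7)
My plan is to argue by contradiction. Suppose both $v_1<_Q v$ and $v_2<_Q v$; without loss of generality, $v_1\le_Q v_2$. Let $R_i:=Q_i[v_i,v]$ be the detour from $v_i$ to $v$ through $h_i$. By the choice of $v_i$ as the last vertex of $Q_i[s_i,v]\setminus\{v\}$ lying on $Q$, every internal vertex of $R_i$ is off $Q$; together with the hypothesis $h_i\notin E(Q)$, this yields $E(R_i)\cap E(Q)=\emptyset$.

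Next, for each $i\in\{1,2\}$ I extract a cut structure as in the proof of Claim~\ref{claim:edge-share}: since $h_i\in C_{p_i}$, there is an edge $f_i$ such that $\{h_i,f_i\}$ is a minimal $(s_i,t_i)$-cut in $H$, with bipartition $(A_i,B_i)$ satisfying $s_i,a_i\in A_i$ and $t_i,v\in B_i$. Observation~\ref{obs:one-cut-edge} then forces $Q_i[s_i,a_i]\subseteq A_i$, and so $v_i\in A_i$. Because $Q[v_i,v]$ runs from $A_i$ to $B_i$ and avoids $h_i$, the second cut edge $f_i$ must lie on $Q[v_i,v]\subseteq E(Q)$, so in particular $f_i\notin E(R_1)\cup E(R_2)$.

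I then case on where $v_1$ sits in $(A_2,B_2)$ and $v_2$ sits in $(A_1,B_1)$. If $v_1\in A_2$, then $R_1\subseteq H-f_2$ is a $v_1$--$v$ path from $A_2$ to $B_2$; the only cut edge of $(A_2,B_2)$ surviving in $H-f_2$ is $h_2=(a_2,v)$, so $R_1$ must contain $h_2$, which is impossible because $R_1$ is simple, ends at $v$ via $h_1\ne h_2$, and an earlier occurrence of $h_2$ would force $v$ to appear twice on $R_1$. The symmetric argument rules out $v_2\in A_1$.

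The remaining and subtlest subcase is $v_1\in B_2$ and $v_2\in B_1$. Applying the same cut-avoidance analysis to both bipartitions now confines $R_1\subseteq(A_1\cap B_2)\cup\{v\}$ and $R_2\subseteq(A_2\cap B_1)\cup\{v\}$, so $R_1$ and $R_2$ meet only at $v$ and are edge-disjoint, while the subpath $Q[v_1,v]$ (which traverses $v_2$) is edge-disjoint from both. To finish, my plan is to invoke Claim~\ref{claim:edge-share} with the pair $p$, path $Q$, and the $Q$-predecessor edge $e_0=(u,v)$ of $v$: once I argue $e_0\in C_p$---exploiting the triple of edge-disjoint paths $Q[v_1,v],R_1,R_2$ arriving at $v$ together with the minimality of $H_p$---the claim forces the two paths $R_1$ and $R_2$ (both lying in $H-e_0$) to share an edge, contradicting the edge-disjointness just established. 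I expect the main obstacle to be this last step of verifying $e_0\in C_p$ in the subcase $v_1\in B_2,\,v_2\in B_1$.
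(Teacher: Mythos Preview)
Your setup is sound, and the argument you give for the case $v_1\in A_2$ is exactly the contradiction that finishes the proof—in fact this is the \emph{only} case that ever arises, and you have already established the facts needed to see this. You proved $f_2\in Q[v_2,v]$ and $v_2\in A_2$. Since $v_1\le_Q v_2$, the segment $Q[v_1,v_2]$ contains neither $f_2$ (which lies strictly inside $Q[v_2,v]$) nor $h_2$ (which is not on $Q$ at all), so $Q[v_1,v_2]$ never crosses the cut $\{f_2,h_2\}$. Hence $v_1\in A_2$ automatically, your first-case argument applies, and you are done.

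This is precisely the paper's approach: it extracts only the single cut $\{f,h_2\}$ coming from $h_2\in C_{p_2}$, pins $f$ onto $Q[v_2,v]$, and derives the contradiction (its citation of Claim~\ref{claim:edge-share} is a pointer to the same cut-crossing technique rather than a black-box application). Your detour through a second cut $(A_1,B_1)$ and a four-way case split is unnecessary; the ``hard'' subcase $v_1\in B_2,\ v_2\in B_1$ is vacuous. In particular the obstacle you anticipate—showing the $Q$-edge $e_0$ into $v$ lies in $C_p$—never needs to be faced, which is fortunate, since there is no reason for $e_0\in C_p$ to hold in general and that route would likely not close.
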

\begin{proof}
Let us assume on contrary that $v_1,v_2<_Q v$. Without loss of generality, we can assume $v_1 \leq_Q v_2$. 
Let $f$ be an edge on whose failure, any $s_2-t_2$ path must pass through the edge $h_2$. Observe $h_2$ cannot be an $(s_2,t_2)$ cut-edge, as $Q[v_2,v]$ is a $v_2-v$ path avoiding $h_2$, thus $C=\{f,h_2\}$ is a minimal $(s_2,t_2)$ cut in $H$. Note, $f$ must be on $Q[v_2,v]$. 

By Claim~\ref{claim:edge-share}, $Q_1[v_1,v]$ and $Q_2[v_2,v]$ must share an edge. Let $e = (x, y)$ be the last such shared edge. Now even if we exclude $h_2$ from $H$ we still get the following $s_2-t_2$ path: $Q_2[s_2,y] \circ Q_1[y,v] \circ Q_2[v,t_2]$ in $H- C$. Since, $C=\{f,h_2\}$ is an $(s_2,t_2)$ cut, this contradicts the assumption that $v_1,v_2<_Q v$.
\end{proof}

\begin{claim}
\label{claim:bound-incoming-back}
For any pair $p=(s,t)\in \P$, let $Q$ be a (simple) $s-t$ path in $H_p$. For a vertex $v$ on $Q$, suppose there are two incoming edges $h_1$ and $h_2$ incident on $v$, which are not part of $Q$. Consider the sets $B_{h_i}:=\{q \in \P | h_i \in C_{q}\}$, for $i\in\{1,2\}$. Further, for each $i\in\{1,2\}$, suppose for every pair $p'=(s',t') \in B_{h_i}$, the following holds: 
\begin{enumerate}
\item Neither there is an $s'-v$ path in $H_{p'}$ with $h_i$ as its last edge, that is internally vertex-disjoint with $Q$,
\item Nor there exists a vertex, say $v_i <_Q v$, with a $v_i-v$ path in $H_{p'}$ with $h_i$ as its last edge that is edge-disjoint with $Q$.
\end{enumerate}
Then either $h_1 \not \in H_{opt}$ or $h_2 \not \in H_{opt}$. (Recall, $H_{opt}=H- H_{scc}$.)
\end{claim}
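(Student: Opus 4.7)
We argue by contradiction. Suppose both $h_1,h_2\in H_{opt}$. By Observation~\ref{obs:critical} (and the WLOG reduction $H_{scc}\cap H_{opt}=\emptyset$, which we may always enforce by discarding overlap from $H_{opt}$), each set $B_{h_i}$ is nonempty; fix a pair $p_i'=(s_i',t_i')\in B_{h_i}$ together with an $s_i'$-to-$t_i'$ path $Q_i'\subseteq H_{p_i'}$ containing the edge $h_i$ (such a path exists since otherwise $H_{p_i'}\subseteq H-h_i$ would already be an $\ftrs(p_i')$, contradicting $h_i\in C_{p_i'}$).

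The first step is to extract the ``return vertex'' $u_i\in V(Q)$ forced by the hypotheses. Consider the prefix $Q_i'[s_i',v]$. Condition~(1) forbids it from being internally vertex-disjoint from $Q$, so some internal vertex of $Q_i'[s_i',v]$ lies on $Q$; let $u_i$ be the latest such vertex along $Q_i'$. Then $Q_i'[u_i,v]$ is internally vertex-disjoint from $Q$ and, because $h_i\notin Q$, is in fact edge-disjoint from $Q$. Condition~(2) now excludes $u_i<_Q v$, and $u_i\neq v$ since $u_i$ is internal to $Q_i'[s_i',v]$, so $u_i>_Q v$ for both $i\in\{1,2\}$. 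Consequently $Q[v,u_i]\circ Q_i'[u_i,v]$ is a directed cycle through $v$, placing $v$ and $u_i$ in a common SCC of $G$. By Theorem~\ref{theorem:SCC_1}, $v$ and $u_i$ remain strongly connected in $H_{scc}-f$ whenever they are strongly connected in $G-f$, along a path that avoids $\{h_1,h_2\}$.

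The plan is then to derive a contradiction to the optimality of $H_{opt}$ by showing that $H_{scc}\cup(H_{opt}-h_1)$ is still a valid $\ftrs(\P)$. Only pairs in $B_{h_1}$ can possibly be affected; for such $p_1'$ and any failing edge $f$ with $s_1'$ still reaching $t_1'$ in $G-f$, the candidate detour is
\[
Q_1'[s_1',u_1]\circ R\circ Q_1'[v,t_1'],
\]
where $R$ is a $u_1$-to-$v$ path sitting inside $H_{scc}-f$, supplied by the SCC preservation above. The verification splits on the location of $f$: the cases $f\in Q_1'[s_1',u_1]$ and $f\in Q_1'[v,t_1']$ are handled using Proposition~\ref{prop:edge-disj-paths} to switch to the twin $s_1'$-$t_1'$ path of $H_{p_1'}$; the case $f=h_1$ is immediate since the detour avoids $h_1$ by construction; and the cases in which $f$ only threatens $R$ are absorbed by the 1-FT-SCC guarantee for $u_1,v$.

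The main obstacle is the doubly adversarial case, in which a single failure $f$ simultaneously breaks the prefix $Q_1'[s_1',u_1]$ and the SCC-detour $R$. Here the full strength of the hypothesis, namely that \emph{both} $h_1$ and $h_2$ admit the return witnesses $u_1,u_2>_Q v$, is required: applying Claim~\ref{claim:edge-share} at $v$ to the two incoming edges $h_1$ and $h_2$ forces any two return paths into $v$ (avoiding $h_1$) to share an edge, which localises $f$ and opens up an alternate entry into $v$ through the symmetric witness $u_2$ associated to $h_2$. If even this reroute is obstructed, the completely symmetric argument removes $h_2$ from $H_{opt}$ instead, yielding the dichotomous conclusion of the claim.
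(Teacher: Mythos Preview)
Your extraction of the return vertices $u_i>_Q v$ matches the paper's first step, but from there the argument misses the key idea and the case analysis does not close. The paper's proof pivots on the \emph{order} of $u_1,u_2$ along $Q$: assuming $u_2\le_Q u_1$, the walk $Q[u_2,u_1]\circ Q_1[u_1,v]$ enters $v$ through $h_1$ and hence avoids $h_2$. This single observation does all the work. It shows $h_2$ is not an $(s_2,t_2)$-cut-edge, so the witness failure $f$ forms a minimal $2$-cut $\{f,h_2\}$; Observation~\ref{obs:one-cut-edge} then forces $f\notin Q_2$, and since $f$ must hit the bypass it cannot lie on $Q[v,u_2]$. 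Hence the cycle $Q_2[u_2,v]\circ Q[v,u_2]$ survives $f$, and the $1$-FT-SCC preserver supplies the required $u_2$-to-$v$ detour inside $H_{scc}$. Because you never compare $u_1$ with $u_2$, you cannot rule out that $h_1$ is an $(s_1',t_1')$-cut-edge; in that event your ``switch to the twin path'' for $f\in Q_1'[s_1',u_1]$ or $f\in Q_1'[v,t_1']$ fails, since the twin path of $H_{p_1'}$ also passes through $h_1$.

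Your ``main obstacle'' case is also not handled: Claim~\ref{claim:edge-share} needs an edge $e\in C_p$ lying \emph{on $Q$} and start-points $<_Q v$, whereas $h_1,h_2\notin Q$ and your return vertices satisfy $u_i>_Q v$, so the claim is inapplicable. The closing ``completely symmetric argument removes $h_2$ instead'' is precisely where the comparison of $u_1,u_2$ should enter; as written it is an assertion rather than an argument. The fix is the paper's: compare $u_1,u_2$ on $Q$ first, and discard the edge whose return vertex is closer to $v$, using the \emph{other} edge's return path to build the bypass.
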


\begin{proof}
For $i=1,2$, consider a pair $(s_i,t_i)\in B_{h_i}$. Let $Q_i\in H_i$ be an $s_i-t_i$ path containing $h_i$. Let $u_i\in V- \{v\}$ be the last vertex in $Q_i[s_i,v]$ that also lies on $Q$. Due to the preconditions mentioned in the statement of the claim, such a $u_i$ must exists, and is necessarily contained in segment $Q[v,t]$. Observe $Q_i[u_i,v]$ is internally vertex disjoint with $Q$. 
Next we show that if $u_2 \leq_Q u_1$, $h_2 \not \in H_{opt}$.

Suppose $u_2 \leq_Q u_1$.  Let $f$ be an edge on whose failure, any $s_2-t_2$ path must pass through the edge $h_2$. Observe $h_2$ cannot be an $(s_2,t_2)$ cut-edge, as $Q[u_2,u_1]\circ Q_1[u_1,v]$ is a $u_2-v$ path avoiding $h_2$, thus $C=\{f,h_2\}$ is a minimal $(s_2,t_2)$ cut in $H$. Observation~\ref{obs:one-cut-edge} implies $f\notin Q_2$. Note, $f$ must be either on $Q[u_2,u_1]$ or on $Q_1[u_1,v]$, thereby implying $f\notin Q[v,u_2]$. Thus $u_2$ and $v$ are strongly connected in $H - \{f\}$ as the cycle $Q_2[u_2,v] \circ Q[v,u_2]$ is intact $H - \{f\}$. Hence $H_{scc}$ contains a $u_2-v$ path even after the failure of $f$, and let $R'$ be such a path. Thus even if we exclude $h_2$ from $H_{opt}$ we still get the following $s_2-t_2$ path: $Q_2[s_2,u_2] \circ R \circ Q_2[v,t_2]$ in $H- \{f\}$. So due to optimality of $H_{opt}$, the edge $h_2$ cannot be in $H_{opt}$ ($=H- H_{scc}$).
Similarly when $u_1 <_Q u_2$, $h_1 \not \in H_{opt}$, and this completes the proof.
\end{proof}

Now we are ready to prove Lemma~\ref{lem:bound-heavy}.
\begin{proof}[Proof of Lemma~\ref{lem:bound-heavy}]
For the sake of contradiction let us assume that for some pair $p=(s,t) \in \P$, $|E_h \cap C_p| > \frac{16|\P|}{d_{avg}}$. Recall, by Proposition~\ref{prop:edge-disj-paths} $H_p$ is union of two (simple) $s-t$ paths, say $Q$ and $\widetilde Q$, intersecting only at $(s,t)$-cut-edges in $H$.
At least one of these two paths, say $Q$, must contain at least $\frac{8|\P|}{d_{avg}}$ edges from $|E_h \cap C_p|$. Now consider the following vertex set 
$$Q_h:=\{v \in V_h| \text{ there exists an edge }(u,v)\in Q\text{ that is also in }E_h \cap C_p\}.$$
Clearly, $|Q_h| \ge \frac{8|\P|}{d_{avg}}$. Let $Q_e$ denote the subset of edges in $E(H_{opt})$ that are incident on the vertices in $Q_h$ and not part of the path $Q$. Next consider the following edge-set
\begin{align*}
A:=\{&(u,v)\in Q_e| \text{ for some }(s',t')\in \P \text{ there exists an }s'-v\text{ path in }H \text{ with }(u,v) \text{ as its last}\\ 
&\text{ edge, that is internally vertex-disjoint with }Q\}.
\end{align*}
Observe, it follows from Proposition~\ref{prop:edge-disj-paths} that $|A| \le 2|\P|$. Assuming $d_{avg} > 12$, a simple counting argument shows that there exists a vertex $v\in Q_h$ such that number of edges from $Q_e- A$ that are incident on $v$ is at least $3$. If not, then since each vertex in $Q_h$ is by definition heavy, $|A| \ge |Q_h|(\frac{d_{avg}}{2}-2) \ge \frac{8 |\P|}{3}$ assuming $d_{avg} > 12$, which leads to a contradiction.

It implies that there must exist two vertices $u_1, u_2$ on $Q$ where $u_1 <_Q u_2$, and two edges $h_1, h_2 \in Q_e- A$ (incident on $v$) such that there are $u_1-v$ path, say $Q_1$, with $h_1$ as its last edge and $u_2-v$ path, say $Q_2$, with $h_2$ as its last edge in $H$, where both $Q_1, Q_2$ are edge-disjoint with $Q$. Now either $u_1, u_2 <_Q v$, or $u_1, u_2 > _Q v$.

Since due to optimality of $H_{opt}$, $h_2 \in C_{p'}$ for some $p'=(s',t')\in \P$, without loss of generality we can further assume that the path $Q_2$ is in the subgraph $H_{p'}$. Now if $u_1, u_2 <_Q v$, by Claim~\ref{claim:bound-incoming-forward} there must be an $s'-v$ path in $H$ with $h_2$ as its last edge that is internally vertex-disjoint with $Q[s,v]$. Thus by the definition of set $A$, $h_2 \in A$, leading to a contradiction. Now consider the other alternative, i.e., when $u_1, u_2 > _Q v$. In this case without loss of generality we can assume that for $i=1,2$ there exists no vertex, say $v_i <_Q v$, with a $v_i-v$ path in $H$ with $h_i$ as its last edge, that is edge-disjoint with $Q$. Then by Claim~\ref{claim:bound-incoming-back} either $h_1 \not \in H_{opt}$ or $h_2 \not \in H_{opt}$, which again leads to a contradiction. 

Hence we conclude that for all pairs $p \in \P$, $|E_h \cap C_p| \le \frac{16|\P|}{d_{avg}}$ for  $d_{avg} > 12$.
\end{proof}

\subparagraph{A Constructive Algorithm}
Observe that the size of an minimal subgraph $H$ which is an \ftrs$(\P,G)$, must have size at most $O(n+|\P|\sqrt{n})$. Now, a simple constructive algorithm is as follows: We initialize $H$ to $G$. Next for each pair $(s,t)\in \P$ for each $e\in E(H)$ check if the $(s,t)$-cut-edges in $G$ and $G-\{e\}$ are identical, if so, remove $e$ from $H$. The process terminates in polynomial time and results in a graph which is a minimal \ftrs$(\P,G)$. 

\subsection{Upper Bound II}

In this subsection, we present our second construction for pairwise reachability preservers. 

\begin{theorem}
\label{thm:existential-UB-2}
For any directed graph $G=(V,E)$ with $n$ vertices, $m$ edges, and a set $\P$ of vertex-pairs, there exists a 
poly-time computable
\ftrs($\P,G$) containing at most $O(n\sqrt{|\P|})$ edges. 
\end{theorem}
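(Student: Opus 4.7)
The plan is to implement the construction sketched in the Technical Overview for Upper Bound II, combining a 1-FT-SCC preserver with a ``light'' slice of the individual pairwise FTRS structures and a small collection of single-source/single-destination FTRS structures anchored at ``heavy'' vertices. I would first, for each pair $p=(s_p,t_p)\in\P$, compute a minimum \ftrs$(p)$ subgraph $H_p$ of $G$. By Proposition~\ref{prop:edge-disj-paths}, $H_p$ is the union of two maximally disjoint $s_p$-to-$t_p$ paths $Q_p, \tilde Q_p$ that meet only at $(s_p,t_p)$-cut edges/vertices. Let $\mathcal{P}=\{Q_p,\tilde Q_p:p\in\P\}$ be the resulting multiset of $2|\P|$ paths. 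Fix a threshold $\tau=\Theta(\sqrt{|\P|})$ and call $v\in V$ heavy if strictly more than $\tau$ paths of $\mathcal{P}$ pass through it; let $W$ denote the set of heavy vertices.

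The proposed preserver is
\[
H \;=\; H_{\mathrm{scc}} \;\cup\; H_{\mathrm{light}} \;\cup\; \bigcup_{w\in W}\bigl(\ftrs(\{w\}\!\times\! V,G)\cup \ftrs(V\!\times\!\{w\},G)\bigr),
\]
where $H_{\mathrm{scc}}$ is the $O(n)$-sized 1-FT-SCC preserver from Theorem~\ref{theorem:SCC_1} and $H_{\mathrm{light}}$ is the union of those paths of $\mathcal{P}$ that avoid every heavy vertex. For correctness, fix $p=(s,t)\in\P$ and an edge failure $e$ such that $t$ is still reachable from $s$ in $G-e$; since $H_p$ is a \ftrs$(p)$, at least one of $Q_p,\tilde Q_p$ survives the deletion of $e$, call it $Q$. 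If $V(Q)\cap W=\emptyset$ then $Q\subseteq H_{\mathrm{light}}-e\subseteq H-e$ and we are done. Otherwise pick any heavy $w\in V(Q)$: the sub-walk $Q[s,w]$ shows $s$ reaches $w$ in $G-e$, so by definition of $\ftrs(V\!\times\!\{w\},G)$ some $s$-to-$w$ walk survives in that structure after deletion of $e$; symmetrically $Q[w,t]$ yields a surviving $w$-to-$t$ walk in $\ftrs(\{w\}\!\times\! V,G)-e$. Concatenation then gives the required $s$-to-$t$ walk in $H-e$.

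For the size bound, the easy half is $H_{\mathrm{light}}$: a light vertex lies on at most $\tau$ paths, and each contributes at most two incident edges, so each light vertex has degree at most $2\tau$ in $H_{\mathrm{light}}$, yielding $|H_{\mathrm{light}}|=O(n\tau)=O(n\sqrt{|\P|})$. Theorem~\ref{theorem:ftrs} gives $O(n)$ edges per single-source or single-destination FTRS, so $|H_{\mathrm{heavy}}|=O(n|W|)$, and of course $|H_{\mathrm{scc}}|=O(n)$. The main obstacle is certifying that $|W|=O(\sqrt{|\P|})$: the naive double-count $|W|\cdot\tau \le \sum_{Q\in\mathcal{P}}|V(Q)|\le 2n|\P|$ only yields $|W|=O(n\sqrt{|\P|})$, which is far too weak.

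I expect the tight argument to exploit the structural interaction between the maximally disjoint paths of different pairs. A natural route is a case split on $|V(H_p)|$: include every $H_p$ with $|V(H_p)|\le n/\sqrt{|\P|}$ wholesale, at combined cost $O(n\sqrt{|\P|})$; for the remaining ``long'' pairs, each $H_p$ contains at least $n/\sqrt{|\P|}$ vertices, so a hitting-set/greedy argument (or a random sample, derandomized) picks $O(\sqrt{|\P|})$ heavy vertices so that each long pair has at least one picked vertex on one of its two paths. The single-source/destination FTRS structures of those $O(\sqrt{|\P|})$ vertices then handle every long pair, yielding $|H_{\mathrm{heavy}}|=O(n\sqrt{|\P|})$. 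Putting the three pieces together gives $|H|=O(n\sqrt{|\P|})$, and polynomial-time computability is inherited from the polynomial running times of the minimum \ftrs$(p)$ computation, Theorem~\ref{theorem:SCC_1}, and Theorem~\ref{theorem:ftrs}.
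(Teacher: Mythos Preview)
Your first construction is essentially the paper's, and you correctly identify that defining $W$ statically as the set of all heavy vertices fails to bound $|W|$. The paper's fix is far simpler than your length-based case split: build $W$ \emph{greedily}. Maintain the collection $\C$ of $2|\P|$ paths; while some vertex $w$ has $\freq(w,\C)>\sqrt{|\P|}$, add $w$ to $W$ and \emph{delete from $\C$ every path through $w$}; finally let $H_{\mathrm{light}}$ be the union of the paths still in $\C$. Since each iteration removes more than $\sqrt{|\P|}$ paths from a pool of size $2|\P|$, the loop runs fewer than $2\sqrt{|\P|}$ times, so $|W|<2\sqrt{|\P|}$ immediately. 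After termination every vertex has frequency at most $\sqrt{|\P|}$ in the surviving $\C$, hence $|H_{\mathrm{light}}|\le n\sqrt{|\P|}$. Your correctness argument carries over verbatim: if the surviving path $Q$ was removed from $\C$, it was removed because it passes through some $w\in W$, and the single-source/single-destination $\ftrs$ at $w$ takes over. (The $H_{\mathrm{scc}}$ term is not needed for this bound.)

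Your fallback route has two problems. First, a genuine correctness gap: knowing that a long pair $p$ has a picked vertex $w$ on \emph{one} of its two paths, say $Q_p$, does not handle a failure $e\in Q_p$ that severs $w$ from $s$ or from $t$; in that case only $\tilde Q_p$ survives in $G-e$, but $\tilde Q_p$ is not in $H$ and the $\ftrs$ structures anchored at $w$ need not yield any $s$--$t$ walk. You would need to hit \emph{each} path, not just each pair. Second, even with that fix, the standard greedy or random hitting-set argument on paths of length at least $n/\sqrt{|\P|}$ produces a set of size $O(\sqrt{|\P|}\log|\P|)$, not $O(\sqrt{|\P|})$, so you pick up an extra logarithmic factor. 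The greedy-removal trick sidesteps both issues at once.
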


By Proposition \ref{prop:edge-disj-paths}, for any pair $p=(s,t) \in \P$, $\ftrs(p)$ is union of two $s-t$ paths intersecting only at $(s,t)$ cut-edges and cut-vertices. Let these paths be respectively denoted by $Q_{s,t}$ and $\widetilde Q_{s,t}$.

Let $\C$ be the collection $\C=\bigcup_{(s,t)\in \P}\big\{Q_{s,t}, \widetilde Q_{s,t}\big\}$, and $W$ be initialized to $\emptyset$. For each vertex $v$, let $\freq(v,\C)$ denote the total number of paths in $\C$ that contains $v$. Now we use the following procedure:

\begin{enumerate}
\item While there is a vertex $w$ with $\freq(w,\C)>\sqrt{|\P|}$, we add $w$ to $W$, and remove all those paths from $\C$ that contains $w$.
\item Initialize $H$ to union of subgraphs $\ftrs(w,G)$ and $\ftrs(w,G^R)$, taken over all $w\in W$.
\item Also add to $H$ the union of edges lying in paths remaining in $\C$.
\end{enumerate}
  
The size of set $W$ is at most $2\sqrt{|\P|}$ since each time a vertex is added to $W$ at least $\sqrt{|\P|}$ paths are eliminated from $\C$. By Theorem~\ref{theorem:ftrs} we can bound the size of both $\ftrs(w,G)$ and $\ftrs(w,G^R)$ by $O(n)$ for each $w \in W$. After step 2, since for each $v\in V$, $\freq(v,\C)$ is bounded by $\sqrt{|\P|}$, the number of edges in $H$ is at most $O(n\sqrt{|\P|})$. The correctness follows from the following claim.

\begin{claim}
\label{clm:correct-sub-quadratic}
The subgraph $H$ computed by above process is a \ftrs($\P,G$).
\end{claim}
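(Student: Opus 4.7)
The plan is to fix an arbitrary pair $p=(s,t)\in\P$ and an arbitrary edge $e$ with $t$ reachable from $s$ in $G-e$, and exhibit an $s\to t$ path in $H-e$. First I would apply Proposition~\ref{prop:edge-disj-paths}, which says $Q_{s,t}$ and $\widetilde Q_{s,t}$ are two $s$-$t$ paths of an optimal \ftrs for $p$ that intersect only at $(s,t)$-cut-edges and cut-vertices of that subgraph. Hence if $e$ lay on both paths, it would in particular be an $(s,t)$-cut-edge of $G$, contradicting reachability in $G-e$. Therefore at least one of the two paths, call it $P^*$, avoids $e$ and realizes an $s\to t$ route in $G-e$.

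Next the analysis branches on whether $P^*$ survives the while loop of step~1 of the construction. If $P^*$ is still in $\C$ at termination, step~3 puts every edge of $P^*$ into $H$, so $P^*\subseteq H-e$ and we are done. Otherwise $P^*$ was removed from $\C$ at some iteration because, by the removal rule, it contains some $w\in W$; split $P^*$ into a prefix $P^*[s,w]$ and a suffix $P^*[w,t]$, both of which are paths in $G-e$, witnessing that $s$ reaches $w$ and $w$ reaches $t$ in $G-e$.

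To finish, I would invoke the fault-tolerance guarantee of the single-terminal structures attached to $w$ in step~2. Since $\ftrs(w,G^R)$ is a 1-FTRS for the pair-set $V\times\{w\}$ in $G$, the existence of an $s\to w$ path in $G-e$ yields one in $\ftrs(w,G^R)-e \subseteq H-e$; symmetrically, $\ftrs(w,G)-e \subseteq H-e$ contains a $w\to t$ path. Concatenating these produces the desired $s\to t$ walk in $H-e$, completing the proof. The only point requiring care is the translation of the fault-tolerance property between $G$ and $G^R$ (a single edge failure in $G$ corresponds to the reversed edge failure in $G^R$), which is purely notational. I do not foresee an essential obstacle beyond this case split.
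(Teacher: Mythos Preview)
Your proposal is correct and follows essentially the same argument as the paper: both reduce to the case where one of the two paths $Q_{s,t},\widetilde Q_{s,t}$ avoids $e$, then branch on whether that path survives in $\C$ (so is contained in $H$ directly) or was removed because it passes through some $w\in W$, in which case the single-source and single-destination \ftrs\ structures at $w$ supply the required $s\to w$ and $w\to t$ paths in $H-e$. The only cosmetic difference is that the paper phrases the ``$e$ on both paths'' case as $e$ being an $(s,t)$-cut (so no path exists in $G-e$ either), whereas you phrase it contrapositively by assuming reachability in $G-e$ upfront.
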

\begin{proof}
Consider a pair $(s,t)\in \P$ and an edge failure $e\in E$. Observe that if $e$ lies in both $Q_{s,t}$ and $\widetilde Q_{s,t}$, then $e$ must be an $(s,t)$-cut. In such a scenario there will be no path from $s$ to $t$ in $G-\{e\}$ as well as in $H-\{e\}$. So let us assume $e$ does not lie in at least one of the paths, $Q_{s,t}$ or $\widetilde Q_{s,t}$. Without loss of generality, we assume $Q_{s,t}$ will remain intact in $G-\{e\}$. We will show that there is an $s-t$ path in $H-\{e\}$ even when $Q_{s,t}\notin H$. Recall that if $Q_{s,t}\notin H$, then $Q_{s,t}$ must contain a vertex from set $W$, let this vertex be $w$. Since there is a path from $s$ to $t$ in $G-\{e\}$ containing $w$, there must exists an $s-w$ path, say $R_{1}$, in  $\ftrs(V\times \{w\},G)-\{e\}$, and a $w-t$ path, say $R_{2}$, in $\ftrs(\{w\}\times V,G)-\{e\}$. The concatenated path $R_1\circ R_2$ is an $s-t$ path avoiding $e$. Also $R_1\circ R_2$ lies in $H$ as we include in $H$ a $\ftrs(\{x\}\times V,G)$ as well as a $\ftrs(V\times \{x\},G)$, for each $x\in W$. It thus follows that $H$ is a fault-tolerant reachability preserver for the pair $(s,t)$.
\end{proof}




\section{Lower Bounds for Pairwise Reachability Preserver under Dual Failures}
\label{section:lower-bounds}

In this section, we provide several lower bound results. In our constructions, we will employ the following lower bound for pairwise distance preservers that was established by Coppersmith and Elkin in \cite{CE06}, and later reformulated in \cite{AB18} using standard tricks in \cite{Bodwin17}.

\begin{theorem}[\cite{CE06,Bodwin17,AB18}]\label{theorem:lb-distance}
For any integer $d \geq 2$ there are infinitely many $n \in \mathbb{N}$ such that for any $p\in[n^{f(d)},n^{f(d+1)}]$, where $f(d)=\frac{2d^2-2d-2}{(d^2+d-2)}$, there exists an $n$-node undirected unweighted graph $G = (V, E)$ and a pair-set $\P\subseteq V\times V$ of size $|\P|=O(p)$, such that
\begin{itemize}
\item For each pair $(s, t) \in \P$ there is a unique shortest path in $G$ between $s$ and $t$, 
\item These paths are all edge-disjoint and have identical length (which we denote by $L$), and
\item The edge set of $G$ is precisely the union of these paths and has size $\Omega \big(n^{2d/(d^2 +1)} p^{(d^2-d)/(d^2 +1)}\big)$.
\end{itemize}
\end{theorem}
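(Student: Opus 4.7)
The plan is to give an explicit construction following the classical Coppersmith--Elkin template: build a layered $d$-dimensional grid graph together with a pair set coming from a carefully chosen family of direction vectors on an $\ell_1$-sphere, so that the shortest-path structure is dictated purely by the additive combinatorics of the chosen directions.

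First, I would take the vertex set to be a layered grid $V = \bigsqcup_{i=0}^{L} V_i$, where each layer $V_i$ is a copy of $[r]^d$ and edges connect $(v,i)$ to $(v+e_j,\,i+1)$ for each coordinate unit vector $e_j$, so that $n = \Theta(L r^d)$ and every monotone lattice path has length exactly $L$. Next, I would choose a set $S \subseteq \mathbb{Z}_{\geq 0}^d$ of direction vectors of $\ell_1$-norm $L$ lying in \emph{strict convex position} on the simplex $\{x : \sum_j x_j = L\}$, meaning no $\vec s \in S$ is a nontrivial convex combination of the remaining elements of $S$. Standard lattice-point counts on this simplex supply such an $S$ with $|S| = \Theta(L^{d-1})$, which I then trim to size $p$. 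Each direction $\vec s \in S$ contributes one pair $\big((u_{\vec s},0),\,(u_{\vec s}+\vec s, L)\big)$ for a suitable base $u_{\vec s} \in V_0$, and $G$ is defined as the union of monotone shortest paths for these pairs.

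Identical length $L$ is built into the layered structure, and uniqueness of the shortest path (which can fail in the raw grid when multiple monotone orderings realize the same direction vector) is enforced by a standard subdivide-and-tie-break trick that inflates $n$ and $L$ only by constant factors. The substantive step, and the main obstacle in my view, is establishing edge-disjointness: one must rule out any grid edge being traversed by the pair-paths of two distinct directions. This reduces to showing that a shared edge between the monotone paths of $\vec s$ and $\vec s'$ would allow $\vec s$ to be written as a sum of fragments of $\vec s'$ and leftover pieces, producing a nontrivial convex-combination relation among directions in $S$ and contradicting strict convex position.

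Given edge-disjointness, the count $|E(G)| = p L$ is immediate, and it remains to balance $r$ and $L$ against the constraints $n = \Theta(L r^d)$ and $p \leq |S| = \Theta(L^{d-1})$. A direct algebraic optimization shows that for $p$ in the stated window $[n^{f(d)}, n^{f(d+1)}]$, the optimal choice yields $|E(G)| = \Omega\big(n^{2d/(d^2+1)} p^{(d^2-d)/(d^2+1)}\big)$, matching the claim. The reformulations in Bodwin and Abboud--Bodwin essentially repackage this Coppersmith--Elkin counting into a modular form that makes the correspondence between lattice points on a sphere and pairwise edge-disjoint unique shortest paths explicit, which is what delivers the clean three-property statement quoted here.
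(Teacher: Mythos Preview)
The paper does not prove this theorem at all: it is stated with attributions to \cite{CE06,Bodwin17,AB18} and then used as a black box to build the dual-failure lower-bound graph in Section~\ref{section:lower-bounds}. There is therefore no ``paper's own proof'' to compare your proposal against.

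As a reconstruction of the Coppersmith--Elkin argument your outline has the right architecture---layered $d$-dimensional grid, a family of direction vectors governing the pair set, identical path length forced by layering, and edge-disjointness coming from a convexity/general-position condition on the directions. Two places where the sketch is loose, should you want to turn it into an actual proof: first, the assertion that the simplex $\{x:\sum_j x_j=L\}$ carries $\Theta(L^{d-1})$ lattice directions \emph{in strict convex position} is not what standard bounds give (Andrews-type results cap the number of lattice points in convex position well below the total lattice-point count), so the parameter accounting needs to be redone with the correct count; second, in the actual CE argument edge-disjointness is not obtained by a convex-combination contradiction on the direction set alone but also relies on the choice of the base points $u_{\vec s}$ (two distinct directions can share an edge in the raw grid unless the starting points are placed carefully). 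Neither issue is fatal to the overall plan, but both require more than ``standard'' appeals to close. Since the present paper simply imports the statement, none of this affects the comparison you were asked to make.
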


\noindent	
Throughout this section, we will reserve $f(d)$ to refer to the function $\frac{2(d^2-d-1)}{(d^2+d-2)}$.






The construction of our lower-bound graph $G=(V_G,E_G)$ is as follows. Let $H=(V_H,E_H),~\P_H$ be an instance drawn from Theorem~\ref{theorem:lb-distance}. Let $L$ be the common distance between all the pairs in $\P_H$, and $K=L^r$ for some parameter $r\geq 1$ to be fixed later on. For each node $u$ in $H$, add $2K$ copies of $u$ to $G$, namely, $u_1,\ldots,u_{2K}$. For each edge $(u,v)$ in $H$ and $i\geq 2$, add edges $(u_{i},v_{i-1})$ and $(v_{i},u_{i-1})$ to $G$. 

Next, for each node $v\in V_H$, add two paths $v_{\textsc{left}}:=(v_{\textsc{left},1},\ldots, v_{\textsc{left},2K})$ and $v_{\textsc{right}}:=(v_{\textsc{right},1},\ldots, v_{\textsc{right},2K})$, each on a set of $2K$ new nodes to $G$. Also, add an edge from $v_{\textsc{left},i}$ to $v_{i}$, and $v_i$ to $v_{\textsc{right},i}$ to $G$, for $i\in[1,2K]$.

Finally, for each $(x,y)\in \P_H$, create two new vertices $s_{x,y}$ and $t_{x,y}$, and include $(s_{x,y}, t_{x,y})$ in pair-set $\P_G$; add directed edges from $s_{x,y}$ to $x_{\textsc{left},1},y_{\textsc{right},1}$ and add directed edges from $x_{\textsc{left},2K},y_{\textsc{right},2K}$ to $t_{x,y}$. 
This completes the description of $G$, and pair-set $\P_G$.

\begin{figure}[!ht]
\centering
\includegraphics[scale=.5]{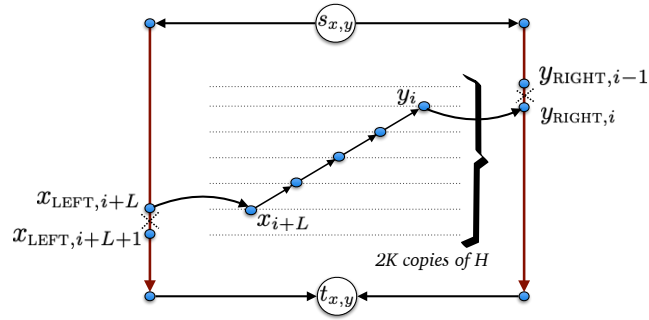}
\caption{Dual fault-tolerant reachability preserver: depiction of graph $G$ and pair $(s_{x,y},t_{x,y})\in \P_G$.} 
\end{figure}

%
%
%
%
%
%
Observe that 
\begin{enumerate}
\item $|\P_G|=|\P_H|$;
\item $|V_G|=\Theta(|\P_H| + L^r |V_H|)=\Theta(L^r|V_H|)$, whenever $|\P_H|\leq O(L^r |V_H|)$; and
\item $|E_G|=\Theta(|\P_H| +L^r |E_H|)=\Theta(L^{r+1}|\P_H|)$ (since $|E_H|=L|\P_H|$ by the description of $H$ given in Theorem~\ref{theorem:lb-distance}).
\end{enumerate}

We first analyze the size of $E_G$. By Theorem~\ref{theorem:lb-distance}, we have
\begin{equation}\label{eq:L}
L=\frac{|E_H|}{|\P_H|}=\Theta(|V_H|^\frac{2d}{d^2+1}~|\P_H|^\frac{-(d+1)}{d^2+1}).
\end{equation}

Let $n:=|V_G|=\Theta(|V_H|\cdot L^r)$ and $m:=|E_G|=\Theta(|\P_H|\cdot L^{r+1})$. On multiplying $L^\frac{2dr}{d^2+1}$ on both sides of Equation~\ref{eq:L}, we obtain

\begin{align}
L^{1+\frac{2dr}{d^2+1}}&=
	\Theta\Big(|V_G|^\frac{2d}{d^2+1}~|\P_G|^\frac{-(d+1)}{d^2+1}\Big) 
	\nonumber\\
\text{or,~}L^{r+1}&=
	\Theta\Big(|V_G|^\frac{2d(r+1)}{d^2+2rd+1}~|\P_G|^\frac{-(d+1)(r+1)}{d^2+2rd+1}\Big)
	\nonumber \\
\text{Thus,~}|E_G|=\Theta(|\P_G|& L^{r+1})
	=\Theta\Big(|V_G|^\frac{2d(r+1)}{d^2+2rd+1}~|\P_G|^\frac{d^2+dr-(d+r)}{d^2+2rd+1}\Big)
	\label{eq:bound-2ft}
\end{align}

\noindent
The size of $V_H$ is given by $|V_H|=\Theta\Big(\frac{|V_G|}{L^r}\Big)=\Theta\Big(n^\frac{d^2+1}{d^2+2rd+1}|\P_G|^\frac{(d+1)r}{d^2+2rd+1}\Big).$

\noindent
So, the condition $|V_H|^{f(d)}\leq  |\P_H|\leq |V_H|^{f(d+1)}$ translates to 

$$
\Big(n^\frac{d^2+1}{d^2+2rd+1}|\P_G|^\frac{(d+1)r}{d^2+2rd+1}\Big)^{f(d)}\leq |\P_G|\leq
\Big(n^\frac{d^2+1}{d^2+2rd+1}|\P_G|^\frac{(d+1)r}{d^2+2rd+1}\Big)^{f(d+1)}
$$

which can be re-stated as: $n^\frac{(d^2+1)f(d)}{d^{2}+2rd+1-(d+1)rf(d)}\leq |\P_G|\leq n^\frac{(d^2+1)f(d+1)}{d^{2}+2rd+1-(d+1)rf(d+1)}$

and on simplification is equivalent to 

\begin{equation}\label{eq:new:f(d)}
n^\frac{2(d^2-d-1)}{(d^2+d+2r-2)}\leq |\P_G|\leq n^\frac{2(d^2+d-1)}{(d^2+3d+2r)}.
\end{equation}

We now prove that a dual fault-tolerant reachability preserver of $G$ requires $\Omega(|E_G|)$ edges. Consider pair $(s_{x,y},t_{x,y})$ in set $\P_G$, for a pair $(x,y)\in \P_H$. Let $\pi(x,y,H)=(x=w^0,w^1,\ldots,w^L=y)$ be the shortest path between $x$ and $y$ in the undirected graph $H$. 

By construction of $G$ and uniqueness of $\pi(x,y,H)$, it follows that for any $i\in[1,K]$, there exists a unique path from $x_{i+L}$ to $y_i$ in $G$. Indeed $\pi(x_{i+L},y_i,G	)=(w^0_{i+L},w^1_{i+L-1},$ ~$w^2_{i+L-2},\ldots,w^L_{i})$,  is the shortest and the only path starting from $x_{i+L}=w^0_{i+L}$ and terminating to $y_i=w^L_{i}$ in $G$. Moreover, there is no path from $x_{i-\alpha+L}$ that terminates to $y_{i+\beta}$ for non-negative integers $\alpha,\beta$ if at least one of them is greater than $0$.

On failures of edges $(x_{\textsc{left},i+L},x_{\textsc{left},i+L+1})$ and $(y_{\textsc{right},i-1},y_{\textsc{right},i})$, the concatenated path 
$$(s_{x,y}, x_{\textsc{left},1},\ldots,x_{\textsc{left},i+L})\circ  \pi(x_{i+L},y_i,G)\circ (y_{\textsc{right},i},\ldots,y_{\textsc{right},2L},t_{x,y})$$
is the only path from $s_{x,y}$ to $t_{x,y}$ in the surviving part of $G$. 

Thus, a dual fault-tolerant reachability preserver of $G$ must contain $\pi(x_{i+L},y_i,G)$, for each $i\in[1,K]$ and $(x,y)\in \P_H$. From the fact that the shortest path between pairs in $\P_H$ are all edge-disjoint in $H$, it directly follows, a dual fault-tolerant reachability preserver of $G$ must contain $\Omega(KL|\P_H|)=\Omega(K|E_H|)=\Omega(|E_G|)$ edges. The above analysis along with Eq.~\ref{eq:bound-2ft} and Eq.~\ref{eq:new:f(d)} proves our main result, Theorem~\ref{theorem:lb-dual-generic}, 
on dual failure.

\begin{theorem}\label{theorem:lb-dual-generic}
For any integer $d \geq 2$, any real $r\geq 1$, any real $c\in (0,1)$, there are infinitely many $n \in \mathbb{N}$ such that for any $p\in[n^{f(d,r)},\min\{cn,n^{f(d+1,r)}\}]$, where $f(d,r)=\frac{2d^2-2d-2}{(d^2+d+2r-2)}$, there exists an $n$-node directed graph $G = (V, E)$ and pair-set $\P\subseteq V\times V$ of size $|\P|=O(p)$, such that any dual fault-tolerant reachability preserver of $G$ for $\P$,
must have $\Omega \Big(n^\frac{2d(r+1)}{d^2+2rd+1}~|\P|^\frac{(d-1)(d+r)}{d^2+2rd+1}\Big)$ edges.
\end{theorem}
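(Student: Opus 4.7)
The plan is to exploit the construction already laid out: embed the Coppersmith--Elkin ``hard'' distance preserver graph $H=(V_H,E_H)$ from Theorem~\ref{theorem:lb-distance} into a $2K$-layered directed staircase graph $G$, with $K=L^r$. For each edge $(u,v)\in E_H$ and each $i\in[2,2K]$, we direct edges $(u_i,v_{i-1})$ and $(v_i,u_{i-1})$, so the undirected unique shortest path $\pi(x,y,H)$ of length $L$ induces, for each $i\in[1,K]$, a unique directed path $\pi(x_{i+L},y_i,G)$ descending by exactly one layer per step. The ``entry/exit tracks'' $v_{\textsc{left}}$ and $v_{\textsc{right}}$ feed into these staircases and, via the index $i$, parameterize which of the $K$ parallel copies of $\pi(x,y,H)$ carries the $s_{x,y}$-to-$t_{x,y}$ traffic.

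I would then show that for each $(x,y)\in \P_H$ and each $i\in[1,K]$, failing the two edges $(x_{\textsc{left},i+L},x_{\textsc{left},i+L+1})$ and $(y_{\textsc{right},i-1},y_{\textsc{right},i})$ leaves exactly one $s_{x,y}$-to-$t_{x,y}$ path in $G$, namely the concatenation $(s_{x,y},x_{\textsc{left},1},\ldots,x_{\textsc{left},i+L})\circ \pi(x_{i+L},y_i,G)\circ (y_{\textsc{right},i},\ldots,y_{\textsc{right},2K},t_{x,y})$. Uniqueness forces any dual fault-tolerant reachability preserver of $G$ for $\P_G$ to contain every edge of every such $\pi(x_{i+L},y_i,G)$. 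Since the shortest paths $\pi(x,y,H)$ are edge-disjoint in $H$, the $K|\P_H|$ descending staircase paths are pairwise edge-disjoint in $G$, yielding at least $\Omega(KL|\P_H|)=\Omega(K|E_H|)=\Omega(|E_G|)$ edges.

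With the combinatorial lower bound in hand, the rest is bookkeeping: substitute $n=|V_G|=\Theta(L^r|V_H|)$, $p=|\P_G|=|\P_H|$, and $|E_G|=\Theta(L^{r+1}|\P_H|)$ into the Coppersmith--Elkin identity $L=\Theta\bigl(|V_H|^{2d/(d^2+1)}|\P_H|^{-(d+1)/(d^2+1)}\bigr)$ to derive the stated $\Omega\bigl(n^{2d(r+1)/(d^2+2rd+1)}\,p^{(d-1)(d+r)/(d^2+2rd+1)}\bigr)$ bound, exactly as in Eq.~\ref{eq:bound-2ft}. The admissible range $p\in[n^{f(d,r)},n^{f(d+1,r)}]$ is obtained by translating the Coppersmith--Elkin hypothesis $|V_H|^{f(d)}\leq |\P_H|\leq |V_H|^{f(d+1)}$ through the same substitutions, as in Eq.~\ref{eq:new:f(d)}; the cut-off by $cn$ is just to ensure $|\P_H|$ remains dominated by $K|V_H|$ so that $|V_G|=\Theta(L^r|V_H|)$.

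The main obstacle I anticipate is rigorously justifying the uniqueness step: one must rule out that, after the two failures, any ``hybrid'' route, possibly detouring through left/right tracks of other vertices of $V_H$ or switching sides at some intermediate $w^j_{i'}$, can connect $s_{x,y}$ to $t_{x,y}$. The key structural fact is that every staircase edge strictly decreases the layer index, so any path entering the staircase at $x$-layer $j$ and exiting at $y$-layer $j'$ has length exactly $j-j'$ and projects to a walk of length $j-j'$ in the undirected $H$ from $x$ to $y$; combined with the uniqueness and fixed length $L$ of $\pi(x,y,H)$, this forces $j-j'=L$, while the two failures eliminate every choice of entry layer $j$ and exit layer $j'$ except $j=i+L$, $j'=i$. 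Once that is nailed down, the quantitative substitutions finish the proof.
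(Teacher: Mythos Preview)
Your proposal is correct and follows essentially the same route as the paper: same layered construction, same two-failure argument forcing $\pi(x_{i+L},y_i,G)$ to be present for each $i\in[1,K]$, same edge-disjointness count giving $\Omega(KL|\P_H|)=\Omega(|E_G|)$, and the same algebraic substitutions (your references to Eq.~\ref{eq:bound-2ft} and Eq.~\ref{eq:new:f(d)} match exactly). Your uniqueness sketch via the layer-decreasing property and projection to length-$(j-j')$ walks in $H$ is in fact more explicit than the paper's, which simply asserts uniqueness of $\pi(x_{i+L},y_i,G)$ and non-existence of $x_{i+L-\alpha}\to y_{i+\beta}$ paths without spelling out the projection argument; both the paper and your outline silently ignore the harmless boundary cases $i=1$ and $i+L=2K$.
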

Note, in the above theorem we need $p \le cn$ for $c<1$, to ensure that $|\P_H|\leq O(L^r |V_H|)$ (See the construction of $G$ from $H$). Some instances of the above theorem are as below.
\begin{itemize}
\item $\Omega(n^{{8}/{9}}|\P|^{1/3})$ edges for $n^{1/3}\leq |\P|\leq n^{5/6}$  (when $d=2, r=1$).
\item $\Omega(n^{12/13}|\P|^{4/13})$ edges for $n^{1/4}\leq |\P|\leq n^{5/7}$  (when $d=2, r=2$).
\end{itemize}

\subparagraph*{Non-existence of Linear-sized Dual Fault-Tolerant Preservers}
For $d=2$, the lower bound turns to be $\Omega \Big(n^\frac{4r+4}{4r+5}~|\P|^\frac{r+2}{4r+5}\Big)$ on the size of preserver, and bound on $\P$ becomes $[n^\frac{1}{r+2},n^\frac{5}{r+5}]$. Let $\epsilon=\frac{2}{(r+2)}$, so $\epsilon\leqslant 2/3$ for $r\geqslant 1$. Now for $|\P|=n^\epsilon$, observe $|\P|=n^\frac{2}{r+2}$ which lies in range $[n^\frac{1}{r+2},n^\frac{5}{r+5}]$, the lower bound on the size of preserver becomes $\Omega(n^{1+\frac{1}{4r+5}})$
which is $\Omega(n^{1+{\epsilon}/{8}})$. Thus the following non-linearity result is immediate.

\begin{theorem}
For every $p=n^\epsilon$, for $\epsilon\leq 2/3$, there is an infinite family of $n$-node directed graphs and pair-sets $\P$ of size $|\P|=p$, for which every dual fault-tolerant reachability preserver of $G$ for $\P$ requires at least $\Omega(n |\P|^{\frac{1}{8}})$ edges.
\end{theorem}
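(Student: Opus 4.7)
The plan is to obtain this result as a direct specialization of Theorem~\ref{theorem:lb-dual-generic}, which already provides a parametrized lower bound on the size of any dual fault-tolerant reachability preserver in terms of $d$ and $r$. The paragraph immediately preceding the theorem already sketches the specialization with $d=2$; my job is to choose the auxiliary parameter $r$ as a function of $\epsilon$ so that $|\P|=n^\epsilon$ lands inside the admissible window, and then simplify the resulting exponents.

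First I would set $d=2$ in Theorem~\ref{theorem:lb-dual-generic}. A short computation gives $f(2,r)=\tfrac{1}{r+2}$ and $f(3,r)=\tfrac{5}{r+5}$, so the admissible pair-set size window becomes $|\P|\in[n^{1/(r+2)},n^{5/(r+5)}]$ and the lower bound collapses to $\Omega\big(n^{(4r+4)/(4r+5)}\,|\P|^{(r+2)/(4r+5)}\big)$. Given $\epsilon\in(0,2/3]$, I would then choose $r:=\tfrac{2}{\epsilon}-2$. This makes $r\ge 1$ (the required range for the generic theorem) and, crucially, enforces the identity $\epsilon(r+2)=2$, which will make the algebra in the last step collapse cleanly.

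Next I would verify that $|\P|=n^\epsilon=n^{2/(r+2)}$ lies in the admissible window. The lower endpoint is immediate since $\tfrac{1}{r+2}\le \tfrac{2}{r+2}$, and the upper endpoint reduces to $2(r+5)\le 5(r+2)$, i.e.\ $3r\ge 0$, which holds. I would also confirm the side condition $|\P|\le cn$ by taking $\epsilon\le 2/3<1$ and choosing $c$ appropriately, so that all hypotheses of Theorem~\ref{theorem:lb-dual-generic} are met.

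Finally, I would substitute $|\P|=n^\epsilon$ into the lower bound and simplify. The exponent of $n$ becomes $\tfrac{4r+4}{4r+5}+\tfrac{\epsilon(r+2)}{4r+5}=\tfrac{4r+4}{4r+5}+\tfrac{2}{4r+5}=1+\tfrac{1}{4r+5}$. Since $r\ge 1$, we have $\tfrac{1}{4r+5}\ge \tfrac{1}{4r+8}=\tfrac{1}{4(r+2)}=\tfrac{\epsilon}{8}$, so the preserver must contain at least $\Omega(n^{1+\epsilon/8})=\Omega(n\cdot n^{\epsilon/8})=\Omega(n\,|\P|^{1/8})$ edges, as claimed. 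I do not expect any genuine obstacle in this derivation since it is a purely algebraic specialization; the only subtle point is verifying that the choice $r=2/\epsilon-2$ simultaneously satisfies $r\ge 1$ and keeps $n^\epsilon$ inside the validity window of the generic theorem, which is exactly the check carried out above.
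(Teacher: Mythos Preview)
Your proposal is correct and follows essentially the same argument as the paper: set $d=2$ in Theorem~\ref{theorem:lb-dual-generic}, relate $r$ and $\epsilon$ via $\epsilon=2/(r+2)$ (you phrase it as $r=2/\epsilon-2$, which is the same identity), verify that $n^{2/(r+2)}$ lies in the admissible window $[n^{1/(r+2)},n^{5/(r+5)}]$, and then use $1/(4r+5)\ge 1/(4r+8)=\epsilon/8$ to conclude. The only cosmetic difference is that the paper defines $\epsilon$ in terms of $r$ while you invert and define $r$ in terms of $\epsilon$; your extra sentence checking the side condition $|\P|\le cn$ is a small bit of additional care.
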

Recall, Theorem~\ref{theorem:ftrs} implies that for any $\P$ of size $p$ there exists a dual fault-tolerant reachability preserver with at most $O(np)$ edges.
Our result proves a wide separation in the size pairwise 1-fault-tolerant and 2-fault-tolerant reachability preservers.


\section{Application of FT-SCC Preserver in Connectivity Certificates}
\label{section:applications}

In this section we present an  application of $k$-FT-SCC preserver for vertex and edge connectivity 
certificates for digraphs.


\begin{definition}[$k$-connectivity certificate]
For a graph $G=(V,E)$, a subgraph $H=(V,E_H)$ of $G$ is said to be a $k$-Edge (Vertex) Connectivity Certificate of $G$ 
if for each pair of vertices $x,y\in V$, if there are are at least $k$-edge (vertex) disjoint paths from $x$ to $y$, and vice versa in $G$, then the same holds true for graph $H$ as well. 
\end{definition}

Georgiadis et al. \cite{GeorgiadisILP15-soda, GeorgiadisILP15-icalp} showed that for any strongly-connected graph
we can compute a sparse certificate for $2$-vertex connectivity and 2-edge-connectivity comprising of just $O(n)$ edges.
However, little is known about extremal size bound of $k$-connectivity certificates in {\em digraphs}, for a general $k$.

We below present a generic reduction from $(k-1)$-FT-SCC preserver to $k$-connectivity in digraphs.

\begin{lemma}
\label{lemma:FTscc-connetivity-reduction}
Let $H$ be a $(k-1)$-FT-SCC preserver of a digraph $G$, for some integer $k\geq 1$.
Then, $H$ is a $k$-Edge (Vertex) Connectivity Certificate for $G$.\footnote{
Note that reverse is not always true, i.e. a $k$-edge(vertex) connectivity certificate is not a $(k-1)$-FT-SCC preserver.}
\end{lemma}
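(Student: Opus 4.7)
The plan is to translate $k$-connectivity into a statement about preservation of strong connectivity under $(k-1)$ failures via Menger's theorem, and then invoke the $(k-1)$-FT-SCC preserver property.

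\textbf{Step 1 (Menger translation).} Fix any ordered pair $(x,y)$ that is $k$-edge-connected in $G$, i.e., $G$ admits $k$ edge-disjoint paths from $x$ to $y$ and $k$ edge-disjoint paths from $y$ to $x$. By the directed version of Menger's theorem, this is equivalent to saying that for every set $F\subseteq E_G$ with $|F|\leq k-1$, $y$ is reachable from $x$ in $G-F$ and $x$ is reachable from $y$ in $G-F$; equivalently, $x$ and $y$ belong to the same SCC of $G-F$ for every such $F$.

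\textbf{Step 2 (Transfer to $H$ via FT-SCC preserver).} Let $F'\subseteq E_H$ be any set with $|F'|\leq k-1$. Since $E_H\subseteq E_G$, we have $F'\subseteq E_G$, so by Step 1, $x$ and $y$ lie in a common SCC of $G-F'$. By the $(k-1)$-FT-SCC preserver property of $H$ applied with failure set $F'$, the SCC decompositions of $G-F'$ and $H-F'$ coincide, so $x$ and $y$ lie in a common SCC of $H-F'$ as well. In particular, $y$ is reachable from $x$, and $x$ from $y$, in $H-F'$.

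\textbf{Step 3 (Menger reverse direction).} Since the above holds for every $F'\subseteq E_H$ of size at most $k-1$, applying the directed Menger theorem inside $H$ yields $k$ edge-disjoint paths from $x$ to $y$, and $k$ edge-disjoint paths from $y$ to $x$, inside $H$. Hence $H$ is a $k$-edge-connectivity certificate of $G$.

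\textbf{Step 4 (Vertex-connectivity variant).} For vertex connectivity, the identical argument goes through after the standard vertex-splitting reduction already flagged in Section~\ref{section:prelims}: replace each vertex $v$ by the edge $(v_{in},v_{out})$, so that $k-1$ internal vertex failures in $G$ correspond to $k-1$ edge failures in the split graph. The FT-SCC preserver property of $H$, which the authors note carries over from edge failures to vertex failures via the same splitting trick, together with the vertex version of Menger's theorem, delivers the $k$-vertex-connectivity certificate conclusion.

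The argument is essentially a direct chase through definitions; the only conceptual step is recognizing that ``$k$ disjoint paths in both directions'' is equivalent to ``same SCC after any $(k-1)$ failures,'' so I do not anticipate a genuine obstacle. The one thing to be careful about is restricting attention to failure sets $F'\subseteq E_H$ (so that $H-F'$ makes sense as a subgraph of $H$) while still being allowed to invoke the $(k-1)$-FT-SCC preserver guarantee, which is stated for $F\subseteq E_G$; this is automatic because $E_H\subseteq E_G$.
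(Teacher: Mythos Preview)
Your proposal is correct and follows essentially the same route as the paper: translate $k$-connectivity via Menger into strong connectivity after any $(k{-}1)$ failures, invoke the FT-SCC preserver to pass this to $H$, and apply Menger again in $H$. Your extra care in restricting to $F'\subseteq E_H$ and in spelling out the vertex case via the splitting reduction is a slight elaboration, but the argument is the same.
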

\begin{proof}
Let $H$ be a $(k-1)$-FT-SCC preserver of $G$. Consider a pair of vertices $x,y\in V$, that are at least $k$-edge (vertex) connected in $G$. 

By Menger's theorem, it follows that for $x$ and $y$ to be $k$-edge (vertex) connected in $G$ it holds that on removal of any set $F\subseteq E$ of $k-1$ edges (resp. $F\subseteq V$ of $k-1$ vertices) from $G$, the surviving graph $G-F$ still contains a path from $x$ to $y$, and a path from $y$ to $x$, i.e. $x$ and $y$ are strongly connected in $G-F$.
Now by definition of $(k-1)$-FT-SCC preserver, we have that on removal of any set $F$ of $k-1$ edges or vertices from $H$, 
 $x$ and $y$ must be strongly connected in $H-F$. So another application of Menger's theorem, proves that $x$ and $y$ to be $k$-edge (vertex) connected in $H$. 
\end{proof}

Thus, Lemma~\ref{lemma:FTscc-connetivity-reduction} together with the FT-SCC preserver construction
presented in Theorem~\ref{theorem:kFT-scc-preserver} provides us a $k$-connectivity certificate of size sub-quadratic in $n$, for any $k\geq 3$, as follows.

\begin{theorem}
There is a polynomial time (randomized) algorithm that given any directed graph $G = (V, E)$ on $n$ vertices and $k\geq 2$, computes a $k$-Edge (Vertex) Connectivity Certificate of $G$ containing at most $\tO(k~2^k~n^{2-\frac{1}{k-1}})$ edges with probability at least $1-1/n^4$.
\end{theorem}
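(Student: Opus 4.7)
The plan is to combine the two main tools developed earlier: the $k$-FT-SCC preserver construction from Theorem~\ref{theorem:kFT-scc-preserver}, and the reduction from FT-SCC preservers to connectivity certificates given in Lemma~\ref{lemma:FTscc-connetivity-reduction}. Concretely, for the given input $k\geq 2$, I would invoke the randomized construction of Theorem~\ref{theorem:kFT-scc-preserver} with parameter $k-1$ in place of $k$, obtaining in polynomial time a subgraph $H$ of $G$ that is a $(k-1)$-FT-SCC preserver of $G$ with at most $\tO((k-1)\,2^{k-1}\,n^{2-1/(k-1)})$ edges, and succeeds with probability at least $1-1/n^4$.

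Next I would apply Lemma~\ref{lemma:FTscc-connetivity-reduction}, which directly implies that any $(k-1)$-FT-SCC preserver is automatically a $k$-edge (vertex) connectivity certificate of $G$. Hence the subgraph $H$ produced above already satisfies the required certificate property, with no additional edges needed, and the success probability of $1-1/n^4$ carries over unchanged. The total running time remains polynomial, since the only randomized sub-routine used is the preserver algorithm from Theorem~\ref{theorem:kFT-scc-preserver}.

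Finally, I would clean up the size bound: $\tO((k-1)\,2^{k-1}\,n^{2-1/(k-1)}) = \tO(k\,2^{k}\,n^{2-1/(k-1)})$, which matches the target bound in the theorem statement. Since there is no real technical obstacle here---the work has already been done in Theorem~\ref{theorem:kFT-scc-preserver} and Lemma~\ref{lemma:FTscc-connetivity-reduction}---the only thing to be careful about is the index shift ($k$-connectivity corresponds to $(k-1)$-fault tolerance, because $k$ edge/vertex-disjoint paths between $x$ and $y$ survive any failure set of size $k-1$ by Menger's theorem), and to note the restriction $k\geq 2$ (so that $k-1\geq 1$ is a valid parameter for Theorem~\ref{theorem:kFT-scc-preserver}).
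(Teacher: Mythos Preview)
Your proposal is correct and follows essentially the same approach as the paper: the paper simply states that the result is obtained by combining Lemma~\ref{lemma:FTscc-connetivity-reduction} with Theorem~\ref{theorem:kFT-scc-preserver}, and your argument spells out precisely this combination, including the index shift from $k$ to $k-1$ and the simplification of the size bound.
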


\section{Conclusion}
In this paper we discuss the problem of sparsifying a directed graph while preserving its strong-connectivity and pairwise reachability structure under edge failures. For SCC preservers, we provide a construction of a truly sub-quadratic (in number of vertices) sized subgraph that preserves SCC components under constantly many edge failures. More specifically, we show an upper bound of $\tilde{O}(k 2^k n^{2-1/k})$ on the size for any $n$-node graph with $k$ faulty edges, whereas we show a lower bound of $\Omega(2^k n)$. We would like to pose the problem of closing this gap between upper and lower bound as an open problem.

In case of reachability preserver, we show an upper bound of $O\big(n+\min(|\P|\sqrt{n},~n\sqrt {|\P|})~\big)$ for any $n$-node graph and a vertex-pair set $\P$ with one faulty edge. This implies linear sized preserver for $O(\sqrt{n})$ many vertex-pairs, which is also known to be the limit for standard non-fault tolerant static setting. Unfortunately we do not know how to generalize our single fault-tolerant pairwise reachability preserver construction to dual fault-tolerant setting. On the contrary, we show a striking difference between single and dual fault tolerant setting by proving a super linear lower bound on dual fault-tolerant reachability preserver for $\Omega(n^{\epsilon})$ (for some $\epsilon > 0$) vertex-pairs. One extremely interesting open problem is to get any non-trivial (better than $O(n |\P|)$) upper bound on size of a multiple fault-tolerant pairwise reachability preserver. Other future work lies in improving our size bounds, extending our result to bi-connectivity and other pairwise structures.

\paragraph*{Acknowledgements.} Authors would like to thank Robert Krauthgamer and Greg Bodwin for some useful discussion.

%
%

\bibliographystyle{plainurl}
\bibliography{references}

\begin{thebibliography}{10}

\bibitem{AB18}
Amir Abboud and Greg Bodwin.
\newblock Reachability preservers: New extremal bounds and approximation
  algorithms.
\newblock In {\em Proceedings of the Twenty-Ninth Annual {ACM-SIAM} Symposium
  on Discrete Algorithms, {SODA} 2018, New Orleans, LA, USA, January 7-10,
  2018}, pages 1865--1883, 2018.

\bibitem{AlonYZ95}
Noga Alon, Raphael Yuster, and Uri Zwick.
\newblock Color-coding.
\newblock {\em J. {ACM}}, 42(4):844--856, 1995.

\bibitem{BCCK19}
Surender Baswana, Shreejit~Ray Chaudhury, Keerti Choudhary, and Shahbaz Khan.
\newblock Dynamic {DFS} in undirected graphs: Breaking the o(m) barrier.
\newblock {\em {SIAM} J. Comput.}, 48(4):1335--1363, 2019.

\bibitem{BCHR18}
Surender Baswana, Keerti Choudhary, Moazzam Hussain, and Liam Roditty.
\newblock Approximate single source fault tolerant shortest path.
\newblock In {\em Proceedings of the Twenty-Ninth Annual {ACM-SIAM} Symposium
  on Discrete Algorithms, {SODA} 2018}, pages 1901--1915, 2018.

\bibitem{BCR16}
Surender Baswana, Keerti Choudhary, and Liam Roditty.
\newblock Fault tolerant subgraph for single source reachability: generic and
  optimal.
\newblock In {\em Proceedings of the 48th Annual {ACM} {SIGACT} Symposium on
  Theory of Computing, {STOC} 2016}, pages 509--518, 2016.

\bibitem{BCR19}
Surender Baswana, Keerti Choudhary, and Liam Roditty.
\newblock An efficient strongly connected components algorithm in the fault
  tolerant model.
\newblock {\em Algorithmica}, 81(3):967--985, 2019.

\bibitem{BK13}
Surender Baswana and Neelesh Khanna.
\newblock Approximate shortest paths avoiding a failed vertex: Near optimal
  data structures for undirected unweighted graphs.
\newblock {\em Algorithmica}, 66(1):18--50, 2013.

\bibitem{BGGLP15}
Davide Bil{\`{o}}, Fabrizio Grandoni, Luciano Gual{\`{a}}, Stefano Leucci, and
  Guido Proietti.
\newblock Improved purely additive fault-tolerant spanners.
\newblock In {\em Algorithms - {ESA} 2015 - 23rd Annual European Symposium,
  Proceedings}, pages 167--178, 2015.

\bibitem{BGLP16}
Davide Bil{\`{o}}, Luciano Gual{\`{a}}, Stefano Leucci, and Guido Proietti.
\newblock Multiple-edge-fault-tolerant approximate shortest-path trees.
\newblock In {\em 33rd Symposium on Theoretical Aspects of Computer Science,
  {STACS} 2016}, pages 18:1--18:14, 2016.

\bibitem{Bodwin17}
Greg Bodwin.
\newblock Linear size distance preservers.
\newblock In {\em Proceedings of the Twenty-Eighth Annual {ACM-SIAM} Symposium
  on Discrete Algorithms, {SODA} 2017, Barcelona, Spain, Hotel Porta Fira,
  January 16-19}, pages 600--615, 2017.

\bibitem{BGPW17}
Greg Bodwin, Fabrizio Grandoni, Merav Parter, and Virginia~Vassilevska
  Williams.
\newblock Preserving distances in very faulty graphs.
\newblock In {\em 44th International Colloquium on Automata, Languages, and
  Programming, {ICALP} 2017}, pages 73:1--73:14, 2017.

\bibitem{Chechik13}
Shiri Chechik.
\newblock Fault-tolerant compact routing schemes for general graphs.
\newblock {\em Inf. Comput.}, 222:36--44, 2013.

\bibitem{CLPR09}
Shiri Chechik, Michael Langberg, David Peleg, and Liam Roditty.
\newblock Fault-tolerant spanners for general graphs.
\newblock In {\em Proceedings of the 41st Annual {ACM} Symposium on Theory of
  Computing, {STOC} 2009}, pages 435--444, 2009.

\bibitem{CLPR10}
Shiri Chechik, Michael Langberg, David Peleg, and Liam Roditty.
\newblock {\it f}-sensitivity distance oracles and routing schemes.
\newblock In {\em 18th Annual European Symposium on Algorithms - ESA (1)},
  pages 84--96, 2010.

\bibitem{Choudhary16}
Keerti Choudhary.
\newblock An optimal dual fault tolerant reachability oracle.
\newblock In {\em 43rd International Colloquium on Automata, Languages, and
  Programming, {ICALP} 2016}, pages 130:1--130:13, 2016.

\bibitem{CE06}
Don Coppersmith and Michael Elkin.
\newblock Sparse sourcewise and pairwise distance preservers.
\newblock {\em {SIAM} J. Discrete Math.}, 20(2):463--501, 2006.

\bibitem{CZ04}
Artur Czumaj and Hairong Zhao.
\newblock Fault-tolerant geometric spanners.
\newblock {\em Discrete {\&} Computational Geometry}, 32(2):207--230, 2004.

\bibitem{DTCR08}
Camil Demetrescu, Mikkel Thorup, Rezaul~Alam Chowdhury, and Vijaya
  Ramachandran.
\newblock Oracles for distances avoiding a failed node or link.
\newblock {\em {SIAM} J. Comput.}, 37(5):1299--1318, 2008.

\bibitem{DK11}
Michael Dinitz and Robert Krauthgamer.
\newblock Fault-tolerant spanners: better and simpler.
\newblock In {\em Proceedings of the 30th Annual {ACM} Symposium on Principles
  of Distributed Computing, {PODC} 2011}, pages 169--178, 2011.

\bibitem{DP09}
Ran Duan and Seth Pettie.
\newblock Dual-failure distance and connectivity oracles.
\newblock In {\em Proceedings of the 20th Annual {ACM-SIAM} Symposium on
  Discrete Algorithms, {SODA} 2009}, pages 506--515, 2009.

\bibitem{GeorgiadisIKPP17}
Loukas Georgiadis, Giuseppe~F. Italiano, Aikaterini Karanasiou, Charis
  Papadopoulos, and Nikos Parotsidis.
\newblock Sparse certificates for 2-connectivity in directed graphs.
\newblock {\em Theor. Comput. Sci.}, 698:40--66, 2017.

\bibitem{GeorgiadisILP15-soda}
Loukas Georgiadis, Giuseppe~F. Italiano, Luigi Laura, and Nikos Parotsidis.
\newblock 2-edge connectivity in directed graphs.
\newblock In {\em Proceedings of the Twenty-Sixth Annual {ACM-SIAM} Symposium
  on Discrete Algorithms, {SODA} 2015, San Diego, CA, USA, January 4-6, 2015},
  pages 1988--2005, 2015.

\bibitem{GeorgiadisILP15-icalp}
Loukas Georgiadis, Giuseppe~F. Italiano, Luigi Laura, and Nikos Parotsidis.
\newblock 2-vertex connectivity in directed graphs.
\newblock In {\em Automata, Languages, and Programming - 42nd International
  Colloquium, {ICALP} 2015, Kyoto, Japan, July 6-10, 2015, Proceedings, Part
  {I}}, pages 605--616, 2015.

\bibitem{GIP17}
Loukas Georgiadis, Giuseppe~F. Italiano, and Nikos Parotsidis.
\newblock Strong connectivity in directed graphs under failures, with
  applications.
\newblock In {\em Proceedings of the Twenty-Eighth Annual {ACM-SIAM} Symposium
  on Discrete Algorithms, {SODA} 2017, Barcelona, Spain, Hotel Porta Fira,
  January 16-19}, pages 1880--1899, 2017.

\bibitem{GK17}
Manoj Gupta and Shahbaz Khan.
\newblock Multiple source dual fault tolerant {BFS} trees.
\newblock In {\em 44th International Colloquium on Automata, Languages, and
  Programming, {ICALP} 2017}, pages 127:1--127:15, 2017.

\bibitem{Luk99}
Tam{\'{a}}s Lukovszki.
\newblock New results of fault tolerant geometric spanners.
\newblock In {\em Algorithms and Data Structures, 6th International Workshop,
  {WADS} '99, Proceedings}, pages 193--204, 1999.

\bibitem{NagamochiI:92}
Hiroshi Nagamochi and Toshihide Ibaraki.
\newblock A linear-time algorithm for finding a sparse k-connected spanning
  subgraph of a k-connected graph.
\newblock {\em Algorithmica}, 7(5{\&}6):583--596, 1992.

\bibitem{Parter15}
Merav Parter.
\newblock Dual failure resilient {BFS} structure.
\newblock In {\em Proceedings of the 2015 {ACM} Symposium on Principles of
  Distributed Computing, {PODC} 2015}, pages 481--490, 2015.

\bibitem{PP13}
Merav Parter and David Peleg.
\newblock Sparse fault-tolerant {BFS} trees.
\newblock In {\em Algorithms - {ESA} 2013 - 21st Annual European Symposium,
  Proceedings}, pages 779--790, 2013.

\bibitem{PP14}
Merav Parter and David Peleg.
\newblock Fault tolerant approximate {BFS} structures.
\newblock In {\em Proceedings of the Twenty-Fifth Annual {ACM-SIAM} Symposium
  on Discrete Algorithms, {SODA} 2014}, pages 1073--1092, 2014.

\bibitem{BrandS19}
Jan van~den Brand and Thatchaphol Saranurak.
\newblock Sensitive distance and reachability oracles for large batch updates.
\newblock In {\em 60th {IEEE} Annual Symposium on Foundations of Computer
  Science, {FOCS} 2019, Baltimore, Maryland, USA, November 9-12, 2019}, pages
  424--435, 2019.

\end{thebibliography}

\appendix
\section{Lower bound of $\Omega(2^k n)$ on the size of k-FT-SCC preserver}
\label{section:lower_bound}


We will now show that a lower bound of  $\Omega(2^k n)$ on the size of $\kftrs$
with respect to designated source, can be directly extended to obtain a similar lower bound
for $k$-FT-SCC preserver. The construction of graph $G$ for a given choice of $k$ is as follows.
Take a perfect binary tree $T$ of height $k$ rooted at $s$. Let $X$ be the set of leaf nodes of $T$, thus $|X|=2^k$. Take another set $Y$ of $n$ (new) vertices. To obtain $G$, add an edge from each $x\in X$ to each $y\in Y$, and an edge from each $y\in Y$ to vertex $s$. In other words, $V(G)=V(T)\cup Y$ and $E(V)=E(T)\cup (X\times Y)\cup (Y\times \{s\})$.


In order to prove that a $k$-FT-SCC preserver for $G$ must contain all edges of $G$,
consider an edge $(x,y)\in X\times Y$.
Let $P$ be the tree path from $s$ to leaf node $x$ of $T$. Let $F$ be the set of all those edges $(u,v)\in T$
such that $u\in P$ and $v$ is the $child$ of $u$ not lying on $P$. Clearly $|F|=k$. Observe that $x$ is
the only leaf node of $T$ reachable from $s$ on the failure of the edges in set $F$.
Thus $P_1=P\circ (x,y)$ is the unique path from $s$ to $y$ in $G- F$, and $P_2=(y,s)$
is the unique path from $y$ to $s$ in $G-F$. Both $P_1$ and $P_2$ must be contained in 
a FT-SCC preserver since $s$ and $y$ are strongly connected in $G-F$.
This shows that all the edges of $G$
must be present in a $k$-FT-SCC preserver, thereby, establishing a lower bound of $\Omega(2^k n)$.

\begin{theorem}
For any positive integers $n,k$ with $n\ge 2^k$, there exists a directed graph on $n$ vertices whose $k$-FT-SCC preserver must have $\Omega(2^kn)$ edges.
\end{theorem}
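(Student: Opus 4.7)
The plan is to exhibit an explicit graph $G$ on $n$ vertices in which nearly every edge must appear in any $k$-FT-SCC preserver, and to do so by adapting the classical tight lower bound for single-source $\kftrs$ so that the hard instance is strongly connected. The construction has three ingredients: (i) a ``decision gadget'' that, by failing $k$ edges, can select an arbitrary one of $2^k$ ``successful'' intermediate vertices; (ii) a large independent set $Y$ of size $\Theta(n)$, each of whose vertices is reachable from every successful intermediate vertex; and (iii) a return mechanism that makes the whole graph strongly connected, so that the SCC-preserver obligation kicks in. Concretely I would take a perfect binary tree $T$ of height $k$ rooted at $s$ with leaf set $X$ ($|X|=2^k$), a fresh set $Y$ with $|Y|=n-(2^{k+1}-1)=\Theta(n)$, the complete bipartite edge set $X\times Y$, and a back edge from every $y\in Y$ to $s$. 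Then $|V(G)|=n$ and in $G$ itself all vertices are strongly connected via $s$.

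The key step is to show that for any edge $(x,y)\in X\times Y$ there is a failure set $F$ of size $k$ such that $(x,y)$ must be present in the preserver. For a fixed leaf $x$, let $P$ be the unique $s$-to-$x$ path in $T$, and let $F$ consist of the $k$ tree edges $(u,v')$ where $u\in P$ and $v'$ is the sibling (in $T$) of the child of $u$ lying on $P$. Cutting $F$ disconnects every other leaf from $s$ inside $T$, so in $G-F$ the only $s$-to-$y$ path is $P\circ (x,y)$ and the only $y$-to-$s$ path is $(y,s)$. Since $s$ and $y$ are strongly connected in $G-F$, any $k$-FT-SCC preserver must contain both of these paths, and in particular must contain the edge $(x,y)$.

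Summing over all $(x,y)\in X\times Y$ gives $|X|\cdot|Y|=2^k\cdot(n-2^{k+1}+1)=\Omega(2^k n)$ edges that are forced into every $k$-FT-SCC preserver, proving the bound. The mildly delicate point—the only place where I expect one has to be careful—is the simultaneous role of $F$: I need the same $F$ of size exactly $k$ to certify both directions of strong connectivity between $s$ and $y$, and I need to argue that the $s$-to-$y$ path in $G-F$ is truly unique (rather than merely forced to pass through $x$), so that the edge $(x,y)$ itself, and not merely some edge of $P$, is obligatory. The uniqueness follows because $x$ is the only leaf of $T$ still reachable from $s$ after deleting $F$, and the only outgoing edges from $T\setminus\{s\}$ to $Y$ leave from leaves; the reverse direction is trivial since $(y,s)$ is the sole outgoing edge of $y$.
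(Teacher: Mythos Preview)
Your proposal is correct and follows essentially the same construction and argument as the paper: a height-$k$ binary tree rooted at $s$ with leaf set $X$, a large set $Y$ joined by the complete bipartite edge set $X\times Y$, and back edges $Y\to s$, together with the failure set that kills the $k$ off-path tree edges to isolate a single leaf $x$. Your version is slightly more careful about making the total vertex count exactly $n$ (the paper takes $|Y|=n$ and tacitly absorbs the extra $2^{k+1}-1$ tree vertices into the $\Omega$), but otherwise the two proofs coincide.
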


\end{document}